\newcommand{\real}{\mathbb{R}}
\newcommand{\mat}[1] {\mathtt{#1}}
\newcommand{\supp} {\mathtt{supp}}
\newcommand{\core}{\mathtt{core}}
\newcommand{\iprod}[1]  {\bigl\langle #1 \,\bigr\rangle}
\newcommand{\norm}[1]   {\| #1 \|}
\newcommand{\cancel}[1] {}
\theoremstyle{thmstyleone}%
\newtheorem{theorem}{Theorem}
\newtheorem{lemma}{Lemma}
\theoremstyle{thmstyletwo}%
\theoremstyle{thmstylethree}%
\newtheorem{definition}{Definition}%
\begin{document}

\title[Article Title]{On Non-Negative Quadratic Programming \\
	in Geometric Optimization}


\author*[12]{\fnm{Siu Wing} \sur{Cheng}}\email{scheng@cse.ust.hk}
\equalcont{These authors contributed equally to this work.}

\author[13]{\fnm{Man Ting} \sur{Wong}}\email{mtwongaf@connect.ust.hk}
\equalcont{These authors contributed equally to this work.}

\affil[1]{\orgdiv{Department of Computer Science and Engineering},
	 \orgname{\\The Hong Kong University of Science and Technology}, \orgaddress{\\ \city{Hong Kong}, \country{China}}}
\affil[2]{ORCID: { 0000-0002-3557-9935}}
\affil[3]{ORCID: { 0000-0002-5682-0003}}


\begin{center}
	\Large On Non-Negative Quadratic Programming in Geometric Optimization\\
	\vspace{1cm}
	\large Siu Wing Cheng\footnote{Corresponding author. E-mail: scheng@cse.ust.hk. ORCID: { 0000-0002-3557-9935}}\footnotemark[3] and Man Ting Wong\footnote{Contributing author. Email: mtwongaf@connect.ust.hk. ORCID: { 0000-0002-5682-0003}}\footnote[3]{These authors contributed equally to this work.}\\
	\vspace{0.5cm}
	Department of Computer Science and Engineering,\\
	The Hong Kong University of Science and Technology, \\
	Hong Kong, China.\\
	\vspace{0.5cm}

\end{center}

%




\cancel{

\begin{center}
	\textbf{\emph{Statements and Declarations}}
\end{center}

	This work was supported by Research Grants Council, Hong Kong, China (project no.~16203718). The authors have no competing interests to declare that are relevant to the content of this article. The datasets generated during and/or analysed during the current study are available from the corresponding author on reasonable request.
	
}

\begin{center}
	\vspace{0.25cm}
	\textbf{\emph{Abstract}}
\end{center}

	We study a method to solve several non-negative quadratic programming problems in geometric optimization by applying a numerical solver iteratively. 
	We implemented the method by using {\tt quadprog} of MATLAB as a subroutine.  In comparison with a single call of {\tt quadprog}, we obtain a 10-fold speedup for two proximity graph problems in $\real^d$ on some public data sets, and a 2-fold or more speedup for deblurring some gray-scale space and thermal images via non-negative least square. In the image deblurring experiments, the iterative method compares favorably with other software that can solve non-negative least square, including FISTA with backtracking, SBB, FNNLS, and {\tt lsqnonneg} of MATLAB.   We try to explain the observed efficiency by proving that, under a certain assumption, the number of iterations needed to reduce the gap between the current solution value and the optimum by a factor $e$ is roughly bounded by the square root of the number of variables.  We checked the assumption in our experiments on proximity graphs and confirmed that the assumption holds.

\noindent \textbf{Keywords:} Convex quadratic programming, Geometric optimization, Active set methods, Non-negative solution.

\section{Introduction}

\setcounter{page}{2}

We study a method to solve several \emph{non-negative convex quadratic programming problems} (NNQ problems):~minimize $\mat{x}^t\mat{A}^t\mat{Ax} + \mat{a}^t\mat{x}$~subject to~$\mat{Bx} \geq \mat{b}$ and $\mat{x} \geq 0_\nu$. It applies a numerical solver on the constrained subproblems iteratively until the optimal solution is obtained. 
Given a vector $\mat{x}$, we use $(\mat{x})_i$ to denote its $i$-th coordinate.  We use $1_\nu$ and $0_\nu$ to denote $\nu$-dimensional vectors that consist of 1s and 0s, respectively.

The first problem is fitting a {\em proximity graph} to data points in $\real^d$, which finds applications in classification, regression, and clustering~\cite{daitch2009,jebara2009,kalofolias2016,ZHL2014}.  
Daitch~et~al.~\cite{daitch2009} defined a proximity graph via solving an NNQ problem.  The unknown is a vector $\mat{x} \in \real^{n(n-1)/2}$, specifying the edge weights in the complete graph on the input points $\mat{p}_1,\ldots,\mat{p}_n$.  Let $(\mat{x})_{i \vartriangle j}$ denote the coordinate of $\mat{x}$ that stores the weight of the edge $\mat{p}_i\mat{p}_j$.  Edge weights must be non-negative; each point must have a total weight of at least 1 over its incident edges; these constraints can be modelled as $\mat{Bx} \geq 1_n$ and  $\mat{x} \geq 0_{n(n-1)/2}$, where $\mat{B} \in \real^{n \times n(n-1)/2}$ is the incidence matrix for the complete graph, and $1_n$ is the $n$-dimensional vector with all coordinates equal to 1.  The objective is to minimize $\sum_{i=1}^n \bigl\|\sum_{j \in [n]\setminus\{i\}} (\mat{x})_{i\vartriangle j}(\mat{p}_i - \mat{p}_j)\bigr\|^2$, which can be written as $\mat{x}^t\mat{A}^t\mat{Ax}$ for some matrix $\mat{A} \in \real^{dn \times n(n-1)/2}$.
We call this the DKSG problem.   The one-dimensional case can be solved in $O(n^2)$ time~\cite{CCLR21}.

We also study another proximity graph in $\real^d$ defined by Zhang~et~al.~\cite{ZHL2014} via minimizing  $\frac{1}{d}\mat{b}^t\mat{x} + \frac{\mu}{2}\norm{\mat{Ux}-1_n}^2 + \frac{\rho}{2}\norm{\mat{x}}^2$ for some constants $\mu,\rho \geq 0$.   The vector $\mat{x} \in \real^{n(n-1)/2}$ stores the unknown edge weights.  For all distinct $i, j \in [n]$, the coordinate $(\mat{b})_{i\vartriangle j}$ of $\mat{b}$ is equal to $\norm{\mat{p}_i-\mat{p}_j}^2$.  The matrix $\mat{U}$ is the incidence matrix for the complete graph.  The objective function can be written as $\mat{x}^t\mat{A}^t\mat{A}\mat{x} + \mat{a}^t\mat{x}$, where $\mat{A}^t = \bigl[ \frac{\mu}{2}\mat{U}^t \,\,\, \frac{\rho}{2}\mat{I}_{n(n-1)/2} \bigr]$ and $\mat{a} = \frac{1}{d}\mat{b} - \mu\mat{U}^t1_n$.  The only  constraints are $\mat{x} \geq 0_{n(n-1)/2}$.  We call this  the ZHLG problem.  

The third problem is to deblur some mildly sparse gray-scale images~\cite{HNV00,LB91}. The pixels in an image form a vector $\mat{x}$ of gray scale values in $\real^d$; the blurring can be modelled by the action of a matrix $\mat{A} \in \real^{d \times d}$ that depends on the particular point spread function adopted~\cite{LB91}; and $\mat{Ax}$ is the blurred image.  Working backward, given the matrix $\mat{A}$ and an observed blurred image~$\mat{b}$, we recover the image by finding the unknown $\mat{x} \geq 0_d$ that minimizes $\norm{\mat{Ax}-\mat{b}}^2$, which is a \emph{non-negative least square problem} (NNLS)~\cite{KSD13,LH95}.  The problem is often ill-posed, i.e., $\mat{A}$ is not invertible.

\subsection{Related work}  
Random sampling has been applied to select a subset of violated constraints in an iterative scheme to solve linear programming and its generalizations (which includes convex quadratic programming)~\cite{C1995,DF1989,G1995,MSW1996,SW1992}.  For linear programming in $\real^d$ with $n$ constraints, Clarkson's Las Vegas method~\cite{C1995} starts with a small working set of constraints and runs in iterations; it draws a random sample of the violated constraints at the beginning of each iteration, adds them to the working set of constraints, and recursively solves the linear programming problem on the expanded working set. 
This algorithm solves approximately $d$ subproblems in expectation, with $O(d\sqrt{n})$ constraints in each subproblem. In addition, Clarkson~\cite{C1995} provides an iterative reweighing algorithm for linear programming that solves $O(d \log n)$ subproblems in expectation with $O(d^2)$ constraints in each subproblem. A mixed algorithm is proposed in~\cite{C1995} by following the first algorithm but solving every subproblem using the iterative reweighing algorithm. All algorithms run in the RAM model and the expected running time of the mixed algorithm is $O(d^2 n)+(d^2 \log n)O(d)^{d/2+O(1)}+O(d^4 \sqrt{n} \log n)$. 
Subsequent research by Sharir and Welzl~\cite{SW1992} and Matou\v{s}ek~et~al.~\cite{MSW1996} improved the expected running time to $O(d^2 n + e^{O(\sqrt{d\ln d})})$ in the RAM model which is subexponential in $d$. Brise and G\"{a}rtner~\cite{BG2011} showed that Clarkson's algorithm works for the larger class of violator spaces~\cite{GMRS2008}. 

Quadratic programming problems are convex when the Hessian is positive (semi-)definite. When the Hessian is indefinite, the problem becomes non-convex and NP-hard. For non-convex quadratic programming, the goal is to find a local minimum instead of the global minimum.

When all input numbers are rational numbers, one can work outside the RAM model to obtain polynomial-time algorithms for convex quadratic programming. Kozlov~et~al.~\cite{KTK1979} described a polynomial-time time algorithm for convex quadratic programming that is based on the ellipsoid method~\cite{Bland1980TheEM,KHACHIYAN198053}. Kapoor and Vaidya~\cite{KV1986} showed that convex quadratic programming with $n$ constraints and $m$ variables can be solved in $O((n+m)^{3.67} (\log L)(\log (n+m))L)$ arithmetic operations, where $L$ is bounded by the number of bits in the input. Monteiro and Adler~\cite{monteiro89} showed that the running time can be improved to $O(m^3 L)$, provided that $n = O(m)$.  

A simplex-based exact convex quadratic programming solver was developed by G\"{a}rtner and Sch\"{o}nherr. It works well when the constraint matrix is dense and there are few variables or constraints~\cite{GS2000}.  However, it has been noted that this solver is inefficient in high dimensions due to the use of arbitrary-precision arithmetic~\cite{FGK2003}. Generally, iterative methods are preferred for large problems because they preserve the sparsity of the Hessian matrix~\cite{Yang1991}. 


Our method is related to solving large-scale quadratic programming by the active set method. The active set method can be viewed as turning the active inequality constraints into equality constraints, resulting in a simplified subproblem. Large-scale quadratic programming algorithms often assume box constraints, i.e., each variable has an upper bound and lower bound; however, these bounds may not necessarily be finite. In the subsequent discussion, we will provide a brief overview of the existing active-set algorithms for large-scale quadratic programming found in the literature.

\paragraph{Box constraints only} Coleman and Hulbert~\cite{Coleman1989} proposed an active set algorithm that works for both convex and non-convex quadratic programming problems. The algorithm is designed for the case that there are only box constraints with finite bounds. Therefore, having a variable in the active set means requiring that variable to be at its lower or upper bound. As in~\cite{bertsekas1982, CM87, Conn1988}, their method identifies several variables to be added to the active set. Given a search direction, the projected search direction is a projection of the search direction that sets all its coordinates that are in the active set to zero. Consequently, following the projected search direction would not alter the variables in the active set. This descent process begins by following the projected search direction until a coordinate hits its lower or upper bound. The algorithm updates the active set by adding the variable that just hits its lower or upper bound. This process is repeated until the projected search direction is no longer a descent direction or a local minimum is reached along the projected search direction. If the solution is not optimal, the algorithm uses the gradient to identify a set of variables in the active set such that, by allowing these variables to change their values, the objective function value will decrease. In each iteration, the algorithm releases this set of variables from the active set. Thus allows for the addition and removal of multiple variables from the active set in each iteration, guaranteeing that the active set does not experience monotonic growth. This is considered advantageous because active set algorithms often suffer from only being able to add or delete a single constraint from the active set per iteration.

Yang and Tolle~\cite{Yang1991} proposed an active set algorithm for convex quadratic programming with box constraints that can remove more than one box constraint from the active set in each iteration. The descent in each iteration is performed using the conjugate gradient method~\cite{Hestenes1952}, which ignores the box constraints and generates a descent path that either stops at the optimum or at the intersection with the boundary of the feasible region. In the latter case, some variables reach their lower or upper bounds. The algorithm maintains a subset $S$ of such variables and the bounds reached by them that were recorded in previous iterations. In the main loop of the algorithm, the first step is to compute the set $B$ of variables that are at their lower or upper bounds such that the objective function value is not smaller at any feasible solution obtained by adjusting the variables in $B$. Then, the algorithm removes one constraint at a time from $S$ until the shrunken $S$ induces a descent direction. If $S$ becomes empty and there is no feasible descent direction, then the solution is optimal. If there exists a descent direction, then $S \cup B$ forms the new active set. The algorithm solves the constrained problem induced by this new active set using the conjugate gradient method~\cite{Hestenes1952} which we already mentioned earlier. Empirically, the algorithm performs better for problems that have a small set of free variables at optimality, and for problems with only non-negativity constraints.

Mor\'e and Toraldo~\cite{More1991} proposed an algorithm for strictly convex quadratic programs subject to box constraints. The box constraints induce the faces of a hyper-rectangle, which form the boundaries of the feasible region. Furthermore, an active set identifies a face of the hyper-rectangle. The main loop iterates two procedures.  The first procedure performs gradient descent until the optimal solution is reached or the gradient descent path is intercepted by a facet $F$ of the feasible region.  In the first case, the algorithm terminates.  In the latter case, the affine subspace of $F$ is defined by an active set $S$ of variables at their upper or lower bounds, and the second procedure is invoked.  It runs the conjugate gradient method on the original problem constrained by the active set $S$; it either finds the optimum of this constrained problem in the interior of $F$, or descends to a boundary face of $F$. In either case, the algorithm proceeds to the next iteration of the main loop.

Friedlander et al.~\cite{Friedlander1995} proposed an algorithm for convex quadratic programs subject to box constraints. The structure of the algorithm is similar to~\cite{More1991}. The aglorithm performs gradient descent until the descent path is intercepted by a facet $F$ of the feasible region. To solve the original problem constrained on the facet $F$, instead of using the conjugate gradient method, it employs the Barzilai-Borwein method to find the optimum of this constrained problem. The advantage of the Barzilai-Borwein method is that it requires less memory storage and fewer computations than the conjugate gradient method.

\paragraph{Box constraints and general linear constraints} Gould~\cite{Gould1991} proposed an active set algorithm for non-convex quadratic programming problems subject to box constraints and linear constraints. The algorithm adds or removes  at most one constraint from the active set per iteration. In each iteration, the algorithm constructs a system of linear equations induced by the active set, solves it, and adds this solution to current feasible solution of the quadratic program. The algorithm maintains an LU factorization of the coefficient matrix to exploit the sparsity of the system of equations. This approach allows for efficient updates to the factorization as the active set changes.

Bartlett and Biegler~\cite{bartlett2006} proposed an algorithm for convex quadratic programming subject to box constraints and linear inequality constraints. The algorithm maintains a dual feasible solution and a primal infeasible solution in the optimization process. In each iteration, the algorithm adds a violated primal constraint to the active set. Then, the Schur complement method~\cite{Gill1987ASM} is used to solve the optimisation problem constrained by the active set. If a dual variable reaches zero in the update, the corresponding constraint is removed from the active set.

\paragraph{Our contributions}  The method that we study greedily selects a subset of variables as free variables---the non-free variables are fixed at zero---and calls the solver to solve a constrained NNQ problem in each iteration.  At the end of each iteration, the set of free variables is updated in two ways.  First, all free variables that are equal to zero are made non-free in the next iteration.  Second, a subset of the non-free variables that violate the dual feasibility constraint the most are set free in the next iteration.  However, if there are too few variables that violate the dual feasibility constraint or the number of iterations exceeds a predefined threshold, we simply set free all variables that violate the dual feasibility constraint.

A version of the above greedy selection of free variables in an iterative scheme was already used in~\cite{daitch2009} for solving the DKSG problem because it took too long for a single call of a quadratic programming solver to finish.  However, it is only briefly mentioned in~\cite{daitch2009} that a small number of the most negative gradient coordinates are selected to be freed.  No further details are given.  We found that the selection of the number of variables to be freed is crucial. The right answer is not some constant number of variables or a constant fraction of them.  Also, one cannot always turn free variables that happen to be zero in an iteration into non-free variables in the next iteration as suggested in~\cite{daitch2009}; otherwise, the algorithm may not terminate in some cases.

Our method also uses the gradient to screen for descent directions, allowing us to release more than one variable at each iteration. However, instead of setting free all variables that have a non-zero gradient from the active set as in~\cite{BG2011, Coleman1989, Friedlander1995, More1991}, we choose a parameter that enables us to remove just the right amount of variables from the active set so that our subproblem does not become too large. Heuristically, this approach is faster than freeing all variables with a non-zero gradient or freeing only  one variable, as in~\cite{bartlett2006, Gould1991}. Similar to the method presented in~\cite{Yang1991}, we remove variables from the free variable set, enabling us to keep our subproblem small.  This is in contrast to the non-decreasing size of the free variable set, as in~\cite{BG2011, C1995}. Thus, our algorithm performs well when solution sparsity is observed.

Clarkson's linear programming algorithm~\cite{C1995} is designed for cases when the number of variables is much less than the number of given constraints. However, this is not the case for some other problems. For example, consider the DSKG problem with $n$ input points. Although the ambient space is $\mathbb{R}^d$, there are $\Theta(n^2)$ variables and $\Theta(n^2)$ constraints. By design, the number of constraints in a subproblem of Clarkson's algorithm is proportional to the product of the number of variables and the square root of the total number of constraints. In our context, the subproblem has $\Theta(n^3)$ constraints. Consequently, solving such a subproblem is not easier than solving the whole problem. Therefore, Clarkson's algorithm cannot be directly applied to our proximity graph and NNLS problems, as in these problems, the number of constraints and variables are proportional to each other.

We implemented the iterative method by using {\tt quadprog} of MATLAB as a subroutine.  In comparison with a single call of {\tt quadprog},   we obtain a 10-fold speedup for DKSG and ZHLG on some public data sets; we also obtain a 2-fold or more speedup for deblurring some gray-scale space and thermal images via NNLS. In the image deblurring experiments, the iterative method compares favorably with other software that can solve NNLS, including FISTA with backtracking~\cite{fistabt,BT09}, SBB~\cite{KSD13}, FNNLS~\cite{LH95}, and {\tt lsqnonneg} of MATLAB.  We emphasize that we do not claim a solution for image deblurring as there are many issues that we do not address; we only seek to demonstrate the potential of the iterative scheme.

The iterative method gains efficiency when the solutions for the intermediate constrained subproblems are sparse and the numerical solver runs faster in such cases.  Due to the overhead in setting up the problem to be solved in each iteration, a substantial fraction of the variables in an intermediate solution should be zero in order that the saving outweighs the overhead.

We try to offer a theoretical explanation of the observed efficiency.  Let $f : \real^\nu \rightarrow \real$ denote the objective function of the NNQ problem at hand.  Let $\langle \mat{x},\mat{y} \rangle$ denote the inner product of two vectors $\mat{x}$ and $\mat{y}$.  A unit direction $\mat{n} \in \real^\nu$ is a \emph{descent direction} from a feasible solution $\mat{x}$ if $\langle \nabla f(\mat{x}), \mat{n} \rangle <  0$ and $\mat{x} + s\mat{n}$ lies in the feasible region for some $s > 0$. 
Let $\mat{x}_r$ be the solution produced in the $(r-1)$-th iteration.  Let $\mat{x}_*$ be the optimal solution. Using elementary vector calculus, one can show that $\frac{f(\mat{x}_{r+1})-f(\mat{x}_*)}{f(\mat{x}_r)-f(\mat{x}_*)} \leq 1 - \frac{1}{2}\cdot\frac{\norm{\mat{x}_r-\mat{y}}}{\norm{\mat{x}_r-\mat{x}_*}} \cdot \frac{\langle \nabla f(\mat{x}_r), \mat{n}\rangle}{\langle \nabla f(\mat{x}_r),\mat{n}_*\rangle}$,
where $\mat{y}$ is minimum in direction $\mat{n}$ from $\mat{x}_r$, and $\mat{n}_*$ is the unit vector $(\mat{x}_*-\mat{x}_r)/\norm{\mat{x}_*-\mat{x}_r}$.  We prove that there exists a descent direction $\mat{n}_r$ from $\mat{x}_r$ for which only a few variables need to be freed and $\frac{\langle \nabla f(\mat{x}_r), \mat{n}_r\rangle}{\langle \nabla f(\mat{x}_r),\mat{n}_*\rangle}$ is roughly bounded from below by the reciprocal of the square root of the number of variables.  Therefore, if we assume that the distance between $\mat{x}_r$ and the minimum $\mat{y}_r$ in direction $\mat{n}_r$ is at least $\frac{1}{\lambda}\norm{\mat{x}_r-\mat{x}_*}$ for all $r$, where $\lambda$ is some fixed value, then $f(\mat{x}_{r+i}) - f(\mat{x}_*) \leq e^{-1} (f(\mat{x}_r)-f(\mat{x}_*))$ for some $i$ that is roughly bounded by  $\lambda$ times the square root of the number of variables.  
That is, the gap from $f(\mat{x}_*)$ will be reduced by a factor $e$ in at most $i$ iterations.  It follows that SolveNNQ will converge quickly to the optimum under our assumption.  The geometric-like convergence still holds even if the inequality $\norm{\mat{x}_r-\mat{y}_r} \geq \frac{1}{\lambda} \norm{\mat{x}_r-\mat{x}_*}$ is satisfied in only a constant fraction of any sequence of consecutive iterations (the iterations in that constant fraction need not be consecutive).

We ran experiments on the DKSG and ZHLG problems to check the assumption of our convergence analysis. We checked the ratios $\frac{\norm{\mat{x}_r-\mat{y}}}{\norm{\mat{x}_r-\mat{x}_*}}$ using the descent direction $\frac{\mat{x}_{r+1}-\mat{x}_r}{\norm{\mat{x}_{r+1}-\mat{x}_r}}$ taken by our algorithm as $\mat{n}$. Therefore, $\mat{y}$ is $\mat{x}_{r+1}$. Our experiments show that  $\frac{\norm{\mat{x}_r-\mat{y}}}{\norm{\mat{x}_r-\mat{x}_*}}\ge \frac{1}{100}$ in every iteration, with the exception of the outliers from the DKSG problem. However, the convergence still holds despite the outliers. Most of the runs achieve better results, as shown in Figures~\ref{fg:Rs}(a) and (b) in Section~\ref{sec:experiments}. 
For the ZHLG problem, we observe that $\frac{\norm{\mat{x}_r-\mat{y}}}{\norm{\mat{x}_r-\mat{x}_*}}\ge \frac{1}{2}$ holds in every iteration. For the DKSG problem, the majority of the ratios are greater than or equal to $\frac{1}{2}$, with only a few exceptions that are less than $\frac{1}{2}$.

\section{Algorithm}
\label{sec:alg}

\paragraph{Notation}  We use uppercase and lowercase letters in typewriter font to denote matrices and vectors, respectively.  The inner product of $\mat{x}$ and $\mat{y}$ is written as $\langle \mat{x},\mat{y} \rangle$ or $\mat{x}^t\mat{y}$.  We use $0_m$ to denote the $m$-dimensional zero vector and $1_m$ the $m$-dimensional vector with all coordinates equal to 1.  

The \emph{span} of a set of vectors $W$ is the linear subspace spanned by them; we denote it by $\mathtt{span}(W)$.  In $\real^\nu$, for $i \in [\nu]$, define the vector $\mat{e}_i$ to have a value 1 in the $i$-th coordinate and zero at other coordinates.  Therefore, $\mat{e}_1,\mat{e}_2,\ldots,\mat{e}_{\nu}$ form an orthonormal basis of $\real^\nu$.  

A \emph{conical combination} of a set of vectors $\{\mat{w}_1,\mat{w}_2,\ldots, \mat{w}_\nu\}$ is $\sum_{i=1}^\nu c_i\mat{w}_i$ for some non-negative real values $c_1, c_2, \ldots, c_\nu$.   For example, the set of all conical combinations of $\{\mat{e}_1,\mat{e}_2,\ldots,\mat{e}_\nu\}$ form the positive quadrant of $\real^\nu$.

Given a vector $\mat{x}$, we denote the $i$-th coordinate of $\mat{x}$ by putting a pair of brackets around $\mat{x}$ and attaching a subscript~$i$ to the right bracket, i.e., $(\mat{x})_i$.  This notation should not be confused with a vector that is annotated with a subscript.  For example, $(\mat{e}_i)_i = 1$ and $(\mat{e}_i)_j = 0$ for all $j \not= i$.  As another example, $\mat{x}_r$ means the intermediate solution vector that we obtain in the $(r-1)$-th iteration, but $(\mat{x}_r)_i$ refers to the $i$-th coordinate of $\mat{x}_r$.  The vector inside the brackets can be the result of some operations; for example, $(\mat{M}\mat{x})_i$ refers to the $i$-th coordinate of the vector formed by multiplying the matrix $\mat{M}$ with the vector $\mat{x}$.  We use $\supp(\mat{x})$ to denote $\bigl\{i  : (\mat{x})_i \not= 0  \bigr\}$.

\begin{figure}[b]
	\centering
	\begin{tabular}{rcccrcc}
		$\min$ & $f(\mat{x})$ & & & & $\displaystyle \max_{\mat{u},\tilde{\mat{u}},\mat{v}} $ & $\displaystyle \min_{\mat{x}} \,\, g(\mat{u},\tilde{\mat{u}}, \mat{v},\mat{x})$ \\ 
		s.t. & $\mat{Bx} \geq \mat{b}$, $\mat{Cx} = \mat{c}$,  & & & & s.t. & $\mat{u} \geq 0_\kappa$, \\
		& $\forall \, i \not\in S_r, \, (\mat{x})_i \geq 0$, & & & & & $\forall \, i \not\in S_r, \,(\mat{v})_i \geq 0$. \\ 
		& $\forall \, i \in S_r, \, (\mat{x})_i = 0$.  \\ [0.25em]
		\multicolumn{3}{c}{(a)~Constrained NNQ} & \hspace*{.3in} &
		\multicolumn{3}{c}{(b)~Lagrange dual} \\
	\end{tabular}
	\caption{The constrained NNQ and its Lagrange dual problem.}
	\label{fg:NNM}
\end{figure}

\paragraph{Lagrange dual}  Let the objective function be denoted by $f : \real^\nu \rightarrow \real$.  Assume that the constraints of the original NNQ problem are $\mat{Bx} \geq \mat{b}$, $\mat{Cx} = \mat{c}$, and $\mat{x} \geq 0_\nu$ for some matrices $\mat{B}$ and $\mat{C}$.   Let $\kappa$ be the number of rows of $\mat{B}$.   The number of rows of $\mat{C}$ will not matter in our discussion.

For $r = 2, 3, \ldots$, the $(r-1)$-th iteration of the method solves a constrained version of the NNQ problem in which a subset of variables are fixed at zero. Figure~\ref{fg:NNM}(a) shows this constrained NNQ problem.   We use $S_r$ to denote the set of indices of these variables that are fixed at zero, and we call $S_r$ an \emph{active set}.  For convenience, we say that the variables whose indices belong to $S_r$ are the variables in the active set $S_r$.  We call the variables outside $S_r$ the \emph{free variables}.

We are interested in the Lagrange dual problem as it will tell us how to update the active set $S_r$ when proceeding to the next iteration~\cite{BV2004}.  Let $\mat{u}$ and $\tilde{\mat{u}}$ be the vectors of Lagrange multipliers for the constraints $\mat{Bx} \geq \mat{b}$ and $\mat{Cx} = \mat{c}$, respectively.  The dimensions of $\mat{u}$ and $\tilde{\mat{u}}$ are equal to number of rows of $\mat{B}$ and $\mat{C}$, respectively.  As a result, $\mat{u} \in \real^\kappa$.  In the dual problem, the coordinates of $\mat{u}$ are required to be non-negative as the inequality signs in $\mat{Bx} \geq \mat{b}$ are greater than or equal to, whereas the coordinates of $\tilde{\mat{u}}$ are unrestricted due to the equality signs in $\mat{Cx} = \mat{c}$.  Similarly, let $\mat{v}$ be the vector of Lagrange multipliers that correspond to the constraints $(\mat{x})_i \geq 0$ for $i \not\in S_r$ and $(\mat{x})_i = 0$ for $ i \in S_r$.  So $(\mat{v})_i$ is required to be non-negative for $i \not\in S_r$ whereas $(\mat{v})_i$ is unrestricted for $i \in S_r$.  The Lagrange dual function is $g(\mat{u},\tilde{\mat{u}},\mat{v},\mat{x})  = f(\mat{x}) + \mat{u}^t(\mat{b}-\mat{Bx}) + \tilde{\mat{u}}^t(\mat{c} - \mat{Cx}) - \mat{v}^t\,\mat{x}$.  The Lagrange dual problem is shown in Figure~\ref{fg:NNM}(b).   Note that $\mat{x}$ is unrestricted in the dual problem.

\paragraph{KKT conditions and candidate free variable}  By duality, the optimal value of the Lagrange dual problem is always a lower bound of the optimal value of the primal problem.  Since all constraints in Figure~\ref{fg:NNM}(a) are affine, strong duality holds in this case which implies that the optimal value of the Lagrange dual problem is equal to the optimal value of the primal problem~\cite[Section~5.2.3]{BV2004}.

Let $\mat{x}_r$ denote the optimal solution of the constrained NNQ problem in Figure~\ref{fg:NNM}(a).  That is, $\mat{x}_r$ is the optimal solution in the $(r-1)$-th iteration.  Let $(\mat{u}_r,\tilde{\mat{u}}_r,\mat{v}_r,\mat{x}_r)$ be the corresponding solution of the Lagrange dual problem.  For convenience, we say that $\mat{x}_r$ is the optimal solution of $f$ constrained by $S_r$, and that $(\mat{u}_r,\tilde{\mat{u}}_r,\mat{v}_r,\mat{x}_r)$ is the optimal solution of $g$ constrained by $S_r$.  

The Karush-Kuhn-Tucker conditions, or KKT conditions for short, are the sufficient and necessary conditions for $(\mat{u}_r,\tilde{\mat{u}}_r,\mat{v}_r,\mat{x}_r)$ to be an optimal solution for the Lagrange dual problem~\cite{BV2004}.  There are four of them:
\begin{itemize}
	\item Critical point:~$\displaystyle \frac{\partial g(\mat{x}_r)}{\partial \mat{x}}= \nabla f(\mat{x}_r) - \mat{B}^t\mat{u}_r - \mat{C}^t\tilde{\mat{u}}_r - \mat{v}_r = 0_\nu$.
	
	\item Primal feasibility:  $\mat{Bx}_r \geq \mat{b}$, $\mat{Cx}_r = \mat{c}$, $\mat{x}_r \geq 0_\nu$, and $\forall \, i \in S_r$, $(\mat{x}_r)_i = 0$.
	
	\item Dual feasibility: $\mat{u}_r \geq 0_\kappa$ and $\forall\, i \not\in S_r$, $(\mat{v}_r)_i \geq 0$.
	
	\item Complementary slackness: for all $i \in [\kappa]$, $(\mat{u}_r)_i \cdot (\mat{b} - \mat{Bx}_r)_i = 0$, and for all $i \in [\nu]$, $(\mat{v}_r)_i \cdot (\mat{x}_r)_i = 0$.
	
\end{itemize}

If $\mat{x}_r$ is not an optimal solution for the original NNQ problem,
there must exist some $j\in S_r$ such  that if $j$ is removed from $S_r$, the KKT conditions are violated and hence $(\mat{u}_r,\tilde{\mat{u}}_r,\mat{v}_r,\mat{x}_r)$ is not an optimal solution for $g$ constrained by $S_r \setminus \{j\}$.  Among the KKT conditions, only the dual feasibility can be violated by removing $j$ from $S_r$; in case of violation, we  must have $(\mat{v}_r)_j < 0$.  Therefore, for all $j \in S_r$, if $(\mat{v}_r)_j < 0$, then $j$ is a candidate to be removed from the active set.  In other words, $(\mat{x})_j$ is a candidate variable to be set free in the next iteration.

\paragraph{Algorithm specification}  We use a trick that is important for efficiency.   For $r = 2, 3, \ldots$, before calling the solver to compute $\mat{x}_r$ in the $(r-1)$-th iteration, we eliminate the variables in the active set $S_r$ because they are set to zeros anyway.   Since there can be many variables in $S_r$ especially when $r$ is small, this step allows the solver to potentially run a lot faster.  Therefore, the solver cannot return any Lagrange multiplier $(\mat{v}_r)_j$ that corresponds to the variable $(\mat{x})_j$ in the active set $S_r$ because $(\mat{x})_j$ is eliminated before the solver is called.  This seems to be a problem because our key task at the end of the $(r-1)$-th iteration is to determine an appropriate subset of $j \in S_r$ such that $(\mat{v}_r)_j <  0$.   Fortunately, the first KKT condition tells us that $\mat{v}_r = \nabla f(\mat{x}_r) - \mat{B}^t\mat{u}_r  - \mat{C}^t\tilde{\mat{u}}_r$, so we can compute all entries in $\mat{v}_r$.  In other words, the optimal solution of the Lagrange dual problem constrained by $S_r$ is fully characterized by $(\mat{u}_r,\tilde{\mat{u}}_r,\mat{x}_r)$.

\begin{algorithm}
	\caption{SolveNNQ}
	\label{alg:1}
	\begin{algorithmic}[1]
		\State $\tau$ is an integer threshold much less than $\nu$
		\State $\beta_0$ is a threshold on $|E_r|$ and $\beta_0 > \tau$
		\State $\beta_1$ is a threshold on the number of iterations
		\State{compute an initial active set $S_1$ and an optimal solution $(\mat{u}_1,\tilde{\mat{u}}_1,\mat{x}_1)$ of $g$ constrained by $S_1$}
		\State $r \leftarrow 1$
		\Loop 
		\State {$E_r \leftarrow$ the sequence of indices $i \in S_r$ such that $(\mat{v}_r)_i = (\nabla f(\mat{x}_r) - \mat{B}^t\mat{u}_r - \mat{C}^t\tilde{\mat{u}}_r)_i < 0$ and $E_r$ is sorted in non-decreasing order of $(\mat{v}_r)_i$}  				\label{alg:gradient}  \label{alg:loop-start}
		\If {$E_r = \emptyset$}
		
		\Return $\mat{x}_r$
		\Else
		\If{$|E_r| < \beta_0$ or $r > \beta_1$} 						\label{alg:exceed-start}
		\State $S_{r+1} \leftarrow S_r \setminus E_r$				 \label{alg:exceed-end}
		\Else
		\State let $E'_r$ be subset of the first $\tau$ indices in $E_r$   \label{alg:free-start}
		\State $S_{r+1} \leftarrow [\nu] \setminus \bigl(\supp(\mat{x}_r) \cup E'_r\bigr)$  \label{alg:free}
		\EndIf
		\EndIf
		\State $(\mat{u}_{r+1},\tilde{\mat{u}}_{r+1},\mat{x}_{r+1}) \leftarrow$ optimal solution of $g$ constrained by the active set $S_{r+1}$						  \label{alg:qp}
		\State $r \leftarrow r + 1$  \label{alg:loop-end}
		\EndLoop
	\end{algorithmic}
\end{algorithm}

SolveNNQ in Algorithm~\ref{alg:1} describes the iterative method.  In line~\ref{alg:free} of SolveNNQ, we kick out all free variables that are zero in the current solution $\mat{x}_r$, i.e., if $i \not\in S_r$ and $i \not\in \supp(\mat{x}_r)$, move $i$ to $S_{r+1}$.  This step keeps the number of free variables small so that the next call of the solver will run fast.  When the computation is near its end, moving a free variable wrongly to the active set can be costly; it is preferable to shrink the active set to stay on course for the optimal solution $\mat{x}_*$.  Therefore, we have a threshold $\beta_0$ on $|E_r|$ and set $S_{r+1} \leftarrow S_r \setminus E_r$ in line~\ref{alg:exceed-end} if $|E_r| < \beta_0$.   The threshold $\beta_0$ gives us control on the number of free variables.  In some rare cases when we are near~$\mat{x}_*$, the algorithm may alternate between excluding a primal variable from the active set in one iteration and moving the same variable into the active set in the next iteration.  The algorithm may not terminate. Therefore, we have another threshold $\beta_1$ on the number of iterations beyond which the active set will be shrunk monotonically in line~\ref{alg:exceed-end}, thereby guaranteeing convergence.  The threshold $\beta_1$ is rarely exceeded in our experiments; setting $S_{r+1} \leftarrow S_r \setminus E_r$ when $|E_r| < \beta_0$ also helps to keep the the algorithm from exceeding the threshold $\beta_1$.  In our experiments, we set $\tau = 4\ln^2 \nu$, $\beta_0 = 3\tau$, and $\beta_1 = 15$.

Using elementary vector calculus, one can prove that $\frac{f(\mat{x}_{r+1}) - f(\mat{x}_*)}{f(\mat{x}_r) - f(\mat{x}_*)} \leq 1- 
\frac{1}{2}\cdot\frac{\norm{\mat{x}_r-\mat{y}}}{\norm{\mat{x}_r-\mat{x}_*}} \cdot 
\frac{\langle \nabla f(\mat{x}_r),\mat{n} \rangle}{\langle \nabla f(\mat{x}_r),\mat{n}_* \rangle}$ as stated in Lemma~\ref{lem:bound2} below, where $\mat{n}$ is any unit descent direction from $\mat{x}_r$, $\mat{y}$ is the minimum in direction $\mat{n}$ from $\mat{x}_r$, and $\mat{n}_*$ is the unit vector $(\mat{x}_*-\mat{x}_r)/\norm{\mat{x}_*-\mat{x}_r}$.  

In Sections~\ref{sec:zhlg} and~\ref{sec:second}, for the DKSG, ZHLG, and NNLS problems, we prove that there exists a descent direction $\mat{n}_r$ from $\mat{x}_r$ that involves only free variables in $[\nu] \setminus S_{r+1}$ and $\frac{\langle \nabla f(\mat{x}_r), \mat{n}_r\rangle}{\langle \nabla f(\mat{x}_r),\mat{n}_*\rangle}$ is roughly bounded from below by the square root of the number of variables.  Therefore, if we assume that the distance between $\mat{x}_r$ and the minimum in direction $\mat{n}_r$ is at least $\frac{1}{\lambda}\norm{\mat{x}_r-\mat{x}_*}$ for all $r$, where $\lambda$ is some fixed value, then $f(\mat{x}_{r+i}) - f(\mat{x}_*) \leq e^{-1} (f(\mat{x}_r)-f(\mat{x}_*))$ for some $i$ that is roughly bounded by  $\lambda$ times the square root of the number of variables.  It follows that if $\beta_1$ is not exceeded, SolveNNQ will converge efficiently to the optimum under our assumption.  The geometric-like convergence still holds even if the inequality $\norm{\mat{x}_r-\mat{y}_r} \geq \frac{1}{\lambda} \norm{\mat{x}_r-\mat{x}_*}$ is satisfied in only a constant fraction of any sequence of consecutive iterations (the iterations in that constant fraction need not be consecutive).  If $\beta_1$ is exceeded, Lemma~\ref{lem:SolveNNQ} below shows that the number of remaining iterations is at most $|\supp(\mat{x}_*)|$; however, we  have no control on the number of free variables.    

In the following, we derive Lemma~\ref{lem:bound2} for convex quadratic functions. Property (i) of  Lemma~\ref{lem:bound2} is closely related to the first-order condition~\cite{BV2004} for convex functions. For completeness, we include an adaptation of it to the convex quadratic functions. For simplicity in representation, we assume that the optimal solution is unique without loss of generality. However, the correctness of our proofs does not require the optimal solution to be unique.

\begin{lemma}
	\label{lem:bound2}
	Let $\mat{n} \in \real^\nu$ be any unit descent direction from $\mat{x}_r$.  Let $\mat{y} = \mat{x}_r + \alpha\mat{n}$ for some $\alpha > 0$ be the feasible point that minimizes $f$ on the ray from $\mat{x}_r$ in direction $\mat{n}$.  Let $S$ be an active set that is disjoint from $\supp(\mat{x}_r) \cup \supp(\mat{n})$.  Let $\mat{x}_{r+1}$ be the optimal solution for $f$ constrained by $S$.  Then, $\frac{f(\mat{x}_{r+1}) - f(\mat{x}_*)}{f(\mat{x}_r) - f(\mat{x}_*)} \leq 1- 
	\frac{1}{2}\cdot\frac{\norm{\mat{x}_r-\mat{y}}}{\norm{\mat{x}_r-\mat{x}_*}} \cdot 
	\frac{\langle \nabla f(\mat{x}_r),\mat{n} \rangle}{\langle \nabla f(\mat{x}_r),\mat{n}_* \rangle}$, where $\mat{n}_*$ is the unit vector $\frac{\mat{x}_*-\mat{x}_r}{\norm{\mat{x}_*-\mat{x}_r}}$.
\end{lemma}
\begin{proof}
	We prove two properties: (i)~$f(\mat{x}_r) - f(\mat{y}) \leq -\norm{\mat{x}_r-\mat{y}} \cdot \langle \nabla f(\mat{x}_r),\mat{n} \rangle$, and (ii)~$f(\mat{x}_r) - f(\mat{y}) \geq -\frac{1}{2}\norm{\mat{x}_r-\mat{y}} \cdot \langle \nabla f(\mat{x}_r),\mat{n} \rangle$.
	
	Consider (i).  For all $s \in [0,1]$, define $\mat{y}_s = \mat{x}_r + s(\mat{y}-\mat{x}_r)$.  By the chain rule, we have $\frac{\partial f}{\partial s} = \bigl\langle \frac{\partial f}{\partial \mat{y}_s},\frac{\partial \mat{y}_s}{\partial s} \bigr\rangle
	= \bigl\langle \nabla f(\mat{y}_s), \, \mat{y} -\mat{x}_r \bigr\rangle$.
	We integrate along a linear movement from $\mat{x}_r$ to $\mat{y}$.  Using the fact that $\nabla f(\mat{y}_s) = 2\mat{A}^t\mat{A}(\mat{x}_r + s(\mat{y}-\mat{x}_r)) + \mat{a} = \nabla f(\mat{x}_r) + 2s\mat{A}^t\mat{A}(\mat{y}-\mat{x}_r)$, we obtain
	\begin{eqnarray*}
		f(\mat{y}) & = & f(\mat{x}_r) + \int_{0}^1 \langle \nabla f(\mat{y}_s),  \mat{y} -\mat{x}_r \rangle \, \mathtt{d}s \nonumber \\
		& = & f(\mat{x}_r) + \int_0^1 \langle \nabla f(\mat{x}_r), \, \mat{y}-\mat{x}_r \rangle \, \mathtt{d}s + 
		\int_0^1 2s\iprod{\mat{A}^t\mat{A}(\mat{y}-\mat{x}_r), \, \mat{y}-\mat{x}_r} \, \mathtt{d}s \nonumber \\
		& = & f(\mat{x}_r) + \Bigl[\langle \nabla f(\mat{x}_r), \, \mat{y}-\mat{x}_r \rangle \cdot s\Bigr]^1_0 + 
		\Bigl[\norm{\mat{A}(\mat{y}-\mat{x}_r)}^2 \cdot s^2 \Bigr]^1_0 \\
		& = & f(\mat{x}_r) + \langle \nabla f(\mat{x}_r), \, \mat{y}-\mat{x}_r \rangle + \norm{\mat{A}(\mat{y}-\mat{x}_r)}^2.  
	\end{eqnarray*}
	It follows immediately that
	\begin{eqnarray}
		f(\mat{x}_r) - f(\mat{y}) & = & -\langle \nabla f(\mat{x}_r), \, \mat{y}-\mat{x}_r \rangle - \norm{\mat{A}(\mat{y}-\mat{x}_r)}^2 \label{eq:0} \\
		& \leq & -\langle \nabla f(\mat{x}_r), \, \mat{y}-\mat{x}_r \rangle  \;\; = \;\; -\norm{\mat{x}_r-\mat{y}} \cdot \langle \nabla f(\mat{x}_r),\mat{n} \rangle. \nonumber
	\end{eqnarray}
	This completes the proof of (i).
	
	Consider (ii).   Using $\mat{y}_s = \mat{x}_r + s(\mat{y}-\mat{x}_r) = \mat{y} + (1-s)(\mat{x}_r - \mat{y})$ and the fact that $\nabla f(\mat{y}_s) = 2\mat{A}^t\mat{A}(\mat{y} + (1-s)(\mat{x}_r-\mat{y})) + \mat{a} = \nabla f(\mat{y}) + 2(1-s)\mat{A}^t\mat{A}(\mat{x}_r-\mat{y})$,  we carry out a symmetric derivation:
	\begin{eqnarray}
		f(\mat{x}_r) & = & f(\mat{y}) + \int_1^0\langle \nabla f(\mat{y}_s),  \mat{y} -\mat{x}_r \rangle\, \mathtt{d}s \nonumber \\
		& = & f(\mat{y}) + \int_1^0 \langle \nabla f(\mat{y}), \, \mat{y}-\mat{x}_r \rangle \, \mathtt{d}s + 
		\int_1^0  2(1-s)\langle \mat{A}^t\mat{A}(\mat{x}_r-\mat{y}), \, \mat{y} - \mat{x}_r \rangle \, \mathtt{d}s \nonumber \\
		& = & f(\mat{y}) + \Bigl[\langle \nabla f(\mat{y}), \, \mat{y}-\mat{x}_r \rangle \cdot s \Bigr]^0_1 + 
		\Bigl[ \norm{\mat{A}(\mat{y}-\mat{x}_r)}^2 \cdot (1-s)^2\Bigr]^0_1 \nonumber \\
		& = & f(\mat{y}) - \langle \nabla f(\mat{y}), \, \mat{y} - \mat{x}_r \rangle + \norm{\mat{A}(\mat{y}-\mat{x}_r)}^2.   \label{eq:0-1}
	\end{eqnarray}
	Combining \eqref{eq:0} and \eqref{eq:0-1} gives $f(\mat{x}_r) - f(\mat{y}) = -\frac{1}{2}\langle \nabla f(\mat{x}_r),\mat{y}-\mat{x}_r \rangle - \frac{1}{2}\langle \nabla f(\mat{y}),\mat{y}-\mat{x}_r \rangle$.  As $\mat{y}$ is the feasible point that minimizes the value of $f$ on the ray from $\mat{x}_r$ in direction $\mat{n}$, we have $\langle \nabla f(\mat{y}),\mat{y}-\mat{x}_r \rangle \leq 0$.  Therefore, $f(\mat{x}_r) - f(\mat{y}) \geq  -\frac{1}{2}\langle \nabla f(\mat{x}_r),\mat{y}-\mat{x}_r \rangle = -\frac{1}{2}\norm{\mat{x}_r-\mat{y}} \cdot \langle \nabla f(\mat{x}_r),\mat{n} \rangle$, completing the proof of~(ii).
	
	By applying (i) to the descent direction $\mat{x}_*-\mat{x}_r$, we get $f(\mat{x}_r) - f(\mat{x}_*) \leq  -\langle \nabla f(\mat{x}_r),\mat{x}_*-\mat{x}_r \rangle$.   Since $S$ is disjoint from $\supp(\mat{x}_*) \cup \supp(\mat{n})$, the feasible points on the ray from $\mat{x}_r$ in direction $\mat{n}$ are also feasible with respect to $S$.  Therefore, $f(\mat{x}_r) - f(\mat{x}_{r+1}) \geq f(\mat{x}_r) - f(\mat{y})$.  By applying (ii) to the descent direction $\mat{n}$, we get $f(\mat{x}_r) - f(\mat{x}_{r+1}) \geq f(\mat{x}_r) - f(\mat{y}) \geq -\frac{1}{2}\norm{\mat{x}_r-\mat{y}} \cdot \langle \nabla f(\mat{x}_r),\mat{n} \rangle$.  Dividing the lower  bound for $f(\mat{x}_r) - f(\mat{x}_{r+1})$ by the upper bound for $f(\mat{x}_r) - f(\mat{x}_*)$ proves that $\frac{f(\mat{x}_{r}) - f(\mat{x}_{r+1})}{f(\mat{x}_r) - f(\mat{x}_*)} \geq \frac{1}{2}\cdot\frac{\norm{\mat{x}_r-\mat{y}}}{\norm{\mat{x}_r-\mat{x}_*}} \cdot 
	\frac{\langle \nabla f(\mat{x}_r),\mat{n} \rangle}{\langle \nabla f(\mat{x}_r),\mat{n}_* \rangle}$.  Hence, $\frac{f(\mat{x}_{r+1}) - f(\mat{x}_*)}{f(\mat{x}_r) - f(\mat{x}_*)} = 1 - \frac{f(\mat{x}_{r}) - f(\mat{x}_{r+1})}{f(\mat{x}_r) - f(\mat{x}_*)} \leq 1 - \frac{1}{2}\cdot\frac{\norm{\mat{x}_r-\mat{y}}}{\norm{\mat{x}_r-\mat{x}_*}} \cdot 
	\frac{\langle \nabla f(\mat{x}_r),\mat{n} \rangle}{\langle \nabla f(\mat{x}_r),\mat{n}_* \rangle}$. 
\end{proof}

\cancel{
	\setlength\extrarowheight{3pt}
	\begin{table}
		\centering
		\begin{tabular}{|c|c||c|c|}
			\hline
			& Upper bound of $\frac{f(\mat{x}_{r+1})-f(\mat{x}_*)}{f(\mat{x}_r)-f(\mat{x}_*)}$ 
			&& Upper bound of  $\frac{f(\mat{x}_{r+1})-f(\mat{x}_*)}{f(\mat{x}_r)-f(\mat{x}_*)}$  \\ [3pt]
			\hline 
			DKSG & $ 1 -  \frac{1}{6\sqrt{2}n} \cdot \frac{\norm{\mat{x}_r - \mat{y}_r}}{\norm{\mat{x}_r - \mat{x}_*}}$ &
			NNLS & $ 1 - \frac{1}{2\sqrt{2n\ln n}} \cdot \frac{\norm{\mat{x}_r - \mat{y}_r}}{\norm{\mat{x}_r - \mat{x}_*}}$ \\ [3pt]
			\hline
			ZHLG & $ 1 - \frac{1}{2\sqrt{2}n\sqrt{\ln n}} \cdot \frac{\norm{\mat{x}_r - \mat{y}_r}}{\norm{\mat{x}_r - \mat{x}_*}}$ &
			PD & $1 - \frac{1}{2\sqrt{2(m+n)}} \cdot \frac{\norm{\mat{x}_r - \mat{y}_r}}{\norm{\mat{x}_r - \mat{x}_*}}$ \\ [3pt]
			\hline
			MEB & $1 - \frac{1}{2\sqrt{2n}} \cdot \frac{\norm{\mat{x}_r - \mat{y}_r}}{\norm{\mat{x}_r - \mat{x}_*}}$ & & \\ [3pt]
			\hline
		\end{tabular}
		\caption{Upper bounds of $\frac{f(\mat{x}_{r+1})-f(\mat{x}_*)}{f(\mat{x}_r)-f(\mat{x}_*)}$ for DKSG, ZHLG, MEB, NNLS, and PD.}
		\label{tb:factor}
	\end{table}
	
	\setlength\extrarowheight{0pt}
}

\begin{lemma}
	\label{lem:SolveNNQ}
	If $\beta_1$ is exceeded, the lines~\ref{alg:loop-start}--\ref{alg:loop-end} of {\em SolveNNQ} will be 
	 iterated at most $|\supp(\mat{x}_*)|$ more times before $\mat{x}_*$ is found.
\end{lemma}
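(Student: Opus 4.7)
My plan is to track the quantity $|\supp(\mat{x}_*) \cap S_r|$. Once $\beta_1$ has been exceeded, line~\ref{alg:exceed-end} is the only rule ever used for updating the active set, so $S_{r+1} = S_r \setminus E_r$: $S_r$ can only shrink, and no index ever re-enters it. If I can show that every nonempty $E_r$ contains at least one element of $\supp(\mat{x}_*)$, then $|\supp(\mat{x}_*) \cap S_r|$ strictly decreases each iteration; since it is bounded above by $|\supp(\mat{x}_*)|$, after at most $|\supp(\mat{x}_*)|$ iterations the intersection becomes empty. At that moment $\mat{x}_*$ is feasible for the NNQ constrained by $S_r$, so the solver returns a solution of value $f(\mat{x}_*)$, after which the next test finds $E_r = \emptyset$ and the algorithm terminates.

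The key step is the claim $E_r \cap \supp(\mat{x}_*) \neq \emptyset$ whenever $E_r \neq \emptyset$. I will start from the stationarity condition $\nabla f(\mat{x}_r) = \mat{B}^t\mat{u}_r + \mat{C}^t\tilde{\mat{u}}_r + \mat{v}_r$, take inner product with $\mat{x}_* - \mat{x}_r$, and apply $\mat{C}\mat{x}_* = \mat{c} = \mat{C}\mat{x}_r$ together with the complementary slackness identities $\mat{u}_r^t(\mat{B}\mat{x}_r - \mat{b}) = 0$ and $\mat{v}_r^t\mat{x}_r = 0$ to obtain
$$\langle \nabla f(\mat{x}_r),\, \mat{x}_* - \mat{x}_r \rangle \,=\, \mat{u}_r^t(\mat{B}\mat{x}_* - \mat{b}) + \mat{v}_r^t\mat{x}_*.$$
Convexity of $f$ gives $\langle \nabla f(\mat{x}_r), \mat{x}_* - \mat{x}_r \rangle \leq f(\mat{x}_*) - f(\mat{x}_r) < 0$, where the strict inequality comes from $\mat{x}_r$ being non-optimal for the original NNQ (witnessed by $E_r \neq \emptyset$). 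Primal and dual feasibility make $\mat{u}_r^t(\mat{B}\mat{x}_* - \mat{b}) \geq 0$, forcing $\mat{v}_r^t\mat{x}_* < 0$. For every $i \notin S_r$ we have $(\mat{v}_r)_i \geq 0$ and $(\mat{x}_*)_i \geq 0$, hence $(\mat{v}_r)_i (\mat{x}_*)_i \geq 0$; the strictly negative total must therefore be carried by some $i \in S_r$ with $(\mat{v}_r)_i < 0$ and $(\mat{x}_*)_i > 0$, which is exactly an element of $E_r \cap \supp(\mat{x}_*)$.

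The main subtlety I expect to address is justifying the strict inequality $f(\mat{x}_r) > f(\mat{x}_*)$ from $E_r \neq \emptyset$: a priori the constrained dual $\mat{v}_r$ is unrestricted on $S_r$, so one might worry that a primal-optimal $\mat{x}_r$ could be paired with a $\mat{v}_r$ having some negative coordinate on $S_r$. I plan to rule this out by arguing that a negative $(\mat{v}_r)_i$ for an index $i \in S_r$ lets one build a feasible descent direction at $\mat{x}_r$ from the freed coordinate $\mat{e}_i$ projected onto the affine hull of the currently active equality and inequality constraints; the existence of such a direction contradicts primal optimality and forces $f(\mat{x}_r) > f(\mat{x}_*)$. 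Once strictness is secured, chaining the monotone-shrinking argument of the first paragraph over at most $|\supp(\mat{x}_*)|$ iterations yields the claimed bound, and one final solver call after the intersection becomes empty delivers $\mat{x}_*$.
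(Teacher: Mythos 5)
Your proof is correct and follows essentially the same route as the paper's: the same KKT identity $\langle \nabla f(\mat{x}_r),\mat{x}_*-\mat{x}_r\rangle = \mat{u}_r^t(\mat{Bx}_*-\mat{b}) + \mat{v}_r^t\mat{x}_*$, the same sign analysis forcing some $i \in S_r \cap \supp(\mat{x}_*)$ with $(\mat{v}_r)_i < 0$, and the same monotone-shrinking count over at most $|\supp(\mat{x}_*)|$ iterations. The only divergence is cosmetic: the paper obtains the needed strict inequality by assuming $\mat{x}_r$ is non-optimal and arguing by contradiction that $\mat{x}_*-\mat{x}_r$ is a descent direction, rather than deriving it from $E_r\neq\emptyset$ as your final paragraph sets out to do.
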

\begin{proof}
	By the KKT conditions, $\nabla f(\mat{x}_r) = \mat{B}^t\mat{u}_r + \mat{C}^t\tilde{\mat{u}}_r + \mat{v}_r$, so $\langle \nabla f(\mat{x}_r),\mat{x}_* - \mat{x}_r \rangle = \langle \mat{B}^t\mat{u}_r, \mat{x}_* - \mat{x}_r \rangle + \langle \mat{C}^t\tilde{\mat{u}}_r, \mat{x}_* - \mat{x}_r \rangle + \langle \mat{v}_r,\mat{x}_* - \mat{x}_r \rangle$.   By the complementary slackness, we have $\mat{u}_r^t(\mat{Bx}_r^{}-\mat{b}) = 0$ and $\mat{v}_r^t\mat{x}_r^{} = 0$.  The former equation implies that $\mat{u}_r^t\mat{Bx}_r^{} = \mat{u}^t_r\mat{b}$.  By primal feasibility, $\mat{Cx}_* = \mat{Cx}_r = \mat{c}$ which implies that $\langle \mat{C}^t\tilde{\mat{u}}_r, \mat{x}_* - \mat{x}_r \rangle = \tilde{\mat{u}}_r^t (\mat{Cx}_* - \mat{Cx}_r) = 0$.  As a result,
	\begin{equation}
		\langle \nabla f(\mat{x}_r),\mat{x}_* - \mat{x}_r \rangle = \mat{u}_r^t(\mat{B}\mat{x}_* - \mat{b}) + \mat{v}_r^t\mat{x}_*.
		\label{eq:-1-2}
	\end{equation}
	By primal and dual feasibilities, $\mat{Bx}_* \geq \mat{b}$ and $\mat{u}_r \geq 0_\kappa$.  So $\mat{u}_r^t(\mat{Bx}_* - \mat{b}) \geq 0$.   
	
	We claim that $(\mat{v}_r)_i < 0$ for some $i \in \supp(\mat{x}_*) \cap S_r$.  Assume to the contrary that $(\mat{v}_r)_i \geq 0$ for all $i \in \supp(\mat{x}_*) \cap S_r$.  Then, $(\mat{v}_r)_i \geq 0$ for all $i \in \supp(\mat{x}_*)$, which implies that $\mat{v}_r^t\mat{x}_* \geq 0$.  Substituting $\mat{u}_r^t(\mat{Bx}_* - \mat{b}) \geq 0$ and $\mat{v}_r^t\mat{x}_* \geq 0$ into \eqref{eq:-1-2} gives $\iprod{\nabla f(\mat{x}_r), \mat{x}_* - \mat{x}_r} \geq 0$.   However, since $\mat{x}_r$ is not optimal, by the convexity of $f$ and the feasible region of the NNQ problem, $\mat{x}_* - \mat{x}_r$ is a descent direction from $\mat{x}_r$, contradicting the relation $\langle \nabla f(\mat{x}_r), \mat{x}_* - \mat{x}_r \rangle \geq 0$.  This proves our claim.
	
	By our claim, there must exist some index $i \in \supp(\mat{x}_*) \cap S_r$ such that $(\mat{v}_r)_i < 0$.   This index $i$ must be included in $E_r$, which means that $i \not\in S_{r+1}$ because $S_{r+1} = S_r \setminus E_r$.   As a result, each iteration of lines~\ref{alg:exceed-start} and~\ref{alg:exceed-end} sets free at least one more element of $\supp(\mat{x}_*)$.  After at most $|\supp(\mat{x}_*)|$ more iterations, all elements of $\supp(\mat{x}_*)$ are free variables, which means that the solver will find $\mat{x}_*$ in at most $|\supp(\mat{x}_*)|$ more iterations.
\end{proof}

\section{Experimental results}
\label{sec:experiments}

In this section, we present our experimental results on the DKSG, ZHLG, and image deblurring problems.  Our machine configuration is: Intel Core 7-9700K 3.6Hz cpu, 3600 Mhz ram, 8~cores, 8 logical processors.  We use MATLAB version R2020b.  We use the option of {\tt quadprog} that runs the interior-point-convex algorithm.  

As explained in Section~\ref{sec:alg}, we eliminate the variables in the active set in each iteration in order to speed up the solver.  One implementation is to extract the remaining rows and columns of the constraint matrix; however, doing so incurs a large amount of data movements.   Instead, we zero out the rows and columns corresponding to the variables in the active set.    

We determine that $\tau = 4\ln^2 \nu$ is a good setting for the DKSG and ZHLG data sets.  If we increase $\ln^2 \nu$ to $\ln^3\nu$, the number of iterations is substantially reduced, but the running time of each iteration is larger; there is little difference in the overall efficiency.  If we decrease $\ln^2 \nu$ to $\ln \nu$, the number of iterations increases so much that the overall efficiency decreases.   We also set $\tau = 4\ln^2 \nu$ in the experiments with  image deblurring without checking whether this is the best for them.  It helps to verify the robustness of this choice of $\tau$.  

We set $\beta_0$ to be $3\tau$ and $\beta_1$ to be~15.  Since we often observe a geometric reduction in the gap between the current objective function value and the optimum from one iteration to the next, we rarely encounter a case in our experiments that the threshold $\beta_1$ is exceeded.

\subsection{DKSG and ZHLG}

For both the DKSG and ZHLG problems, we ran experiments to compare the performance of SolveNNQ with a single call of {\tt quadprog} on the UCI Machine Learning Repository data sets Iris, HCV, Ionosphere, and AAL~\cite{UCI}.  The data sets IRIS and Ionosphere were also used in the experiments in the paper of Daitch~et~al.~\cite{daitch2009} that proposed the DKSG problem.  For each data set, we sampled uniformly at random a number of rows that represent the number of points $n$ and a number of attributes that represent the number of dimensions $d$.   There are two parameters $\mu$ and $\rho$ in the objective function of the ZHLG problem.  According to the results in~\cite{ZHL2014}, we set $\mu = 16$ and $\rho = 2$.

\begin{table}
	\centering
	{\footnotesize
		\begin{tabular}{|c|c|c|>{\bfseries}c|>{\bfseries}c|c|>{\bfseries}c|>{\bfseries}c|}
			\hline
		 & \multicolumn{7}{c|}{Ionosphere}                                                                 \\ \hline
		&     & \multicolumn{3}{c|}{DKSG}                   & \multicolumn{3}{c|}{ZHLG}                   \\ \hline
		$n$ & $d$ &\tt quad & ours   & nnz  & \tt quad & ours   & nnz  \\ \hline		
		70  & 2   & 4.57s                       & 1.84s  & 44\% & 2.51s                       & 1.06s  & 47\% \\
		    & 4   & 5.78s                       & 2.25s  & 47\% & 2.20s                       & 0.97s  & 49\% \\
		& 10  & 4.65s                       & 1.55s  & 47\% & 1.86s                       & 1.67s  & 59\% \\ \hline
		100 & 2   & 26.67s                      & 5.41s  & 33\% & 16.31s                      & 3.35s  & 36\% \\
		& 4   & 43.60s                      & 7.06s  & 35\% & 14.87s                      & 2.77s  & 36\% \\
		& 10  & 28.53s                      & 5.46s  & 35\% & 13.34s                      & 5.42s  & 44\% \\ \hline
		130 & 2   & 192.41s                     & 11.56s & 27\% & 144.27s                     & 5.53s  & 28\% \\
		& 4   & 261.54s                     & 13.13s & 27\% & 141.70s                     & 6.23s  & 29\% \\
		& 10  & 208.24s                     & 12.02s & 27\% & 134.95s                     & 12.43s & 36\% \\ \hline
		160 & 2   & 660.25s                     & 25.39s & 23\% & 614.71s                     & 9.14s  & 24\% \\
		& 4   & 888.06s                     & 27.20s & 23\% & 622.58s                     & 10.74s & 24\% \\
		& 10  & 979.92s                     & 22.03s & 22\% & 598.55s                     & 29.97s & 31\% \\ \hline	
	    \end{tabular}
		}
		\caption{Running time comparison of the Ionosphere data set.  The data in each row is the average of the corresponding data
		from five runs; nnz is the percentage of the average number of non-zeros in the final solution.  We use {\tt quad} to denote {\tt quadprog}.}
	\label{tb:1a}

\end{table}

\begin{table}
	\centering
	{\footnotesize
		\begin{tabular}{|c|c|c|>{\bfseries}c|>{\bfseries}c|c|>{\bfseries}c|>{\bfseries}c|}
			\hline
			    & \multicolumn{7}{c|}{AAL}                                                                        \\ \hline
			&     & \multicolumn{3}{c|}{DKSG}                   & \multicolumn{3}{c|}{ZHLG}                   \\ \hline
			$n$ & $d$ & \tt quad & ours   & nnz  & \tt quad & ours   & nnz  \\ \hline
			70  & 2   & 3.99s                       & 1.72s  & 44\% & 2.36s                       & 1.19s  & 52\% \\
			& 4   & 4.64s                       & 1.94s  & 46\% & 2.32s                       & 1.27s  & 50\% \\
			& 10  & 4.89s                       & 1.80s  & 48\% & 1.93s                       & 1.69s  & 58\% \\ \hline
			100 & 2   & 22.91s                      & 5.30s  & 34\% & 15.25s                      & 3.24s  & 37\% \\
			& 4   & 38.63s                      & 7.81s  & 37\% & 13.38s                      & 2.68s  & 35\% \\
			& 10  & 36.58s                      & 6.00s  & 35\% & 13.34s                      & 4.59s  & 43\% \\ \hline
			130 & 2   & 175.43s                     & 15.81s & 27\% & 145.83s                     & 6.58s  & 28\% \\
			& 4   & 221.80s                     & 15.89s & 29\% & 142.01s                     & 5.88s  & 30\% \\
			& 10  & 257.41s                     & 14.60s & 28\% & 135.08s                     & 10.03s & 35\% \\ \hline
			160 & 2   & 550.44s                     & 26.06s & 24\% & 613.45s                     & 15.01s & 26\% \\
			& 4   & 864.51s                     & 31.71s & 24\% & 610.71s                     & 14.62s & 26\% \\
			& 10  & 984.85s                     & 20.86s & 22\% & 601.50s                     & 22.42s & 30\% \\ \hline
		\end{tabular}
	}
	\caption{Running time comparison of the AAL data set.  The data in each row is the average of the corresponding data
		from five runs; nnz is the percentage of the average number of non-zeros in the final solution.  We use {\tt quad} to denote {\tt quadprog}.}
	\label{tb:1b}
\end{table}

\begin{table}
	\centering
	{\footnotesize
		\begin{tabular}{|c|c|c|>{\bfseries}c|>{\bfseries}c|c|>{\bfseries}c|>{\bfseries}c|}
			\hline
			& \multicolumn{7}{c|}{IRIS}                                                                      \\ \hline
			&     & \multicolumn{3}{c|}{DKSG}                   & \multicolumn{3}{c|}{ZHLG}                  \\ \hline
			$n$      & $d$ & \tt quad & ours   & nnz  & \tt quad & ours  & nnz  \\ \hline
			70       & 2   & 5.61s                       & 2.00s  & 43\% & 2.76s                       & 1.21s & 47\% \\
			& 3   & 5.66s                       & 2.09s  & 44\% & 2.61s                       & 1.15s & 48\% \\
			& 4   & 4.88s                       & 2.10s  & 45\% & 2.23s                       & 1.18s & 50\% \\ \hline
			100      & 2   & 43.87s                      & 6.55s  & 33\% & 17.63s                      & 2.13s & 34\% \\
			& 3   & 37.56s                      & 5.92s  & 33\% & 15.92s                      & 2.64s & 36\% \\
			& 4   & 42.69s                      & 5.95s  & 33\% & 15.24s                      & 2.95s & 37\% \\ \hline
			130      & 2   & 241.59s                     & 13.29s & 26\% & 148.34s                     & 5.62s & 29\% \\
			& 3   & 199.22s                     & 12.20s & 27\% & 150.54s                     & 5.34s & 28\% \\
			& 4   & 256.28s                     & 13.96s & 27\% & 151.87s                     & 6.20s & 30\% \\ \hline
			150 & 2   & 504.21s                     & 20.15s & 23\% & 414.79s                     & 8.38s & 25\% \\
			  & 3   & 510.55s                     & 19.81s & 24\% & 398.06s                     & 7.80s & 26\% \\
			& 4   & 565.89s                     & 21.04s & 24\% & 390.47s                     & 7.36s & 26\% \\ \hline
		\end{tabular}
	}
	\caption{Running time comparison of the IRIS data set.  The data in each row is the average of the corresponding data
		from five runs; nnz is the percentage of the average number of non-zeros in the final solution.  We use {\tt quad} to denote {\tt quadprog}.}
	\label{tb:2a}
\end{table}

\begin{table}
	\centering
	{\footnotesize
		\begin{tabular}{|c|c|c|>{\bfseries}c|>{\bfseries}c|c|>{\bfseries}c|>{\bfseries}c|}
			\hline
			  & \multicolumn{7}{c|}{HCV}                                                                        \\ \hline
			&     & \multicolumn{3}{c|}{DKSG}                   & \multicolumn{3}{c|}{ZHLG}                   \\ \hline
			$n$      & $d$ & \tt quad & ours   & nnz  & \tt quad & ours   & nnz  \\ \hline
			70       & 2   & 3.68s                       & 1.16s  & 43\% & 2.65s                       & 0.84s  & 45\% \\
			& 4   & 5.22s                       & 1.71s  & 44\% & 2.13s                       & 0.88s  & 45\% \\
			& 10  & 4.55s                       & 1.35s  & 43\% & 1.88s                       & 2.06s  & 62\% \\ \hline
			100      & 2   & 13.02s                      & 3.50s  & 32\% & 15.20s                      & 2.00s  & 32\% \\
			& 4   & 39.50s                      & 4.84s  & 33\% & 13.03s                      & 2.35s  & 32\% \\
			& 10  & 29.79s                      & 3.65s  & 31\% & 12.78s                      & 8.32s  & 48\% \\ \hline
			130      & 2   & 108.10s                     & 6.90s  & 28\% & 141.20s                     & 3.07s  & 24\% \\
			& 4   & 180.42s                     & 12.05s & 27\% & 138.05s                     & 3.78s  & 24\% \\
			& 10  & 184.51s                     & 8.16s  & 24\% & 133.91s                     & 21.17s & 38\% \\ \hline
			160 & 2   & 551.43s                     & 17.32s & 24\% & 613.29s                     & 5.22s  & 20\% \\
			  & 4   & 868.60s                     & 23.93s & 22\% & 613.04s                     & 8.10s  & 21\% \\
			& 10  & 633.97s                     & 14.48s & 20\% & 588.64s                     & 45.33s & 32\% \\ \hline
		\end{tabular}
	}
	\caption{Running time comparison of the HCV data set.  The data in each row is the average of the corresponding data
		from five runs; nnz is the percentage of the average number of non-zeros in the final solution.  We use {\tt quad} to denote {\tt quadprog}.}
	\label{tb:2b}
\end{table}

Tables~\ref{tb:1a},~\ref{tb:1b},~\ref{tb:2a}, and~\ref{tb:2b} show some average running times of SolveNNQ and {\tt quadprog}.   When $n \geq 130$ and $n \geq 13d$, SolveNNQ is often 10 times or more faster.  For the same dimension $d$, the column nnz shows that the percentage of non-zero solution coordinates decreases as $n$ increases, that is, the sparsity increases as $n$ increases.  Correspondingly, the speedup of SolveNNQ over a single call of {\tt quadprog} also increases as reflected by the running times.

\begin{figure}
	\centering
	\begin{tabular}{ccc}
		\includegraphics[scale=0.5]{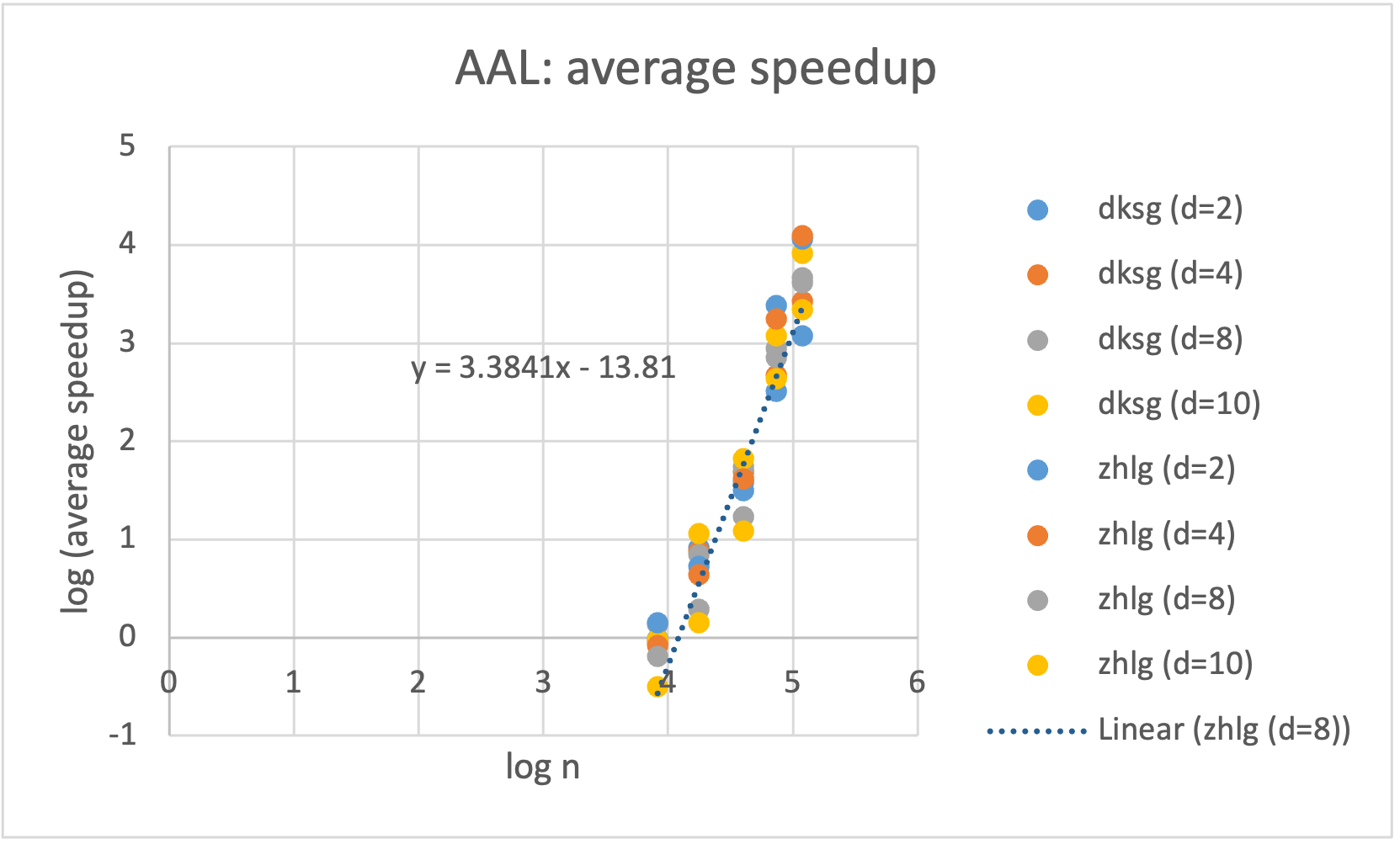} & & 
		\includegraphics[scale=0.5]{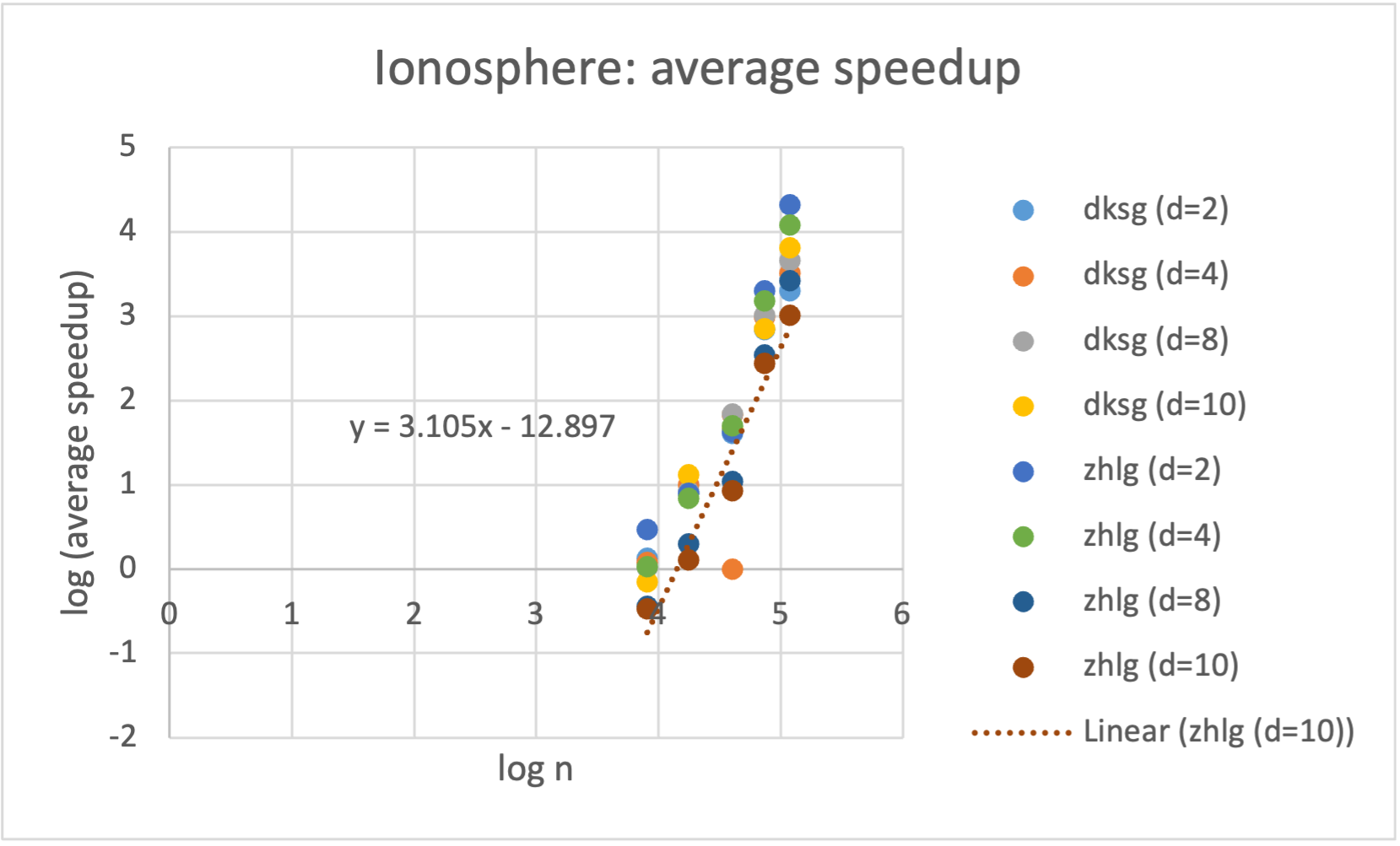} \\
		\includegraphics[scale=0.5]{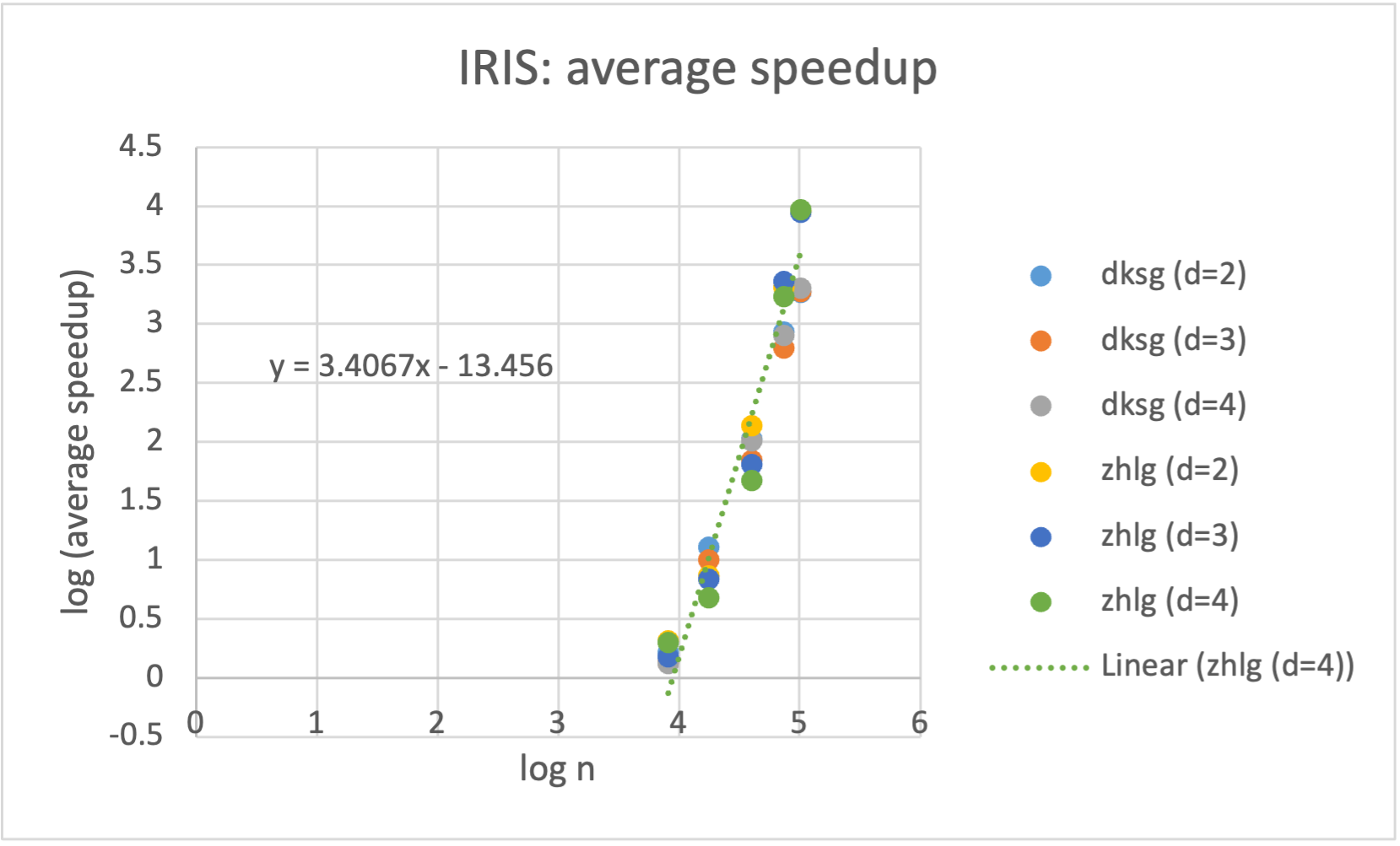} & & 
		\includegraphics[scale=0.5]{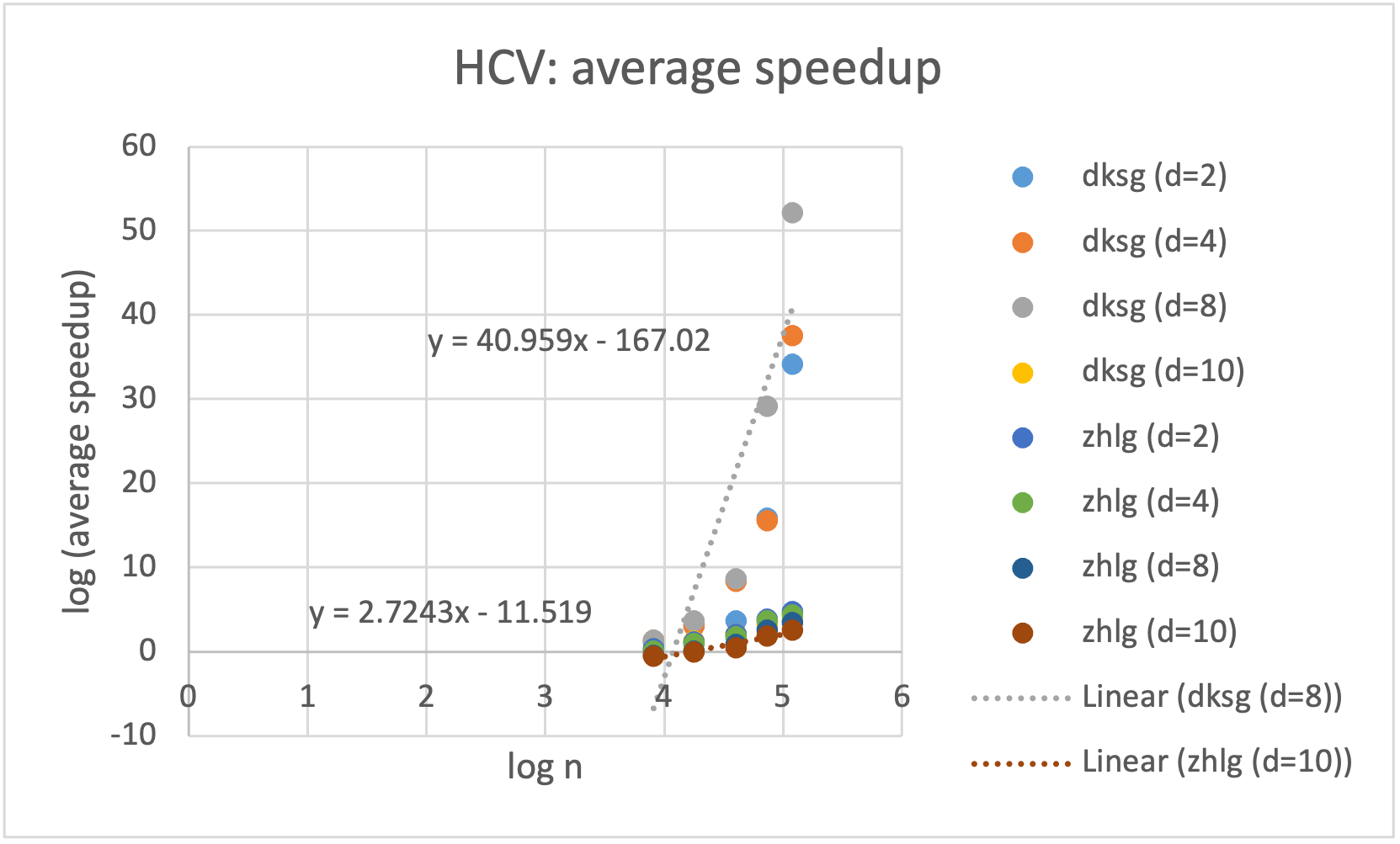}
	\end{tabular} 
	\caption{We plot the natural logarithm of the average speedup of SolveNNQ over a single call of {\sf quadprog} for DKSG and ZHLG.   The horizontal axis is $\ln n$.   A distinct color is used for data points for the same dimension $d$.}
	\label{fg:prox-0}
\end{figure}

Figure~\ref{fg:prox-0} shows four plots, one for each of the four data sets, that show how the natural logarithm of the average speedup of SolveNNQ over a single call of {\tt quadprog} increases as the value of $\ln n$ increases.  In each plot, data points for distinct dimensions are shown in distinct colors, and the legends give the color coding of the dimensions.  For both DKSG and ZHLG and for every fixed $d$, the average speedup as a function of $n$ hovers around $\Theta(n^3)$.  The average speedup data points for HCV shows a greater diversity, but the smallest speedup is still roughly proportional to $n^{2.7}$.  As a result, SolveNNQ gives a much superior performance than a single call of {\tt quadprog} even for moderate values of $n$.

To verify our assumption that the distance between  $\mat{x}_r$ and the minimum $\mat{y}_r$ in direction $\mat{n}_r$ is at least $\frac{1}{\lambda}\norm{\mat{x}_r-\mat{x}_*}$, we used box-and-whisker plots for visualizations. A box and whisker plot displays the distribution of a dataset. It shows the middle 50\% of the data as a box, with the median shown using a horizontal segment inside the box; whiskers are used to show the spread of the remaining data. We exclude two outliers of the ratios for the DKSG problem. These outliers occur when $r=1$, so the ratios $\frac{\norm{\mat{x}_1-\mat{x}_{2}}}{\norm{\mat{x}_1-\mat{x}_*}}$ describe the behavior between the first and second iterations of the algorithm. These outliers can be ignored, as the initialization does not induce any part of the active set from the gradient. The active set in the first iteration is arbitrary and distinct from the rest of the algorithm. As a result, any abnormality involving the first iteration does not impact the convergence of the algorithm, since it is just one of many iterations. As shown in Figures~\ref{fg:Rs}(a) and (b), the ratio $\frac{\norm{\mat{x}_r-\mat{x}_{r+1}}}{\norm{\mat{x}_r-\mat{x}_*}}$ is greater than $\frac{1}{100}$. Thus, our algorithm takes a descent direction that meets our assumption that $\frac{\norm{\mat{x}_r-\mat{y}}}{\norm{\mat{x}_r-\mat{x}_*}}$ is larger than or equal to a not so small constant of $\frac{1}{100}$. The observed convergence rate is even better.  As illustrated in Figure~\ref{fg:cosines-0}, the gap $f(\mat{x}_r) - f(\mat{x}_*)$ is reduced by a factor in the range $[1.5,10]$ in every iteration on average; in fact, most runs converge more rapidly than a geometric convergence.

\begin{figure}
	\centering
	\begin{tabular}{ccc}
		\includegraphics[scale=0.21]{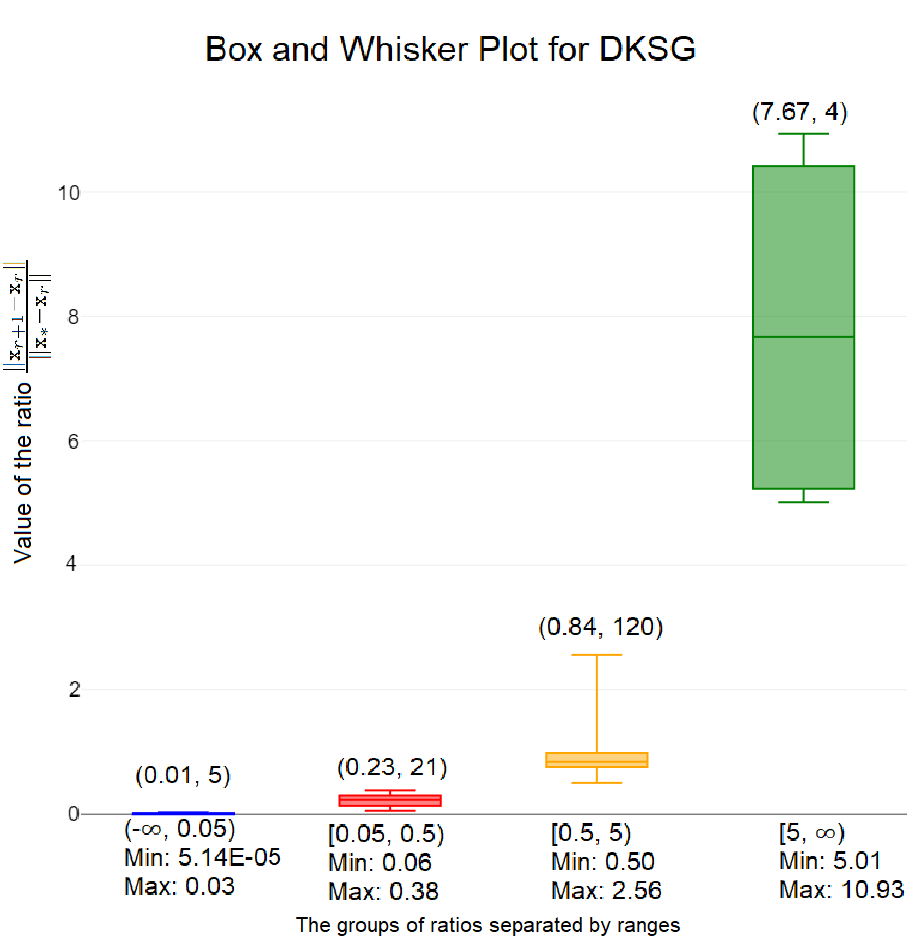} & \hspace{1.5cm} &
		\includegraphics[scale=0.21]{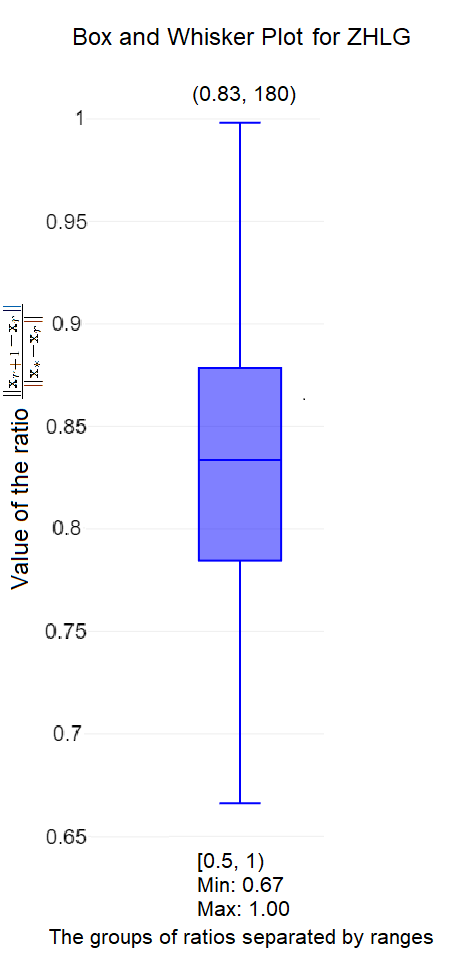} \\
		(a) &   &(b) \\ 
		
	\end{tabular}
	\caption{(a)~Plot of all ratios $\frac{\norm{\mat{x}_{r+1}-\mat{x}_{r}}}{\norm{\mat{x}_* - \mat{x}_r}}$ for DKSG. The ratios  $\frac{\norm{\mat{x}_{r+1}-\mat{x}_{r}}}{\norm{\mat{x}_* - \mat{x}_r}}$ are grouped into disjoint ranges, and these disjoint ranges are listed along the x-axis. The (median, cardinality) tuple of each group is shown above each box.  (b)~Plot of all ratios $\frac{\norm{\mat{x}_{r+1}-\mat{x}_{r}}}{\norm{\mat{x}_* - \mat{x}_r}}$ for ZHLG.}
	\label{fg:Rs}
\end{figure}

\begin{figure}
	\centering
	\begin{tabular}{cc}
		
		\includegraphics[scale=.5]{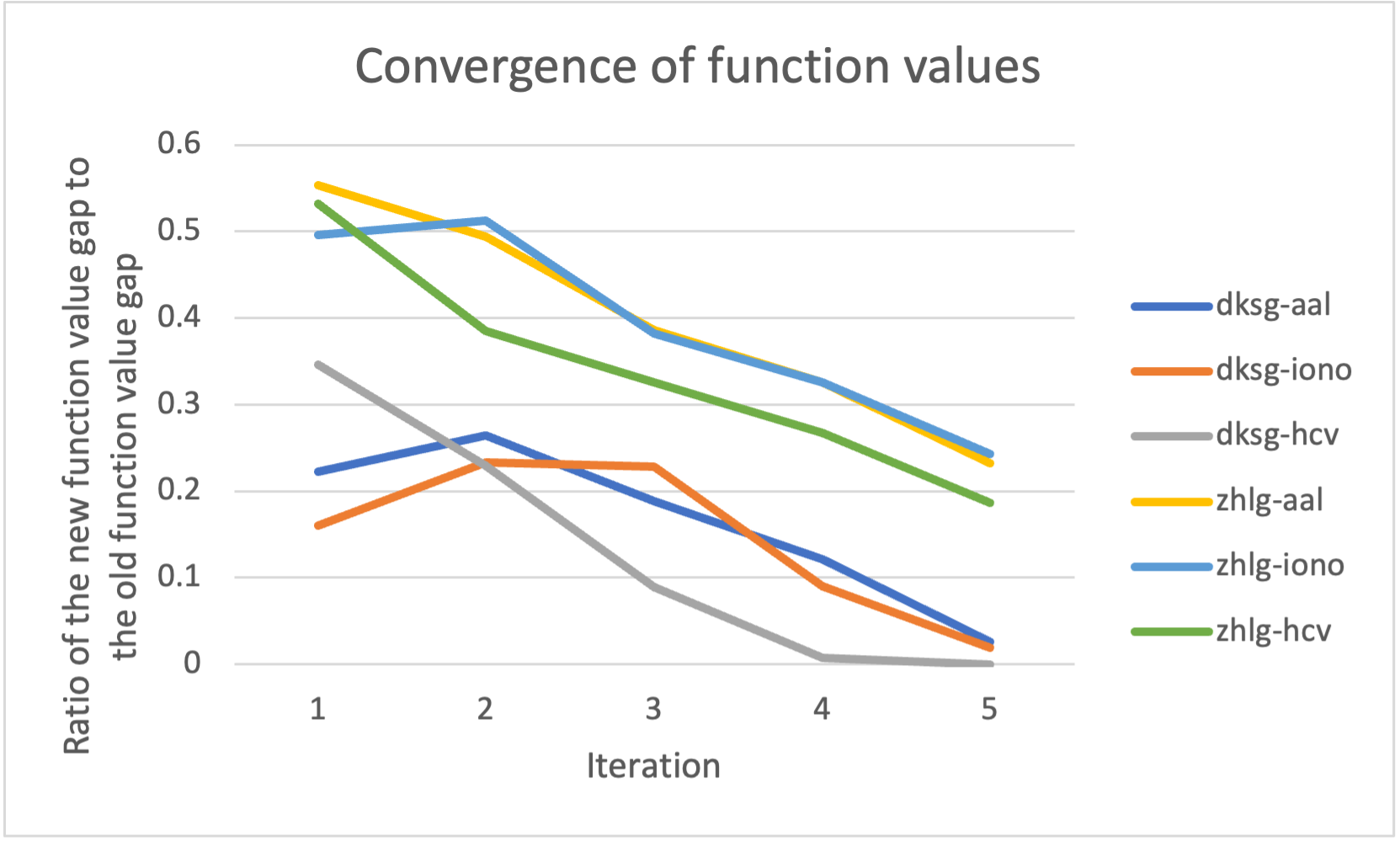} \\
		
	\end{tabular}
	\caption{Each data point is the median of ten runs. Plot of the median $\frac{f(\mat{x}_{r+1})-f(\mat{x}_*)}{f(\mat{x}_r)-f(\mat{x}_*)}$ against $r$.}
	\label{fg:cosines-0}
\end{figure}

\begin{figure}
	\centering
	\begin{tabular}{cccc}
		\includegraphics[scale=0.75]{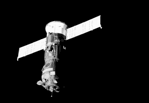} & 
		\includegraphics[scale=0.75]{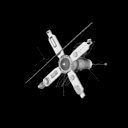} & 
		\includegraphics[scale=0.75]{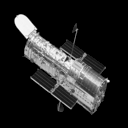} \\
		nph & sts & hst
	\end{tabular}
	\caption{Three space images.}
	\label{fg:app-space}
\end{figure}

\subsection{Image deblurring}

We experimented with four space images nph, sts, and hst from~\cite{IRTDATA, nph}.  We down-sample and sparsify by setting nearly black pixels (with intensities such as 10 or below, or 20 or below) to black.  Figure~\ref{fg:app-space} shows the four resulting space images: nph is $149\times 103$, and both sts and hst are $128\times 128$.   These images are sparse (i.e., many black pixels); therefore, they are good for the working of SolveNNQ.

For space images, it is most relevant to use blurring due to atmospheric turbulence.  In the following, we describe a point spread function in the literature that produces such blurring effects~\cite{M67}.  Let $I$ denote a two-dimensional $r_0 \times c_0$ pixel array (an image).  We call the pixel at the $a$-th row and $b$-th column the $(a,b)$-pixel.  Let $I_{a,b}^*$ denote the intensity of the $(a,b)$-pixel in the original image.  Let $I'_{a,b}$ denote its intensity after blurring.  

The atmospheric turbulence point spread function determines how to distribute the intensity $I^*_{a,b}$ to other pixels.  At the same time, the $(a,b)$-pixel receives contributions from other pixels.  The sum of these contributions and the remaining intensity at the $(a,b)$-pixel gives $I'_{a,b}$.   The weight factor with respect to an $(a,b)$-pixel and a $(c,d)$-pixel is $K \cdot \mathrm{exp}\bigl(-\frac{(a-c)^2 + (b-d)^2}{2\sigma^2}\bigr)$ for some positive constant $K$.  It means that the contribution of the $(a,b)$-pixel to $I'_{c,d}$ is $K\cdot \mathrm{exp}\bigl(-\frac{(a-c)^2 + (b-d)^2}{2\sigma^2}\bigr) \cdot I^*_{a,b}$.  We assume the point spread function is spatially invariant which makes the relation symmetric; therefore, the contribution of the $(c,d)$-pixel to $I'_{a,b}$ is $K\cdot \mathrm{exp}\bigl(-\frac{(a-c)^2 + (b-d)^2}{2\sigma^2}\bigr) \cdot I^*_{c,d}$.  

The atmospheric turbulence point spread function needs to be truncated as its support is infinite.  Let $B$ be a $(2\sigma+1) \times (2\sigma+1)$ square centered at  the $(a,b)$-pixel.  The truncation is done so that the $(a,b)$-pixel has no contribution outside $B$, and no pixel outside $B$ contributes to the $(a,b)$-pixel.  For any $s,t \in [-\sigma,\sigma]$, the weight of the entry of $B$ at the $(a+s)$-th row and the $(b+t)$-th column is $K\cdot \mathrm{exp}\bigl(-\frac{s^2 + t^2}{2\sigma^2}\bigr)$.  We divide every entry of $B$ by the total weight of all entries in $B$ because the contributions of the $(a,b)$-pixel to other pixels should sum to 1.   The coefficient $K$ no longer appears in the normalized weights in $B$, so we do not need to worry about how to set $K$.

How do we produce the blurring effects as a multiplication of a matrix and a vector?   We first transpose the rows of $I$ to columns.  Then, we stack these columns in the row order in $I$ to form a long vector $\mat{x}$.  That is, the first row of $I$ becomes the top $c_0$ entries of $\mat{x}$ and so on.   Let $M_{a,b}$ be an array with the same row and column dimensions as $I$.  Assume for now that the square $B$ centered at the $(a,b)$-entry of $M_{a,b}$ is fully contained in $M_{a,b}$.  The entries of $M_{a,b}$ are zero if they are outside $B$; otherwise, the entries of $M_{a,b}$ are equal to the normalized weights at the corresponding entries in $B$.  Then, we concatenate the rows of $M_{a,b}$ in the row order to form a long row vector; this long row vector is the $((a-1)c_0 + b)$-th row of a matrix $\mat{A}$.  Repeating this for all $M_{a,b}$ defines the matrix $\mat{A}$.  The product of the $((a-1)c_0 + b)$-th row of $\mat{A}$ and $\mat{x}$ is a scalar, which is exactly the sum of the contributions of all pixels at the $(a,b)$-pixel, i.e., the intensity $I'_{a,b}$.  Consider the case that some entries of $B$ lie outside $M_{a,b}$.  If an entry of $B$ is outside $M_{a,b}$, we find the vertically/horizontally nearest boundary entry of $M_{a,b}$ and add to it the normalized weight at that ``outside'' entry of $B$.  This ensures that the total weight within $M_{a,b}$ still sums to 1, thus making sure that no pixel intensity is lost.  Afterwards, we linearize $M_{a,b}$ to form a row of $\mat{A}$ as before.  In all, $\mat{Ax}$ is the blurred image by the atmospheric turbulence point spread function.

Working backward, we solve an NNLS problem to deblur the image: find the non-negative $\mat{x}$ that minimizes $\norm{\mat{Ax}-\mat{b}}^2$, where $\mat{b}$ is the vector obtained by transposing the rows of the blurred image into columns, followed by stacking these columns in the row order in the blurred image.

In running SolveNNQ on the space images, we compute the gradient vector at the zero vector and select the $20\tau$ most negative coordinates.  (Recall that $\tau = 4\ln^2 \nu$ and $\nu$ is the total number of pixels in an image in this case).  We call {\tt quadprog} with these $20\tau$ most negative coordinates as the only free variables and obtain the initial solution $\mat{x}_1$.  Afterwards, SolveNNQ iterates as before.

\begin{figure}
	\centering
	{\tiny
		\begin{tabular}{cccc}
			\includegraphics[scale=0.6]{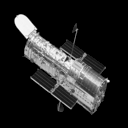} & 
			\includegraphics[scale=0.6]{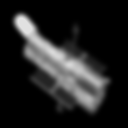} & 
			\includegraphics[scale=0.6]{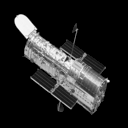} &
			\includegraphics[scale=0.6]{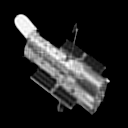} \\
			hst, original & hst, $\sigma=2.5$ & SolveNNQ,  0.002, 21.6s & FISTABT, 0.1, 11.5s \\
			\includegraphics[scale=0.6]{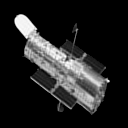} &
			\includegraphics[scale=0.6]{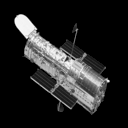} &
			\includegraphics[scale=0.6]{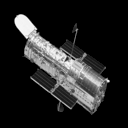} &
			\includegraphics[scale=0.6]{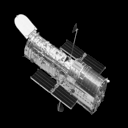} \\
			SBB, 0.08, 600s & FNNLS, $2\times 10^{-7}$, 694s  & {\tt lsqnonneg}, $7\times 10^{-12}$, 1610s & {\tt quadprog}, 0.002, 72.1s 
		\end{tabular}
	}
	\caption{The upper left two images are the original and the blurred image.  Under the output of each software, the first number is the relative mean square error of the intensities of the corresponding pixels between the output and the original, and the second number is the running time.}
	\label{fg:blur-hst}
\end{figure}

\begin{table}
	\centering
	{\footnotesize
		\begin{tabular}{|l|c|r|r|r|r|r|r|}
			\hline
			& & \multicolumn{2}{c|}{SolveNNQ} & \multicolumn{2}{c|}{FISTABT} & \multicolumn{2}{c|}{SBB}  \\
			\hline
			Data & $\sigma$ & rel.~err & time & rel.~err & time & rel.~err & time \\ 
			\hline
			nph & 1 & $10^{-5}$ & 3s & 0.009 & 2s & $3\times 10^{-5}$ & 171s  \\
			& 1.5 & $4\times 10^{-4}$ & 5.6s & 0.04 & 4.3s & 0.009 & 600s  \\
			& 2 & $7\times 10^{-4}$ & 8.2s & 0.06 & 6.8s & 0.02 & 600s\\
			\hline
			sts & 1 & $2\times 10^{-5}$ & 2.1s & 0.01 & 2.3s & $4\times 10^{-5}$ & 206s \\
			& 1.5 & $4\times 10^{-4}$ & 4.9s & 0.06 & 4.7s & 0.01 & 600s \\
			& 2 & $5\times 10^{-4}$ & 9.7s & 0.1 & 7.1s & 0.04 & 600s  \\
			\hline
			hst & 1 & $2\times 10^{-5}$ & 2.5s & 0.02 & 2.3s & $4\times 10^{-5}$ & 275s \\
			& 1.5 & 0.001 & 6.5s& 0.06 & 4.7s & 0.02 & 600s \\
			& 2 & 0.001 & 11.1s & 0.1 & 7.4s & 0.05 & 600s  \\
			\hline
	\end{tabular}}
		\centerline{
		\begin{tabular}{c}
		\end{tabular}
	}
	{\footnotesize
	\begin{tabular}{|l|c|r|r|r|r|r|r|}
			\hline
		&  & \multicolumn{2}{c|}{FNNLS} & \multicolumn{2}{c|}{{\tt lsqnonneg}} & \multicolumn{2}{c|}{{\tt quadprog}} \\
		\hline
		Data & $\sigma$  & rel.~err & time & rel.~err & time & rel.~err & time \\ 
		\hline
		nph & 1  & $7\times 10^{-11}$ & 15.4s & $10^{-13}$ & 25s & $3\times 10^{-5}$ & 8.8s \\
		& 1.5  & $3\times 10^{-8}$ & 60s & $10^{-12}$ & 105s & $5\times 10^{-4}$ & 17.9s \\
		& 2  & $6\times 10^{-8}$ & 94s & $2\times 10^{-12}$ & 181s & $8\times 10^{-4}$ & 33.3s \\
		\hline
		sts & 1  & $8\times 10^{-11}$ & 14s & $9\times 10^{-14}$ & 22.7s & $3\times 10^{-5}$ & 9.4s \\
		& 1.5  & $10^{-8}$ & 48.9s & $2\times 10^{-12}$ & 87s & $4\times 10^{-4}$ & 20.4s \\
		& 2  & $3\times 10^{-8}$ & 80s & $2\times 10^{-12}$ & 156s & 0.001 & 38.6s \\
		\hline
		hst & 1 & $10^{-10}$ & 79.4s & $2\times 10^{-13}$ & 182s & $3\times 10^{-5}$ & 9.4s \\
		& 1.5 & $10^{-7}$ & 272s & $3\times 10^{-12}$ & 651s & 0.001 & 20.9s  \\
		& 2 & $3\times 10^{-7}$ & 459s & $4\times 10^{-12}$ & 1060s & 0.003 & 47.2s \\
		\hline
\end{tabular}}
	\caption{Experimental results for some space images.  In the column for each software, the number on the left is the relative mean square error, and the number on the right is the running time.}
	\label{tb:blur}
\end{table}

\begin{figure}
	\centering
	\begin{tabular}{cccc}
		\includegraphics[scale=0.55]{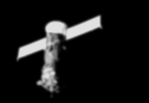} & 	
		\includegraphics[scale=0.55]{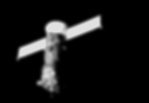} & 	
		\includegraphics[scale=0.55]{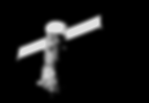} &
		\includegraphics[scale=0.55]{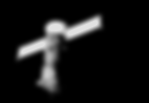} \\			
		\includegraphics[scale=0.55]{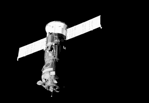} & 
		\includegraphics[scale=0.55]{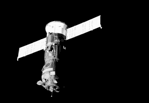} &
		\includegraphics[scale=0.55]{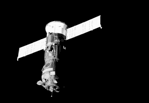} &
		\includegraphics[scale=0.55]{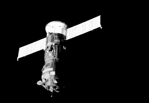} \\ 
		(a) & (b) & (c) & (d)
	\end{tabular}
	\caption{Blurring and deblurring nph: (a)~$\sigma=1$ and relative mean square error = $10^{-5}$, (b)~$\sigma=1.5$ and relative mean square error = $4\times 10^{-4}$, (c)~$\sigma=2$ and relative mean square error = $7\times 10^{-4}$, (d)~$\sigma=2.5$ and relative mean square error = $4\times 10^{-4}$.}	
	\label{fg:nph}
\end{figure}

\begin{figure}
	\centering
	\begin{tabular}{cccc}
		\includegraphics[scale=0.6]{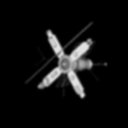} & 	
		\includegraphics[scale=0.6]{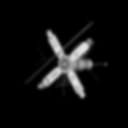} & 	
		\includegraphics[scale=0.6]{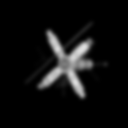} &
		\includegraphics[scale=0.6]{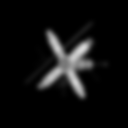} \\			
		\includegraphics[scale=0.6]{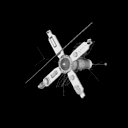} & 
		\includegraphics[scale=0.6]{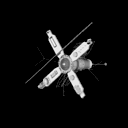} &
		\includegraphics[scale=0.6]{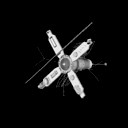} &
		\includegraphics[scale=0.6]{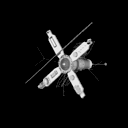} \\
		(a) & (b) & (c) & (d)
	\end{tabular}
	\caption{Blurring and deblurring sts: (a)~$\sigma=1$ and relative mean square error = $2\times 10^{-5}$, (b)~$\sigma=1.5$ and relative mean square error = $4\times 10^{-4}$, (c)~$\sigma=2$ and relative mean square error = $5 \times 10^{-4}$, (d)~$\sigma=2.5$ and relative mean square error = 0.002.}
	\label{fg:sts}
\end{figure}

\begin{figure}
	\centering
	\begin{tabular}{cccc}
		\includegraphics[scale=0.55]{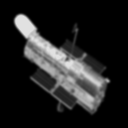} & 	
		\includegraphics[scale=0.55]{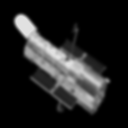} & 	
		\includegraphics[scale=0.55]{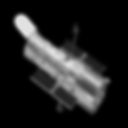} &
		\includegraphics[scale=0.55]{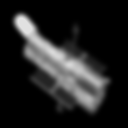} \\			
		\includegraphics[scale=0.55]{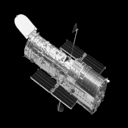} & 
		\includegraphics[scale=0.55]{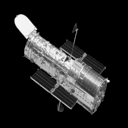} &
		\includegraphics[scale=0.55]{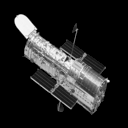} &
		\includegraphics[scale=0.55]{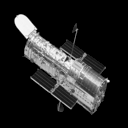} \\
		(a) & (b) & (c) & (d)
	\end{tabular}
	\caption{Blurring and deblurring hst: (a)~$\sigma=1$ and relative mean square error = $2\times 10^{-5}$, (b)~$\sigma=1.5$ and relative mean square error = 0.001, (c)~$\sigma=2$ and relative mean square error = 0.001, (d)~$\sigma=2.5$ and relative mean square error = 0.002.}
	\label{fg:hst}
\end{figure}

We compared SolveNNQ with several software that can solve NNLS, including FISTA with backtracking (FISTABT)~\cite{fistabt,BT09}, SBB~\cite{KSD13}, FNNLS~\cite{LH95}, {\tt lsqnonneg} of MATLAB, and a single call of {\tt quadprog}.  Table~\ref{tb:blur} shows the statistics of the relative mean square errors and the running times. Figure~\ref{fg:blur-hst} shows the deblurred images of hst produced by different software.   When the relative mean square error of an image is well below 0.01, it is visually non-distinguishable from the original.  In our experiments, SolveNNQ is very efficient and it produces a very good output.  
Figures~\ref{fg:nph}--\ref{fg:hst} show some vertical pairs of blurred images and deblurred images recovered by SolveNNQ.  
We emphasize that we do not claim a solution for image deblurring as there are many issues that we do not address; we only seek to demonstrate the potential of the iterative scheme.

\begin{table}
	\centering
	{\footnotesize
		\begin{tabular}{|l|c|r|r|r|r|r|r|}
			\hline
			& & \multicolumn{2}{c|}{SolveNNQ} & \multicolumn{2}{c|}{FISTABT} & \multicolumn{2}{c|}{SBB}\\
			\hline
			& $R$ & rel.~err & time & rel.~err & time & rel.~err & time  \\ 
			\hline
			walk & 2 & $6\times 10^{-6}$ & 9.6s & $2\times 10^{-11}$ & 350s & $10^{-7}$ & 5.4s  \\
			& 3 & $6\times 10^{-6}$ & 19.1s & $2\times 10^{-11}$ & 1340s & $2\times 10^{-7}$ & 16.4s\\
			& 4 & $10^{-5}$ & 33.1s& $3\times 10^{-11}$ & 2630s & $2\times 10^{-7}$ & 43.5s   \\
			\hline
			heli & 2 & $3\times 10^{-6}$ & 1.9s & $2\times 10^{-10}$ & 800s & $3\times 10^{-7}$ & 4.9s \\
			& 3 & $2\times 10^{-6}$ & 4.6s & $2\times 10^{-10}$ & 800s & $2\times 10^{-7}$ & 8.1s  \\
			& 4 & $3 \times 10^{-6}$ & 8.2s & $9\times 10^{-8}$ & 800s & $5\times 10^{-7}$ & 36.8s\\
			\hline
			lion & 2 & $10^{-6}$ & 0.9s & $6\times 10^{-11}$ & 104s & $6 \times 10^{-8}$ & 0.9s  \\
			& 3 & $4 \times 10^{-6}$ & 1.6s & $6\times 10^{-11}$ & 185s & $6\times 10^{-8}$ & 1s\\
			& 4 & $9\times 10^{-6}$ & 3.2s & $5\times 10^{-11}$ & 800s & $2\times 10^{-7}$ & 3.7s \\
			\hline
	\end{tabular}}
	\centerline{
	\begin{tabular}{c}
	\end{tabular}
}{\footnotesize
\begin{tabular}{|l|c|r|r|r|r|r|r|}
	\hline
	&  & \multicolumn{2}{c|}{FNNLS} & \multicolumn{2}{c|}{{\tt lsqnonneg}} & \multicolumn{2}{c|}{{\tt quadprog}}  \\
	\hline
	& $R$ & rel.~err & time & rel.~err & time & rel.~err & time \\ 
	\hline
	walk & 2 & $10^{-13}$ & 144s & $4\times 10^{-15}$ & 220s & $6\times 10^{-6}$ & 26.9s \\
	& 3 & $2\times 10^{-13}$ & 270s & $7\times 10^{-15}$ & 447s & $2\times 10^{-5}$ & 76.3s \\
	& 4 & $6\times 10^{-13}$ & 414s & $10^{-14}$ & 748s & $10^{-5}$ & 163s  \\
	\hline
	heli & 2 & $2\times 10^{-13}$ & 57.2s & $5\times 10^{-15}$ & 93.8s & $4\times 10^{-6}$ & 12.3s \\
	& 3  & $5\times 10^{-13}$ & 103s & $10^{-14}$ & 187s & $6\times 10^{-6}$ & 37.2s \\
	& 4  & $10^{-12}$ & 157s & $2\times 10^{-14}$ & 289s & $8\times 10^{-6}$ & 54.7s \\
	\hline
	lion & 2  & $5\times 10^{-14}$ & 4.8s & $3\times 10^{-15}$ & 6.8s & $4\times 10^{-6}$ & 5.4s \\
	& 3 & $10^{-13}$ & 8.9s & $6\times 10^{-15}$ & 13s & $9\times 10^{-6}$ & 11.7s \\
	& 4  & $4\times 10^{-13}$ & 14.1s & $10^{-14}$ & 21.2s & $7\times 10^{-6}$ & 20.5 \\
	\hline
	\end{tabular}}
	\caption{Results for some thermal images.  In the column for each software, the number on the left is the relative mean square error, and the number on the right is the running time.}
	\label{tb:thermal}
\end{table}

\begin{figure}
	\centering
	{\tiny
		\begin{tabular}{cccc}
			\includegraphics[scale=0.1305]{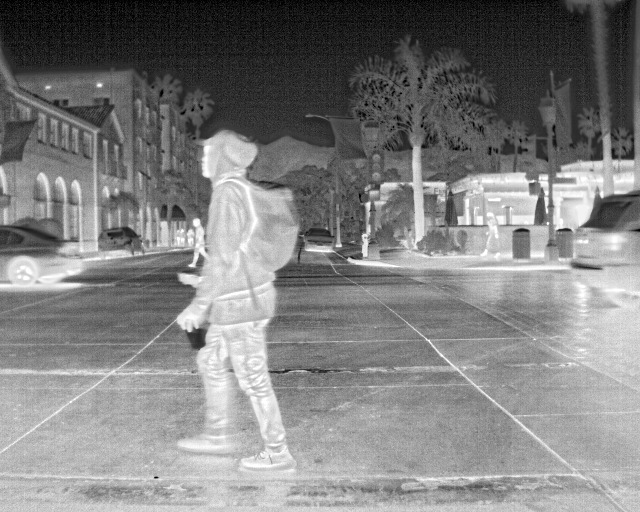} & 
			\includegraphics[scale=0.36975]{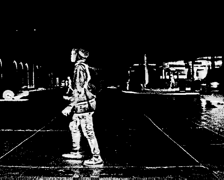} & 
			\includegraphics[scale=0.36975]{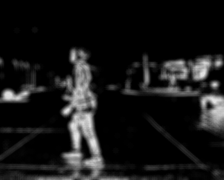} &
			\includegraphics[scale=0.36975]{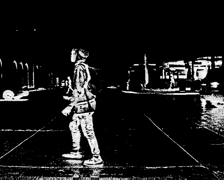} \\
			walk, thermal & walk, extracted & walk, $R=3$ & SolveNNQ, rel.~err = $6\times 10^{-6}$\\
			\includegraphics[scale=0.19575]{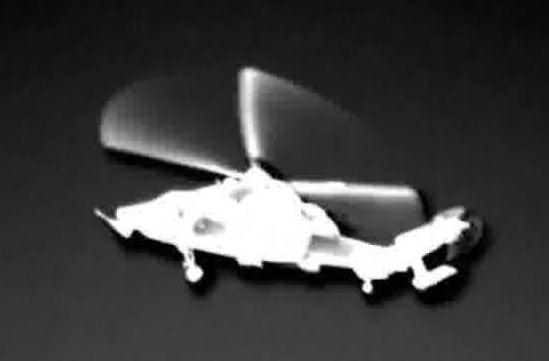} &
			\includegraphics[scale=0.4002]{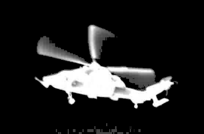} &
			\includegraphics[scale=0.4002]{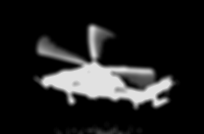} &
			\includegraphics[scale=0.4002]{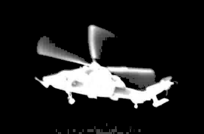} \\
			heli, thermal & heli, extracted & heli, $R=3$ & SolveNNQ, rel.~err = $2 \times 10^{-6}$ \\
			\includegraphics[scale=0.2175]{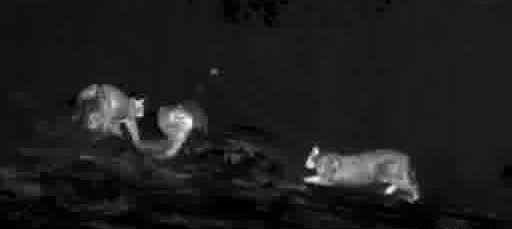} &
			\includegraphics[scale=0.435]{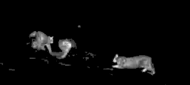} & 
			\includegraphics[scale=0.435]{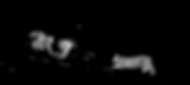} & 
			\includegraphics[scale=0.435]{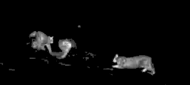} \\
			lion, thermal & lion, extracted & lion, $R=3$ & SolveNNQ, rel.~err = $7 \times 10^{-6}$ 
	\end{tabular}}
	\caption{Thermal images.}
	\label{fg:thermal}
\end{figure}

\begin{figure}
	\centering
	{\scriptsize
		\begin{tabular}{ccc}
			\includegraphics[scale=0.1925]{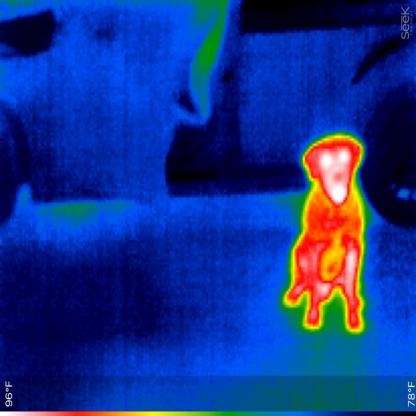} &
			\includegraphics[scale=0.55]{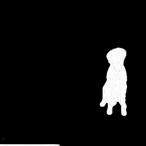} &
			\includegraphics[scale=0.55]{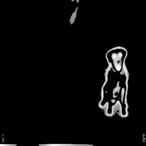} \\
		    (a) Thermal, original. & (b) Red, thresholded. & (c) Green, thresholded. \\
			\includegraphics[scale=0.55]{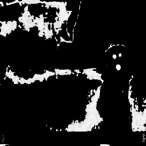} & 
			\includegraphics[scale=0.55]{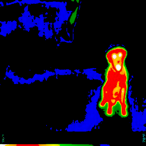} \\
		     (d) Blue, thresholded. & (e) Recombined.
	\end{tabular}}
	\caption{Thermal image of a dog, the three thresholded color panes, and their recombinations.}
	\label{fg:segment}
\end{figure}


\begin{table}
		{\tiny
			\begin{tabular}{|c|c|c|c|c|c|c|c|c|c|c|c|c|}
				\hline
				& \multicolumn{4}{c|}{SolveNNQ} & \multicolumn{4}{c|}{FISTABT} & \multicolumn{4}{c|}{SBB} \\
				\hline
				R & red & green & blue & total & red & green & blue & total & red & green & blue & total \\
				\hline
				2 & 0.5s & 0.6s & 2.6s & 3.7s & 362s & 48.1s & 160s & 570.1s & 2.1s & 0.7s & 1.5s & 4.3s \\
				\hline
				3 & 1s & 1.3s & 5.1s & 7.4s & 663s & 176s & 301s & 1140s & 2.4s & 1.3s & 2.8s & 6.5s \\
				\hline
				4 & 2s & 1.9s & 7.7s & 11.6s & 800s & 583s & 800s & 2183s & 11.5s & 5.5s & 9.8s & 26.8s \\
				\hline 
			\end{tabular}
	} 
	\centerline{
		\begin{tabular}{c}
		\end{tabular}
	}
		{\tiny 
			\begin{tabular}{|c|c|c|c|c|c|c|c|c|c|c|c|c|}
				\hline
				& \multicolumn{4}{c|}{FNNLS} & \multicolumn{4}{c|}{{\tt lsqnonneg}} & \multicolumn{4}{c|}{{\tt quadprog}} \\
				\hline
				R & red & green & blue & total & red & green & blue & total & red & green & blue & total \\
				\hline
				2 &  4.7s & 4.8s & 30.6s & 40.1s & 6.5s & 6.6s & 42.5s & 55.6s & 8.7s & 9.3s & 10.6s & 28.6s \\
				\hline
				3 & 8.4s & 8.1s & 54.1s & 70.6s & 12.2s & 11.3s & 79.2s & 102.7s & 23.1s & 26.6s & 30.6s & 80.3s \\
				\hline
				4 & 14s & 13.2s & 78.1s & 105.3s & 19.4s & 18.6s & 130s & 168s & 38.6s & 41.7s & 42.3s & 122.6s \\
				\hline
			\end{tabular}
	}
	\caption{Running times on the thresholded, blurred color panes of the dog image.}
	\label{tb:dog}
\end{table}

\begin{figure}
	\centering
	{\tiny
		\begin{tabular}{ccccc}
			\includegraphics[scale=0.43]{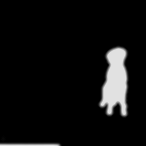} &
			\includegraphics[scale=0.43]{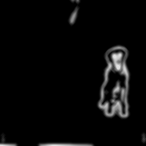} &
			\includegraphics[scale=0.43]{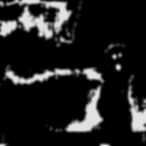} & 
			\includegraphics[scale=0.43]{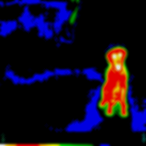} &
			\includegraphics[scale=0.43]{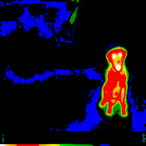} \\
			Red, $R=2$ & Green, $R=2$ & Blue, $R=2$ & Recombined, $R=2$ & SolveNNQ \\
			\includegraphics[scale=0.43]{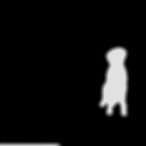} &
			\includegraphics[scale=0.43]{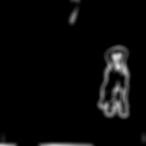} &
			\includegraphics[scale=0.43]{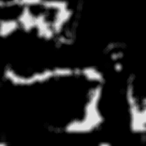} & 
			\includegraphics[scale=0.43]{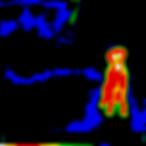} & 
			\includegraphics[scale=0.43]{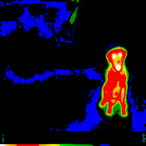} \\
			Red, $R=3$ & Green, $R=3$ & Blue, $R=3$ & Recombined, $R=3$ & SolveNNQ \\
			\includegraphics[scale=0.43]{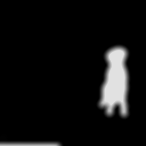} &
			\includegraphics[scale=0.43]{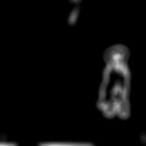} &
			\includegraphics[scale=0.43]{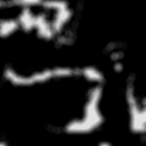} & 
			\includegraphics[scale=0.43]{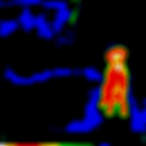} & 
			\includegraphics[scale=0.43]{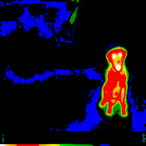} \\
			Red, $R=4$ & Green, $R=4$ & Blue, $R=4$ & Recombined, $R=4$ & SolveNNQ \\
	\end{tabular}}
	\caption{Blurred color panes, their recombinations, and the recombinations of the deblurred output of SolveNNQ.}
	\label{fg:blur}
\end{figure}


We also experimented with some thermal images walk, heli, lion, and dog~\cite{walk,heli-lion,dog}; such thermal images are typically encountered in surveillance.  In each case, we extract the dominant color, down-sample, and apply thresholding to obtain a gray-scale image that omits most of the irrelevant parts.  Afterwards, we apply the uniform out-of-focus point spread function for blurring~\cite{LB91}---the atmospheric turbulence point spread function is not relevant in these cases.  There is a positive parameter $R$.  Given an $(a,b)$-pixel and a $(c,d)$-pixel, if $(a-c)^2 + (b-d)^2 \leq R^2$, the weight factor for them is $1/(\pi R^2)$; otherwise, the weight factor is zero. The larger $R$ is, the blurrier is the image.  The construction of the matrix $\mat{A}$ works as in the atmospheric turbulence point spread function.  

We use a certain number of the most negative coordinates of the gradient at the zero vector, and then we call {\tt quadprog} with these coordinates as the only free variables to obtain the initial solution $\mat{x}_1$.  Afterwards, SolveNNQ iterates as described previously.  Figure~\ref{fg:thermal} shows the thermal images, the extracted images, the blurred images, and the deblurred images produced by SolveNNQ.

We did another experiment on deblurring the red, green, and blue color panes of a dog thermal image separately, followed by recombining the deblurred color panes.   Figure~\ref{fg:segment}(a) shows the original thermal image of a dog. Figures~\ref{fg:segment}(b)--(d) show the gray-scale versions of the red, green, and blue pixels which have been thresholded for sparsification.  Figure~\ref{fg:segment}(e) shows the image obtained by combining Figures~\ref{fg:segment}(b)--(d); the result is in essence an extraction of the dog in the foreground.   We blur the different color panes; Figure~\ref{fg:blur} shows these blurred color panes and their recombinations.  We deblur the blurred versions of the different color panes and then recombine the deblurred color panes.  The rightmost column in Figure~\ref{fg:blur} shows the recombinations of the deblurred output of SolveNNQ.  As in the case of the images walk, heli, and lion, the outputs of SolveNNQ, FISTABT, SBB, FNNLS, {\tt lsqnonneg}, and a single call {\tt quadprog} are visually non-distinguishable from each other.  Table~\ref{tb:dog} shows that the total running times of SolveNNQ over the three color panes are very competitive when compared with the other software.

\section{Analysis for NNLS and ZHLG}
\label{sec:zhlg}

In this section, we prove that  there exists a unit descent direction $\mat{n}_r$ in the $(r-1)$-th iteration of SolveNNQ such that $\frac{\langle \nabla f(\mat{x}_r),\mat{n}_r \rangle}{\langle \nabla f(\mat{x}_r), \mat{n}_* \rangle} \geq \frac{1}{\sqrt{2\nu\ln\nu}}$ for NNLS and ZHLG, where $\mat{n}_* = \frac{\mat{x}_*-\mat{x}_r}{\norm{\mat{x}_*-\mat{x}_r}}$, and $\nu = n$ for NNLS and $\nu = n(n-1)/2$ for ZHLG.  Combined with the assumption that the distance between $\mat{x}_r$ and the minimum in direction $\mat{n}_r$ is at least $\frac{1}{\lambda}\norm{\mat{x}_r-\mat{x}_*}$, we will show that the gap $f(\mat{x}_r) - f(\mat{x}_*)$ is decreased by a factor $e$ after every $O(\lambda\sqrt{\nu\ln\nu})$ iterations.

Let $L$ be an affine subspace in $\real^\nu$.   For every $\mat{x} \in \real^\nu$, define $\mat{x} \!\downarrow\! L$ to be the projection of $\mat{x}$ to the linear subspace parallel to $L$---the translate of $L$ that contains the origin.  Figure~\ref{fg:project} gives an illustration.  Note that $\mat{x} \!\downarrow\!L$ may not belong to $L$; it does if $L$ is a linear subspace.  The projection of $\mat{x}$ to $L$ can be implemented by multiplying $\mat{x}$ with an appropriate $\nu \times \nu$ matrix $\mat{M}$, i.e., $\mat{x} \!\downarrow\!L = \mat{Mx}$.  Thus, $(-\mat{x}) \!\downarrow \! L = \mat{M} \cdot (-\mat{x}) = -\mat{Mx} = 
-(\mat{x} \!\downarrow\!L)$, so we will write $-\mat{x} \!\downarrow\!L$ without any bracket.

Since there are only the non-negativity constraints for NNLS and ZHLG, we do not have the Lagrange multipliers $\mat{u}$ and $\tilde{\mat{u}}$.  So $(\mat{v}_r,\mat{x}_r)$ is the optimal solution of $\max_{\mat{v}} \min_{\mat{x}} g(\mat{v},\mat{x})$ that SolveNNQ solves in the $(r-1)$-th iteration under the constraints that $(\mat{v})_i \geq 0$ for all $i \not\in S_r$.

\begin{figure}
	\centerline{\includegraphics[scale=0.65]{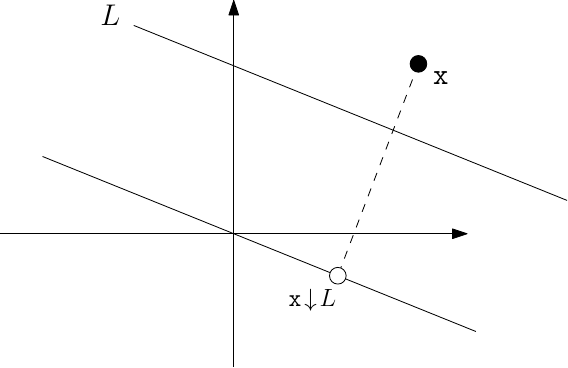}}
	\caption{In $\real^2$, a point $\mat{x}$, a line $L$, and the projection $\mat{x} \!\downarrow\! L$.}
	\label{fg:project}
\end{figure}

Recall that, in the pseudocode of SolveNNQ, $E_r$ is a sequence of indices $i \in S_r$ such that $(\mat{v}_r)_i < 0$, and $E_r$ is sorted in non-decreasing order of $(\mat{v}_r)_i$.  Define $J_r = \mathtt{span}(\{\mat{e}_i : i \in E_r\})$.   The following result follows from the definitions of $E_r$ and $J_r$.

\begin{lemma}
	\label{lem:projpro}
	For all $i \in [\nu]$, if $i \in E_r$, then $(\mat{v}_r \!\downarrow\! J_r)_i = (\mat{v}_r)_i$; otherwise, $(\mat{v}_r \!\downarrow\! J_r)_i = 0$.  Moreover, $-\mat{v}_r \!\downarrow\!J_r$ is a conical combination of $\{\mat{e}_i : i \in E_r\}$.
\end{lemma}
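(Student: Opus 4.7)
The lemma is essentially a bookkeeping statement about orthogonal projection onto a coordinate subspace, so I would prove the two assertions by direct computation.

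For the first assertion, the key observation is that $J_r$ is a \emph{coordinate} subspace: it is spanned by a subset of the standard basis, $\{\mat{e}_i : i \in E_r\}$. Since the vectors $\mat{e}_i$ are mutually orthogonal unit vectors, the orthogonal projection of an arbitrary vector $\mat{v}$ onto $J_r$ splits as a sum of one-dimensional projections, namely $\mat{v}\!\downarrow\!J_r = \sum_{i \in E_r} \iprod{\mat{v},\mat{e}_i}\,\mat{e}_i = \sum_{i \in E_r}(\mat{v})_i\,\mat{e}_i$. Applying this to $\mat{v}_r$ reads off the two cases claimed: the $i$-th coordinate of $\mat{v}_r\!\downarrow\!J_r$ equals $(\mat{v}_r)_i$ when $i \in E_r$ and equals $0$ otherwise.

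For the second assertion, I would invoke the definition of $E_r$ in line~\ref{alg:gradient} of SolveNNQ: the indices in $E_r$ are precisely those $i \in S_r$ with $(\mat{v}_r)_i < 0$. Hence $-(\mat{v}_r)_i > 0$ for every $i \in E_r$, and the identity from the previous paragraph, negated, becomes $-\mat{v}_r\!\downarrow\!J_r = \sum_{i \in E_r}\bigl(-(\mat{v}_r)_i\bigr)\mat{e}_i$, which exhibits $-\mat{v}_r\!\downarrow\!J_r$ as a conical combination of $\{\mat{e}_i : i \in E_r\}$ with strictly positive coefficients.

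There is no real obstacle here; the lemma is a linear-algebra triviality, stated and named only because it will be used repeatedly in the sequel (in particular, in establishing the angle bound of Lemma~\ref{lem:angle-cor}, which in turn drives the convergence estimate in Lemma~\ref{lem:full}).
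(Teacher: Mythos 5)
Your proof is correct and matches the paper's intent exactly: the paper gives no explicit proof, stating only that the lemma "follows from the definitions of $E_r$ and $J_r$," and your computation—projection onto the coordinate subspace $J_r = \mathtt{span}(\{\mat{e}_i : i \in E_r\})$ keeps the $E_r$-coordinates and zeroes the rest, and negation yields positive coefficients since $(\mat{v}_r)_i < 0$ on $E_r$—is precisely the routine verification being elided.
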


We show that there exists a subset of indices $i \in E_r$ such that $\langle -\mat{v}_r \!\downarrow\! J_r,\mat{e}_i \rangle$ is bounded away from zero.

\begin{lemma}
	\label{lem:tech-1}
	For any $\alpha \in (0,1]$, there exists $j$ in $E_r$ with rank at most $\alpha |E_r|$ such that for every $i \in E_r$, if $i = j$ or $i$ precedes $j$ in $E_r$, then $\langle \mat{v}_r \!\downarrow\! J_r,\mat{e}_i \rangle^2 \geq \norm{\mat{v}_r \!\downarrow\! J_r}^2 \cdot \alpha/\bigl(j\ln\nu\bigr)$.
\end{lemma}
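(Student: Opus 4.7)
The plan is to reduce the claim to finding a single rank $k \leq \alpha|E_r|$ at which the bound holds, and then argue by contradiction.

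Setting notation, write the elements of $E_r$ in their sorted order as $i_1, i_2, \ldots, i_{|E_r|}$; by the definition of $E_r$, $(\mat{v}_r)_{i_1} \leq (\mat{v}_r)_{i_2} \leq \cdots < 0$, so the squared values satisfy $(\mat{v}_r)_{i_1}^2 \geq (\mat{v}_r)_{i_2}^2 \geq \cdots$. By Lemma~\ref{lem:projpro}, $(\mat{v}_r \!\downarrow\! J_r)_i = (\mat{v}_r)_i$ for $i \in E_r$ and $0$ otherwise; hence $S := \norm{\mat{v}_r \!\downarrow\! J_r}^2 = \sum_k (\mat{v}_r)_{i_k}^2$ and $\cos^2\angle(\mat{v}_r \!\downarrow\! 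J_r, \mat{e}_{i_k}) = (\mat{v}_r)_{i_k}^2/S$. Because the squared coordinates are non-increasing along the order of $E_r$, it suffices to exhibit a rank $k \leq \alpha|E_r|$ with $(\mat{v}_r)_{i_k}^2 \geq \alpha S/(k \ln \nu)$; then every earlier rank enjoys the same lower bound automatically, and we take $j = i_k$.

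To produce such a $k$, suppose the contrary: $(\mat{v}_r)_{i_k}^2 < \alpha S/(k \ln \nu)$ for every $k \in \{1, \ldots, N\}$ with $N = \lfloor \alpha|E_r|\rfloor$. Summing this strict inequality gives $\sum_{k=1}^{N}(\mat{v}_r)_{i_k}^2 < (\alpha S/\ln \nu)\, H_N$, where $H_N$ is the $N$-th harmonic number. For $k > N$, monotonicity together with the failure of the bound at $k=N$ yields $(\mat{v}_r)_{i_k}^2 \leq (\mat{v}_r)_{i_N}^2 < \alpha S/(N\ln \nu)$, so the tail contributes at most $(|E_r|-N)\,\alpha S/(N\ln \nu)$. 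Adding both pieces and dividing by $S$ leads to an inequality of the form $\ln \nu < \alpha H_N + \alpha(|E_r|-N)/N$, and using $H_N \leq 1+\ln N \leq 1+\ln \nu$ together with $(|E_r|-N)/N \leq (1-\alpha)/\alpha + O(1/|E_r|)$ reduces this to $(1-\alpha)\ln \nu < O(1)$, which is the contradiction the argument needs.

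The main technical obstacle is bookkeeping the boundary contributions cleanly so the final constants match the statement exactly. Handling the floor in $N$, together with the mild edge cases (when $\alpha|E_r|$ is very small the statement is essentially vacuous, and when $\alpha$ is extremely close to $1$ the first rank alone satisfies $(\mat{v}_r)_{i_1}^2 \geq S/|E_r|$ and gives the bound directly), requires care but does not change the overall structure of the argument.
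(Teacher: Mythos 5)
Your opening reduction --- exhibit a single rank $k \leq \alpha|E_r|$ with $\cos^2\angle(\mat{v}_r\!\downarrow\!J_r,\mat{e}_{i_k}) \geq \alpha/(k\ln\nu)$ and let the monotonicity of the sorted squared coordinates handle all earlier ranks --- is exactly the paper's reduction. The gap is in how you produce that rank. The paper uses a direct one-sided comparison with no tail term: the values $\cos^2\angle(\mat{v}_r\!\downarrow\!J_r,\mat{e}_{i_k})$ are non-increasing in $k$ and sum to $1$, so the first $\lceil\alpha|E_r|\rceil$ of them sum to at least $\alpha$, while $\sum_{k}\alpha/(k\ln\nu)$ taken over \emph{all} of $E_r$ is at most $\alpha$; comparing these two sums over just the first $\lceil\alpha|E_r|\rceil$ ranks forces some rank at which the cosine bar dominates the harmonic bar, and that rank is $j$. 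Your contrapositive version instead upper-bounds the \emph{total} mass $1$ by a head sum plus a tail sum, and the crude tail bound $(|E_r|-N)\cdot\alpha/(N\ln\nu)$ combined with the honest estimate $H_N \leq 1+\ln N$ leaves you with $(1-\alpha)\ln\nu < O(1)$. That is not a contradiction once $\alpha > 1 - c/\ln\nu$, and this regime is not a removable corner case: in the application $\alpha = \min\{1,\tau/|E_r|\}$ equals $1$ whenever $|E_r| \leq \tau$ and can otherwise be arbitrarily close to $1$.

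Your proposed patch for that regime is also incorrect. The inequality $(\mat{v}_r)_{i_1}^2 \geq S/|E_r|$ gives only $\cos^2\angle(\mat{v}_r\!\downarrow\!J_r,\mat{e}_{i_1}) \geq 1/|E_r|$, whereas the lemma demands a lower bound of $\alpha/\ln\nu$ at rank $1$; the former implies the latter only when $|E_r| \leq \ln\nu/\alpha$, and $|E_r|$ can be as large as $\nu$. The repair is to abandon the tail entirely and argue as the paper does: lower-bound the head of the cosine histogram by $\alpha$ via sortedness, upper-bound the head of the harmonic histogram by the full harmonic sum, and apply the pigeonhole between those two quantities. That closes the argument uniformly in $\alpha$ with no case split.
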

\begin{proof}
	Consider a histogram $H_1$ of $\alpha/(i\ln \nu)$ against $i \in E_r$.   The total length of the vertical bars in $H_1$ is equal to $\sum_{i \in E_r} \alpha/(i\ln \nu) \,\, \leq \,\, (\alpha/\ln \nu) \cdot \sum_{i=1}^\nu  1/i \,\, \leq \,\, \alpha$.  Consider another histogram $H_2$ of $\langle \mat{v}_r \!\downarrow\! J_r,\mat{e}_i\rangle^2/\norm{\mat{v}_r \!\downarrow\! J_r}^2$ against $i \in E_r$.  The total length of the vertical bars in $H_2$ is equal to $\sum_{i\in E_r} \langle \mat{v}_r \!\downarrow\! J_r,\mat{e}_i\rangle^2/\norm{\mat{v}_r \!\downarrow\! J_r}^2$, which is equal to 1 because  $\mat{v}_r\!\downarrow\!J_r \in J_r$.  Recall that the indices of $E_r$ are sorted in non-decreasing order of $({\mat{v}}_r)_i$.  As $(\mat{v}_r\!\downarrow\!J_r)_i = (\mat{v}_r)_i < 0$ for all $i \in E_r$, the ordering in $E_r$ is the same as the non-increasing order of $\langle \mat{v}_r\!\downarrow\!J_r,\mat{e}_i \rangle^2/\norm{\mat{v}_r\!\downarrow\!J_r}^2 = (\mat{v}_r\!\downarrow\!J_r)_i^2/\norm{\mat{v}_r\!\downarrow\!J_r}^2$.  Therefore, the total length of the vertical bars in $H_2$ for the first $\alpha |E_r|$ indices is at least $\alpha$.  It implies that when we scan the indices in $E_r$ from left to right, we must encounter an index $j$ among the first $\alpha|E_r|$ indices such that the vertical bar in $H_2$ at $j$ is not shorter than the vertical bar in $H_1$ at $j$, i.e., $\langle \mat{v}_r\!\downarrow\!J_r,\mat{e}_i \rangle^2/\norm{\mat{v}_r\!\downarrow\!J_r}^2 \geq \alpha/(j\ln \nu)$ for every $i$ that precedes $j$ in $E_r$.  This index $j$ satisfies the lemma.
\end{proof}

Next, we boost the angle bound implied by Lemma~\ref{lem:tech-1} by showing that $-\mat{v}_r \!\downarrow\!J_r$ makes a much smaller angle with some conical combination of $\{\mat{e}_i : i \in E_r\}$.

\begin{figure}
	\centerline{\includegraphics[scale=0.65]{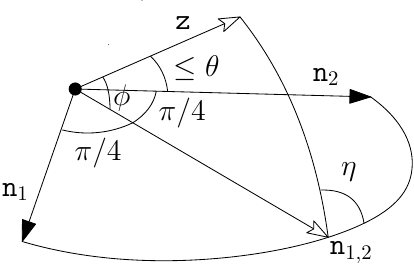}}
	\caption{The vector $\mat{n}_{1,2}$ bisects the right angle $\angle (\mat{n}_1,\mat{n}_2)$.  The angle $\eta$ is at least $\pi/2$.}
	\label{fg:nice}
\end{figure}

\begin{lemma}
	\label{lem:nice}
	Take any $c \leq 1/\sqrt{2}$.  Let $\mat{z}$ be a vector in $\real^D$ for some $D \geq 2$.  Suppose that there is a set $V$ of unit vectors in $\real^D$ such that the vectors in $V$ are mutually orthogonal, and for every $\mat{n} \in V$, $\cos \angle (\mat{n},\mat{z}) \geq c|V|^{-1/2}$.  There exists a conical combination $\mat{y}$ of the vectors in $V$ such that $\cos\angle (\mat{y},\mat{z}) \geq c/\sqrt{2}$.
\end{lemma}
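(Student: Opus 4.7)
The plan is to choose $\mat{y}$ to be the orthogonal projection of $\mat{z}$ onto $\mathtt{span}(V)$ and to observe that this projection is automatically a conical combination of $V$ under the hypothesis. Write $V = \{\mat{n}_1,\ldots,\mat{n}_k\}$ with $k = |V|$, and focus on the intended regime $c > 0$ (if $c \leq 0$, the target angle $\arccos(c/\sqrt{2})$ is at least $\pi/2$ and the lemma is essentially vacuous in the application). The inequality $\angle(\mat{n}_i,\mat{z}) \leq \arccos(c/\sqrt{k})$ together with $\norm{\mat{n}_i} = 1$ translates into $\langle \mat{n}_i,\mat{z}\rangle \geq (c/\sqrt{k})\,\norm{\mat{z}} > 0$ for every $i$.

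Define $\mat{y} = \sum_{i=1}^{k} \langle \mat{n}_i,\mat{z}\rangle\,\mat{n}_i$. Every coefficient is strictly positive, so $\mat{y}$ is a conical combination of the vectors in $V$, as required. Because the $\mat{n}_i$ are orthonormal, a direct Pythagoras-style computation yields
\[
\norm{\mat{y}}^2 \,=\, \sum_{i=1}^{k} \langle \mat{n}_i,\mat{z}\rangle^2 \,\geq\, k \cdot \frac{c^2}{k}\,\norm{\mat{z}}^2 \,=\, c^2 \norm{\mat{z}}^2,
\]
and $\langle \mat{y},\mat{z}\rangle = \sum_{i} \langle \mat{n}_i,\mat{z}\rangle^2 = \norm{\mat{y}}^2$. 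Combining these,
\[
\cos\angle(\mat{y},\mat{z}) \,=\, \frac{\langle \mat{y},\mat{z}\rangle}{\norm{\mat{y}}\,\norm{\mat{z}}} \,=\, \frac{\norm{\mat{y}}}{\norm{\mat{z}}} \,\geq\, c \,\geq\, \frac{c}{\sqrt{2}},
\]
which gives $\angle(\mat{y},\mat{z}) \leq \arccos(c/\sqrt{2})$.

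I do not expect a serious obstacle: once the projection of $\mat{z}$ is selected as $\mat{y}$, the remaining work is a few lines of bookkeeping with orthonormality. The only point that genuinely uses the hypothesis is the positivity of each $\langle \mat{n}_i,\mat{z}\rangle$, which is precisely what delivers the \emph{conical} nature of the combination. Figure~\ref{fg:nice} suggests the authors instead think of the construction as an iterative bisector merge of orthogonal pairs $(\mat{n}_1+\mat{n}_2)/\sqrt{2}$; such a pairwise merging typically loses a $\sqrt{2}$ factor, which is presumably why the target bound is stated as $c/\sqrt{2}$ rather than the slightly tighter $c$ the projection argument would give. Either route proves the lemma as stated, and I would prefer the projection argument for brevity.
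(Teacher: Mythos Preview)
Your projection argument is correct and is genuinely different from the paper's proof. The paper proceeds exactly as you inferred from Figure~\ref{fg:nice}: it restricts to a maximal power-of-two subset $W \subseteq V$, repeatedly replaces orthogonal pairs $\mat{n}_i,\mat{n}_j$ by the unit bisector $(\mat{n}_i+\mat{n}_j)/\sqrt{2}$, and uses the spherical law of cosines to show that each merge improves the cosine by a factor $\sqrt{2}$. After $\log_2|W|$ stages this yields $\cos\angle(\mat{y},\mat{z}) \geq \sqrt{|W|}\cdot c/\sqrt{|V|} \geq c/\sqrt{2}$, the $\sqrt{2}$ loss coming from $|W| \geq |V|/2$.

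Your approach is shorter, avoids spherical trigonometry, and in fact proves the stronger bound $\cos\angle(\mat{y},\mat{z}) \geq c$; propagated through Lemma~\ref{lem:angle-cor} and Lemma~\ref{lem:full} this would shave the constant $2$ under the square root. The paper's construction has the minor feature that the resulting $\mat{y}$ is an \emph{equal-weight} sum of a subset of $V$, but nothing downstream exploits this, so your projection choice is a clean improvement. Your handling of $c\leq 0$ as a degenerate side case is adequate; the lemma is only invoked with $c=\sqrt{\alpha/\ln\nu}>0$, and the paper's own proof likewise does not dwell on that regime.
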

\begin{proof}
	Let $\theta = \arccos\left(c |V|^{-1/2}\right)$.   If $\theta \leq \pi/3$, we can pick any vector $\mat{n} \in V$ as $\mat{y}$ because $\pi/3 \leq \arccos(c/\sqrt{2})$ for $c \leq 1/\sqrt{2}$.  Suppose not.  Let $W$ be a maximal subset of $V$ whose size is a power of 2.  Arbitrarily label the vectors in $W$ as $\mat{n}_1, \mat{n}_2, \ldots$.   Consider the unit vector $\mat{n}_{1,2} = \frac{1}{\sqrt{2}}\mat{n}_1 + \frac{1}{\sqrt{2}}\mat{n}_2$.   Let $\phi = \angle (\mat{n}_{1,2},\mat{z})$.  Refer to Figure~\ref{fg:nice}.   By assumption, $\mat{n}_1 \perp \mat{n}_2$.  Let $\eta$ be the non-acute angle between the plane $\mathrm{span}(\{\mat{n}_1,\mat{n}_2\})$ and the plane $\mathrm{span}(\{\mat{n}_{1,2},\mat{z}\})$.  By the spherical law of cosines, $\cos\theta \leq \cos\angle (\mat{n}_2,\mat{z}) = \cos\phi\cos(\pi/4) + \sin\phi\sin(\pi/4)\cos\eta$.
	Note that $\cos\eta\leq 0$ as $\eta \geq \pi/2$.  It implies that $\cos\phi \geq \sec(\pi/4) \cos\theta = \sqrt{2}\cos\theta$.
	The same analysis holds between $\mat{z}$ and the unit vector $\mat{n}_{3,4} = \frac{1}{\sqrt{2}}\mat{n}_3 + \frac{1}{\sqrt{2}}\mat{n}_4$, and so on.  In the end, we obtain $|W|/2$ vectors $\mat{n}_{2i-1,2i}$ for $i = 1, 2, \ldots,|W|/2$ such that $\angle (\mat{n}_{2i-1,2i},\mat{z}) \leq \arccos\bigl(\sqrt{2}\cos\theta\bigr)$.  Call this the first stage.   Repeat the above with the $|W|/2$ unit vectors $\mat{n}_{1,2}, \mat{n}_{3,4}, \ldots$ in the second stage and so on.  We end up with only one vector in $\log_2 |W|$ stages.  If we ever produce a vector that makes an angle at most $\pi/3$ with $\mat{z}$ before going through all $\log_2 |W|$ stages, the lemma is true.  Otherwise, we produce a vector $\mat{y}$ in the end such that $\cos \angle (\mat{y},\mat{z}) \geq \bigl(\sqrt{2}\bigr)^{\log_2 |W|}\cos\theta \geq \bigl(\sqrt{2}\bigr)^{\log_2 |V|-1}\cos\theta  \geq \sqrt{|V|/2} \cdot \cos\theta = c/\sqrt{2}$.
\end{proof}

\begin{lemma}
	\label{lem:angle-cor}
	The vectors in $\bigl\{\mat{e}_i : \text{$i$ among the first $\tau$ indices in $E_r$}\bigr\}$ have a unit conical combination $\mat{n}_r$ such that $\mat{n}_r$ is a descent direction from $\mat{x}_r$ and $\langle -\mat{v}_r\!\downarrow\!J_r,{\mat{n}}_r \rangle \geq \norm{\mat{v}_r \!\downarrow\! J_r}/\sqrt{2\nu\ln\nu}$.
\end{lemma}
\begin{proof}
	Since $-\mat{v}_r \!\downarrow\!J_r$ is a conical combination of $\bigl\{\mat{e}_i : i \in E_r\bigr\}$, $-\mat{v}_r\!\downarrow\!J_r$ makes a non-obtuse angle with any vector in $\{ \mat{e}_i : i \in E_r\}$.  Then, Lemmas~\ref{lem:tech-1} and~\ref{lem:nice} imply that the vectors in $\bigl\{\mat{e}_i : \text{$i$ among the first $\tau$ indices in $E_r$}\bigr\}$ have a unit conical combination $\mat{n}_r$ such that $\langle -\mat{v}_r\!\downarrow\!J_r,{\mat{n}}_r \rangle \geq \norm{\mat{v}_r\!\downarrow\!J_r} \cdot \sqrt{\alpha/(2\ln\nu)}$, where $\alpha = \tau/|E_r|$.  As $|E_r| \leq \nu$, we have $\alpha \geq 1/\nu$ which implies that $\langle -\mat{v}_r\!\downarrow\!J_r,{\mat{n}}_r \rangle \geq \norm{\mat{v}_r\!\downarrow\!J_r}/\sqrt{2\nu\ln\nu}$.  For any feasible solution $\mat{x}$ and any non-negative values $c_1,\ldots,c_\nu$, $\mat{x} + \sum_{i=1}^\nu c_i\mat{e}_i$ is also a feasible solution.  It follows that $\mat{x}_r + s\mat{n}_r$ is a feasible solution for all $s \geq 0$.   For NNLS and ZHLG, the primal problem has no constraint other than the non-negativity constraints $\mat{x} \geq 0_\nu$.  By the KKT conditions, we have $\partial g(\mat{x}_r)/\partial \mat{x} = \nabla f(\mat{x}_r) - \mat{v}_r = 0_\nu$, which implies that $\nabla f(\mat{x}_r) = \mat{v}_r$.  Moreover, $\langle -\mat{v}_r \!\downarrow\! J_r, \mat{n}_r \rangle = \langle -\mat{v}_r,\mat{n}_r \rangle$ because $\mat{n}_r \in J_r$ and so the component of $-\mat{v}_r$ orthogonal to $J_r$ vanishes in the inner product.  Then, the acuteness of $\angle (-\mat{v}_r\!\downarrow\!J_r,{\mat{n}}_r)$ implies that $\langle \nabla f(\mat{x}_r),\mat{n}_r \rangle < 0$.  In all, $\mat{n}_r$ is a descent direction.
\end{proof}

We use Lemma~\ref{lem:angle-cor} to establish a convergence rate under an assumption.

\begin{lemma}
	\label{lem:full}
	Let ${\mat{n}}_r$ be a unit descent direction from $\mat{x}_r$ that satisfies Lemma~\ref{lem:angle-cor}.  Let $\mat{y}_r$ be the feasible point that minimizes $f$ on the ray from $\mat{x}_r$ in direction ${\mat{n}}_r$.
	\begin{enumerate}[{\em (i)}]
		\item $\frac{\langle \nabla f(\mat{x}_r), \, \mat{n}_r\rangle}{\langle \nabla f(\mat{x}_r), \, \mat{n}_*\rangle}\geq \frac{1}{\sqrt{2\nu\ln\nu}}$, where $\mat{n}_* = \frac{\mat{x}_*-\mat{x}_r}{\norm{\mat{x}_* - \mat{x}_r}}$.
		\item If there exists $\lambda$ such that $\norm{\mat{x}_{r} - \mat{y}_{r}} \geq \frac{1}{\lambda} \norm{\mat{x}_{r} - \mat{x}_*}$, then $\frac{f(\mat{x}_{r+1}) - f(\mat{x}_*)}{f(\mat{x}_r) - f(\mat{x}_*)} \leq 1 - \frac{1}{2\lambda\sqrt{2\nu\ln\nu}}$.
	\end{enumerate}
\end{lemma}	
\begin{proof}
	Irrespective of whether $S_{r+1}$ is computed in line~\ref{alg:exceed-end}~or~\ref{alg:free} of SolveNNQ, $S_{r+1}$ is disjoint from $\supp(\mat{x}_r) \cup \supp(\mat{n}_r)$, which makes Lemma~\ref{lem:bound2} applicable.
	
	Since $\nabla f(\mat{x}_r) = \mat{v}_r$ for an NNLS or  ZHLG problem, we have $\langle -\nabla f(\mat{x}_r),\mat{n}_r) = \langle -\mat{v}_r,\mat{n}_r \rangle = \langle -\mat{v}_r\!\downarrow\!J_r,\mat{n}_r \rangle$.  The last equality comes from the fact that $\mat{n}_r \in J_r$, so the component of $-\mat{v}_r$ orthogonal to $J_r$ vanishes in the inner product.  By Lemma~\ref{lem:angle-cor}, $\langle -\nabla f(\mat{x}_r),{\mat{n}}_r \rangle = \langle -\mat{v}_r\!\downarrow\!J_r,{\mat{n}}_r \rangle \geq \norm{\mat{v}_r\!\downarrow\!J_r} /\sqrt{2\nu\ln\nu}$.
	
	The inequality above takes care of the nominator in the bound in Lemma~\ref{lem:bound2} multiplied by $-1$.   The denominator in the bound in Lemma~\ref{lem:bound2} multiplied by $-\norm{\mat{x}_*-\mat{x}_r}$ is equal to $\langle -\nabla f(\mat{x}_r),\mat{x}_*-\mat{x}_r \rangle = \langle -\mat{v}_r,\mat{x}_*-\mat{x}_r \rangle = \langle -\mat{v}_r\!\downarrow\!J_r,\mat{x}_*-\mat{x}_r \rangle + \langle -\mat{v}_r + \mat{v}_r\!\downarrow\!J_r,\mat{x}_*-\mat{x}_r \rangle$.
	
	Recall that $(\mat{v}_r\!\downarrow\!J_r)_i$ is $(\mat{v}_r)_i$ for $i \in E_r$ and zero otherwise.  Therefore, $(-\mat{v}_r + \mat{v}_r \!\downarrow\!J_r)_i$ is zero for $i \in E_r$ and $-(\mat{v}_r)_i$ otherwise.   If $i \not\in E_r$, then $(\mat{v}_r)_i \geq 0$, which implies that $(-\mat{v}_r + \mat{v}_r\!\downarrow\!J_r)_i = -(\mat{v}_r)_i \leq 0$.  By the complementary slackness, if $(\mat{v}_r)_i > 0$, then $(\mat{x}_r)_i = 0$, which implies that $(\mat{x}_* - \mat{x}_r)_i \geq 0$ as $\mat{x}_*$ is non-negative.  Altogether, we conclude that for $i \in [\nu]$, if $i \in E_r$ or ($i \not\in E_r \, \wedge \, (\mat{v}_r)_i = 0)$, then
	$(-\mat{v}_r + \mat{v}_r\!\downarrow\!J_r)_i \cdot (\mat{x}_* - \mat{x}_r)_i = 0$; otherwise, 
	$(-\mat{v}_r + \mat{v}_r\!\downarrow\!J_r)_i \cdot (\mat{x}_* - \mat{x}_r)_i \leq 0$.
	Therefore, $\langle -\mat{v}_r + \mat{v}_r \!\downarrow\!J_r,\mat{x}_*-\mat{x}_r \rangle \leq 0$.  As a result, $\langle -\nabla f(\mat{x}_r),\mat{x}_*-\mat{x}_r \rangle \leq \langle -\mat{v}_r\!\downarrow\!J_r,\mat{x}_*-\mat{x}_r \rangle \leq \norm{\mat{x}_*-\mat{x}_r} \cdot \norm{\mat{v}_r\!\downarrow\!J_r}$.
	
	Hence, $\frac{\langle \nabla f(\mat{x}_r),\mat{n}_r\rangle}{\langle \nabla f(\mat{x}_r),\mat{n}_*\rangle}\geq \frac{1}{\sqrt{2\nu\ln\nu}}$, where $\mat{n}_* = \frac{\mat{x}_*-\mat{x}_r}{\norm{\mat{x}_* - \mat{x}_r}}$.  This completes the proof of (i).
	
	Substituting (i) into Lemma~\ref{lem:bound2} gives $\frac{f(\mat{x}_r) - f(\mat{x}_{r+1})}{f(\mat{x}_r)-f(\mat{x}_*)} \geq \frac{\norm{\mat{x}_r-\mat{y}_r}}{2\norm{\mat{x}_*-\mat{x}_r}} \cdot \frac{1}{\sqrt{2\nu\ln\nu}}$.  It then follows from the assumption of $\norm{\mat{x}_{r} - \mat{y}_{r}} \geq \norm{\mat{x}_{r} - \mat{x}_*}/\lambda$ in the lemma that $f(\mat{x}_{r+1}) - f(\mat{x}_*) = \bigl(f(\mat{x}_r) - f(\mat{x}_*)\bigr) - \bigl(f(\mat{x}_r) - f(\mat{x}_{r+1})\bigr) 
	\leq \bigl(f(\mat{x}_r) - f(\mat{x}_*) \bigr) - \frac{1}{2\lambda\sqrt{2\nu\ln \nu}} \bigl(f(\mat{x}_r) - f(\mat{x}_*) \bigr)$.
\end{proof}

Let $T(n)$ denote the time complexity of solving a convex quadratic program with $n$ variables and $O(n)$ constraints. It is known that $T(n) = O(n^3 L)$, where $L$ is bounded by the number of bits in the input~\cite{monteiro89}.  Theorem~\ref{thm:main} below gives the performance of SolveNNQ on NNLS and ZHLG.

\begin{theorem}
	\label{thm:main}
	Consider the application of \emph{SolveNNQ} on an \emph{NNLS} problem with $n$ constraints in $\real^d$ or a \emph{ZHLG} problem for $n$ points in $\real^d$.  Let $\nu$ be $n$ for \emph{NNLS} or $n(n-1)/2$ for \emph{ZHLG}.  The initialization of \emph{SolveNNQ} can be done in $T(\beta_0)+O(dn^2)$ time for \emph{NNLS} or $T(\beta_0)+O(n^4)$ time for \emph{ZHLG}.    Suppose that the following assumptions hold.
	\begin{itemize}
		\item Assume that the threshold $\beta_1$ on the total number of iterations is not exceeded.
		\item For all $r \geq 1$, let $\mat{n}_r$ be a unit descent direction from $\mat{x}_r$ that satisfies Lemma~\ref{lem:angle-cor}, assume that $\norm{\mat{x}_{r} - \mat{y}_{r}} \geq \frac{1}{\lambda}\norm{\mat{x}_{r}-\mat{x}_*}$, where $\lambda$ is some fixed value and $\mat{y}_r$ is the feasible point that minimizes $f$ on the ray from $\mat{x}_{r}$ in the direction $\mat{n}_r$.
	\end{itemize}
	Then, for all $r \geq 1$, $f(\mat{x}_{r+i}) - f(\mat{x}_*) \leq e^{-1} (f(\mat{x}_r) - f(\mat{x}_*))$ for some $i = O\bigl(\lambda \sqrt{\nu\log \nu}\bigr)$, and each iteration in \emph{SolveNNQ} takes $T(k + \beta_0)+O(k\nu)$ time, where $k = \max_{r \geq 1}|\supp(\mat{x}_r)|$.
\end{theorem}
\begin{proof}
	Since the threshold $\beta_1$ is not exceeded by assumption, the set $E_r$ contains at least $\tau$ indices for all $r \geq 1$.   Therefore, Lemma~\ref{lem:full} implies that the gap $f(\mat{x}_r)-f(\mat{x}_*)$ decreases by a factor $e$ in $O(\lambda\sqrt{\nu\ln\nu})$ iterations.  It remains to analyze the running time.
	
	For NNLS, $\mat{A}$ is $n \times d$ and we compute $\mat{A}^t\mat{A}$ in $O(dn^2)$ time.  For ZHLG, $\mat{A}$ is $n(n+1)/2 \times n(n-1)/2$ and $\mat{a}$ is a vector of dimension $n(n-1)/2$.  Recall that $\mat{A}^t = \bigl[ \frac{\mu}{2}\mat{U}^t \,\,\, \frac{\rho}{2}\mat{I}_{n(n-1)/2}\bigr]$, where $\mat{U} \in \real^{n\times n(n-1)/2}$ is the incidence matrix for the complete graph.  Therefore, each row of $\mat{A}^t$ has at most three non-zero entries, meaning that $\mat{A}^t\mat{A}$ can be computed in $O(n^4)$ time. 
	
	The initial solution is obtained as follows.  Sample $\beta_0$ indices from $[\nu]$.  The initial active set contains all indices in the range $[\nu]$ except for these sampled indices.  We extract the rows and columns of $\mat{A}^t\mat{A}$ and coordinates of $\mat{a}$ that correspond to these $\beta_0$ sampled indices.  Then, we call the solver in $T(\beta_0)$ time to obtain the initial solution $\mat{x}_1$.  
	
	At the beginning of every iteration, we compute $\nabla f(\mat{x}_r)$ and update the active set.   This step can be done in $O(k\nu)$ time because $\mat{x}_r$ has at most $k$ non-zero entries.  At most $k + \beta_0$ indices are absent from the updated active set because the threshold $\beta_1$ is not exceeded.  We select the rows and columns of $\mat{A}^t\mat{A}$ and coordinates of $\mat{a}$ that correspond to the indices absent from the active set in $O(k^2+ \beta_0^2)$ time.  The subsequent call of the solver takes $T(k + \beta_0)$ time.    So each iteration runs in $T(k + \beta_0)+O(k\nu)$ time. 
	
\end{proof}

We assume in Theorem~\ref{thm:main} that $\norm{\mat{x}_{r} - \mat{y}_{r}} \geq \frac{1}{\lambda}\norm{\mat{x}_{r}-\mat{x}_*}$ for all $r \geq 1$ for ease of presentation.  The geometric-like convergence still holds even if the inequality $\norm{\mat{x}_r-\mat{y}_r} \geq \frac{1}{\lambda} \norm{\mat{x}_r-\mat{x}_*}$ is satisfied in only a constant fraction of any sequence of consecutive iterations (the iterations in that constant fraction need not be consecutive).  

\section{Analysis for DKSG}
\label{sec:second}

Let $\mat{p}_1,\ldots,\mat{p}_n$ be the input points.  So there are $n(n-1)/2$ edges connecting them, and the problem is to determine the weights of these edges.    Suppose that we have used SolveNNQ to obtain the current solution $(\mat{u}_r,\mat{v}_r,\mat{x}_r)$ with respect to the active set $S_r$.  We are to analyze the convergence rate of SolveNNQ when computing the next solution $(\mat{u}_{r+1},\mat{v}_{r+1},\mat{x}_{r+1})$ with respect to the  next active set $S_{r+1}$.

In our analysis, we assume that the next active set $S_{r+1}$ is computed as $[n(n-1)/2] \setminus (\supp(\mat{x}_r) \cup E'_r)$, which is the same assumption that we made in the case of NNLS and ZHLG.  Recall that $E'_r$ is the subset of $E_r$ that SolveNNQ extracted so that variables in $E'_r$ will be freed in the next iteration.

To analyze the convergence rate achieved by $(\mat{u}_{r+1},\mat{v}_{r+1},\mat{x}_{r+1})$, we transform the original system to an extended system and analyze an imaginary run of SolveNNQ on this extended system.   First, we convert the inequality constraints to equality constraints by introducing $n$ slack variables.   Second, we need another large entry in each row of  the contraint matrix for a technical reason.  So we introduce another $n$ dummy variables.  We give the details of the transformation in the next section.

\subsection{Extended system}

The original variables are $(\mat{x})_j$ for $j \in [n(n-1)/2]$.  We add one slack variable per constraint in $\mat{Bx} \geq 1_n$.  This gives a vector $\mat{s} \in \real^n$ of slack variables.  We also add a vector $\mat{d} \in \real^n$ of dummy variables, one for each constraint in $\mat{Bx} \geq 1_n$.   As a result, the ambient space expands from $\real^{n(n-1)/2}$ to $\real^{n(n+3)/2}$.  We use $\mat{x}$ to denote a point in $\real^{n(n-1)/2}$.  The idea is to map $\mat{x}$ to a point $\hat{\mat{x}}$ in $\real^{n(n+3)/2}$ by appending slack and dummy variables:
\[
\hat{\mat{x}} = \begin{bmatrix}
	\mat{x} \\
	\mat{s} \\
	\mat{d}
\end{bmatrix}.
\]
%
Let $M$ be an arbitrarily large number.
Define the following matrix $\hat{\mat{B}} \in \real^{n \times n(n+3)/2}$:
\[
\hat{\mat{B}} =
\begin{bmatrix}
	\, \mat{B} & & -\mat{I}_n &&  M\mat{I}_n  \, 
\end{bmatrix}.
\]
The constraints for the extended system in $\real^{n(n+3)/2}$ are:
\[
\hat{\mat{B}}\hat{\mat{x}}  = (1+M)\, 1_n, \quad
\hat{\mat{x}} \geq 0_{n(n+3)/2}.
\]
That is, for each point $\mat{p}_i$, the sum of the weights of edges incident to $\mat{p}_i$, minus $(\mat{s})_i$, and plus $M(\mat{d})_i$ must be equal to $1+M$.   If we can force $\mat{d}$ to be $1_n$, we can recover the original requirement of the sum of the weights of edges incident to $\mat{p}_i$ being at least 1 because the slack variable $(\mat{s})_i$ can cancel any excess weight.

We can make $\mat{d}$ nearly $1_n$ by changing the objective function from $\mat{x}^t\mat{A}^t\mat{Ax}$ to $\mat{x}^t\mat{A}^t\mat{Ax} + \frac{1}{2}M^3\norm{\mat{d}-1_n}^2$.  Although $\mat{d}$ may not be exactly $1_n$ at an optimum, we will argue that it suffices for our purpose.   Let $\hat{f} : \real^{n(n+3)/2} \rightarrow \real$ denote the new objective function $\mat{x}^t\mat{A}^t\mat{Ax} + \frac{1}{2}M^3\norm{\mat{d}-1_n}^2$.   Note that $\hat{f}$ is quadratic.  We have 
\[
\nabla \hat{f}(\hat{\mat{x}}) = \begin{bmatrix} 
	\nabla f(\mat{x})  \\
	0_n \\
	M^3(\mat{d} - 1_n)
\end{bmatrix} =
\begin{bmatrix}
	2\mat{A}^t\mat{Ax} \\
	0_n \\
	M^3(\mat{d}-1_n)
\end{bmatrix}.
\]

Let $\hat{\mat{u}}$ denote the $n$-dimensional vector of multipliers corresponding to the constraints $\hat{\mat{B}}\hat{\mat{x}} = (1+M)1_n$.  Let $\hat{\mat{v}}$ denote the $(n(n+3)/2)$-dimensional vector of multipliers corresponding to the constraints $\hat{\mat{x}} \geq 0_{n(n+3)/2}$.  

\vspace{8pt}

\newpage

\begin{lemma}
\label{lem:extend}
Let $(\mat{u}_r,\mat{v}_r,\mat{x}_r)$ be an optimal solution for the original system with respect to the active set $S_r$.  There is a feasible solution $(\hat{\mat{u}}_r,\hat{\mat{v}}_r,\hat{\mat{x}}_r)$ for the extended system with respect to the active set $S_r$ such that the following properties hold:
\begin{enumerate}[{\em (i)}]
	\item $\hat{\mat{u}}_r = \mat{u}_r$. \vspace*{2pt}
	\item $\hat{\mat{v}}_r = \begin{bmatrix} \mat{v}_r \\ \mat{u}_r \\ 0_n \end{bmatrix}$. \vspace*{4pt}
	\item $\hat{\mat{x}}_r = \begin{bmatrix} \mat{x}_r \\ \mat{s} \\ \mat{d} \end{bmatrix}$, where $\mat{s} = \mat{B}\mat{x}_r - 1_n + \mat{u}_r/M$ and $\mat{d} = 1_n + \mat{u}_r/M^2$. \vspace*{2pt}
	\item $\nabla \hat{f}(\hat{\mat{x}}_r) - \hat{\mat{B}}^t\hat{\mat{u}}_r = \hat{\mat{v}}_r$.  
	\item $\hat{f}(\hat{\mat{x}}_r) \geq f(\mat{x}_r)$ and $\hat{f}(\hat{\mat{x}}_r)$ approaches $f(\mat{x}_r)$ as $M$ increases.
\end{enumerate}
\end{lemma}
\begin{proof}
With respect to the active $S_r$, we claim that a feasible solution $(\hat{\mat{u}}_r,\hat{\mat{v}}_r,\hat{\mat{x}}_r)$ of the extended system can be obtained by solving the following linear system.   We use  $\hat{\mat{w}}_1$ to denote the first $n(n-1)/2$ coordinates of $\hat{\mat{v}}$ and $\hat{\mat{w}}_2$ the next $n$ coordinates of $\hat{\mat{v}}$.  
\begin{equation}
\begin{bmatrix}
	2\mat{A}^t\mat{Ax} \\
	0_n \\
	M^3(\mat{d}-1_n)
\end{bmatrix} 
-
\begin{bmatrix}
	\mat{B}^t\hat{\mat{u}} \\
	-\hat{\mat{u}} \\
	M\hat{\mat{u}}
\end{bmatrix}
=
\begin{bmatrix}
	\hat{\mat{w}}_1 \\
	\hat{\mat{w}}_2 \\
	0_n
\end{bmatrix},
\label{eq:extend-1}
\end{equation}
\cancel{
\[
\mat{B}^t\hat{\mat{u}} + \hat{\mat{w}}_1
= 
2\mat{A}^t\mat{A}\mat{x},
\quad
\hat{\mat{u}} - \hat{\mat{w}}_2 = 0_n,
\quad
M^2\mat{d} - \hat{\mat{u}} = M^21_n,
\]
}
\begin{equation}
\mat{B}\mat{x} - \mat{s} + M\mat{d}
= (1+M)1_n.
\label{eq:extend-2}
\end{equation}
Equation~\eqref{eq:extend-1} models the requirement of $\nabla \hat{f}(\hat{\mat{x}}) - \hat{\mat{B}}^t\hat{\mat{u}} = \hat{\mat{v}}$ for some particular settings of $\hat{\mat{x}}$, $\hat{\mat{u}}$, and $\hat{\mat{v}}$.  Equation~\eqref{eq:extend-2} models one of the primal feasibility constraints: $\hat{\mat{B}}\hat{\mat{x}} = (1+M)1_n$.

First, we argue that \eqref{eq:extend-1} and \eqref{eq:extend-2} can be satisfied simultaneously.  Observe that $2\mat{A}^t\mat{Ax} - \mat{B}^t\hat{\mat{u}} = \hat{\mat{w}}_1$ is a KKT condition of the original system, so it can be satisfied by setting $\mat{x} = \mat{x}_r$, $\hat{\mat{u}} = \mat{u}_r$ and $\hat{\mat{w}}_1 = \mat{v}_r$.  We set $\hat{\mat{w}}_2 = \mat{u}_r$ and $\mat{d} = 1_n + \mat{u}_r/M^2$.  These settings of $\hat{\mat{x}}$, $\hat{\mat{u}}$, $\hat{\mat{w}}_1$, $\hat{\mat{w}}_2$, and ${\mat{d}}$ satisfy~\eqref{eq:extend-1}.  To satisfy~\eqref{eq:extend-2}, we use $\mat{x}_r$ and the setting of $\mat{d}$ to set $\mat{s} = \mat{B}\mat{x}_r - 1_n + \mat{u}_r/M$.  The satisfiability of \eqref{eq:extend-1} and \eqref{eq:extend-2} is thus verified.  These settings yield the following candidate solution for the extended system and establishes the correctness of (i)--(iv):
\[
\hat{\mat{u}}_r = \mat{u}_r, \quad
\hat{\mat{v}}_r = \begin{bmatrix}
\mat{v}_r \\
\mat{u}_r \\
0_n
\end{bmatrix}, \quad
\hat{\mat{x}}_r = \begin{bmatrix}
	\mat{x}_r \\
	\mat{s} \\
	\mat{d}
\end{bmatrix}.
\]

We do not know whether $\hat{\mat{x}}_r$ is a feasible solution for the extended system yet.  We argue that $(\hat{\mat{u}}_r,\hat{\mat{v}}_r,\hat{\mat{x}}_r)$ satisfy the primal and dual feasibilities of the extended system.  Equation~\ref{eq:extend-2} explicitly models one of the primal feasibility conditions: $\hat{\mat{B}}\hat{\mat{x}} = (1+M)1_n$.  The other two feasibility conditions that need to be checked are $\hat{\mat{x}}_r \geq 0_{n(n+3)/2}$ and $(\hat{\mat{v}}_r)_j \geq 0$ for $j \not\in S_r$.  By the dual feasibility of the original system, we have $\mat{u}_r \geq 0_n$ and $(\mat{v}_r)_j \geq 0$ for $j \not\in S_r$, which implies that $(\hat{\mat{v}}_r)_j \geq 0$ for $j \not\in S_r$.  By the primal feasibility of the original system, we have $\mat{x}_r \geq 0_{n(n-1)/2}$.  Since $\mat{u}_r \geq 0_n$, we have $\mat{d} = 1_n + \mat{u}_r/M^2 \geq 1_n$.   As $\mat{B}\mat{x}_r \geq 1_n$ by the primal feasibility of the original system, we have $\mat{s} = \mat{B}\mat{x}_r - 1_n  + \mat{u}_r/M \geq 0_n$.  Hence, $\hat{\mat{x}}_r \geq 0_{n(n+3)/2}$.  

Finally, $\hat{f}(\hat{\mat{x}}_r) = f(\mat{x}_r) + \norm{\mat{u}_r}^2/M$, which approaches $f(\mat{x}_r)$ as $M$ increases.  This proves (v).
\end{proof}

We remark that $(\hat{\mat{u}}_r,\hat{\mat{v}}_r,\hat{\mat{x}}_r)$ may not be an optimal solution for the extended system.  Indeed, one can check that $(\mat{s})_i \cdot (\mat{u}_r)_i$ may be positive for some $j \in [n]$, implying that $(\hat{\mat{u}}_r,\hat{\mat{v}}_r,\hat{\mat{x}}_r)$ may not satisfy complementary slackness.

We will analyze an imaginary run of SolveNNQ on this extended system starting from $(\hat{\mat{u}}_r,\hat{\mat{v}}_r,\hat{\mat{x}}_r)$.  Since no index in the range $[n(n-1)/2+1,n(n+3)/2]$ is put into an active set by SolveNNQ, the run on the extended system mimics what SolveNNQ does on the original system.


\subsection{A descent direction for the extended system}

Let $(\hat{\mat{u}}_r,\hat{\mat{v}}_r,\hat{\mat{x}}_r)$ be a feasible solution of the extended solution corresponding to $(\mat{u}_r,\mat{v}_r,\mat{x}_r)$ as promised by Lemma~\ref{lem:extend}.   Recall that, in the pseudocode of SolveNNQ, $E_r$ is a sequence of indices $i \in S_r$ such that $(\mat{v}_r)_i < 0$.   Without loss of generality, we assume that $(\mat{v}_r)_1 = \min_{i \in E_r} (\mat{v}_r)_i$ and that the edge with index 1 is $\mat{p}_1\mat{p}_2$.   

In an imaginary run of SolveNNQ on the extended system starting from $(\hat{\mat{u}}_r,\hat{\mat{v}}_r,\hat{\mat{x}}_r)$, we will free the variable corresponding to the first coordinate of $\hat{\mat{x}}_r$.  So we will increase the variable $(\hat{\mat{x}}_r)_1$, which means that the descent direction is some appropriate component of the vector $\mat{e}_1$ in $\real^{n(n+3)/2}$.  We introduce some definitions in order to characterize the descent direction.

\vspace{6pt}

\begin{definition}
Define the following set of indices
\[
\core(\hat{\mat{x}}_r) = \supp(\mat{x}_r) \cup \left[\frac{n(n+1)}{2}+1, \frac{n(n+3)}{2}\right] \cup \left\{a + \frac{n(n-1)}{2}: (\mat{B}\mat{x}_r)_a > 1\right\}.
\]
The role of $\core(\hat{\mat{x}}_r)$ for the extended system is analogous to the role of $\supp(\mat{x}_r)$ for the original system.
All dummy variables belong to $\core(\hat{\mat{x}}_r)$.  A slack variable is in $\core(\hat{\mat{x}}_r)$ if and only if the correponding constraint in $\mat{B}\mat{x}_r \geq 1_n$ is not tight.
\end{definition}

\vspace{6pt}

\begin{definition}
\label{df:I}
For all $a \in [n]$,  let $I_a = \bigl\{ i \in \core(\hat{\mat{x}}_r) : (\hat{\mat{B}})_{a,i} \not= 0 \bigr\}$.  So $I_a \subseteq \core(\hat{\mat{x}}_r)$ and it tells us the entries in the $a$-th constraint in $\hat{\mat{B}}$ that are affected if $(\mat{x}_r)_1$ is increased.  Note that $I_a$ always contains $n(n+1)/2+a$.
\end{definition}

\vspace{6pt}

\begin{definition}
For all $a \in [n]$, define the following quantities:
	\begin{align*}
		c_a = & \left\{\begin{array}{ccl}
			1 + M, & \quad\quad & \text{if $(\mat{B}\mat{x}_r)_a > 1,$ {\em i.e., $a + n(n-1)/2\in \core(\hat{x}_r)$;}} \\
			1 + M - (\mat{u}_r)_a/M, && \text{if $(\mat{B}\mat{x}_r)_a = 1,$ {\em i.e., $a + n(n-1)/2 \not\in \core(\hat{\mat{x}}_r)$}}.
		\end{array}\right.  \\[.25em]
		H_a = &\left\{ \hat{\mat{x}} \in \real^{n(n+3)/2} : (\hat{\mat{B}}\hat{\mat{x}})_a = c_a \right\}\, \bigcap \,
		\mathrm{span}\Bigl(\bigl\{\mat{e}_i : i \in \{1\} \cup \core(\hat{\mat{x}}_r)\bigr\}\Bigr).  
	\end{align*}
In the definition of $H_a$, since we set all variables not in $\{1\} \cup \core(\hat{\mat{x}}_r)$ to zero, the slack variables that are excluded must be subtracted from the vector $(1+M)1_n$ on the right hand side of the constraints $\hat{\mat{B}}\hat{\mat{x}} = (1+M)1_n$.  By Lemma~\ref{lem:extend}, for any $a \in [n]$, if $a + n(n-1)/2 \not\in \core(\hat{\mat{x}}_r)$, the value of the corresponding slack variable is $(\mat{u}_r)_a/M$.  This consideration justifies the definition of $c_a$.
\end{definition}

\vspace{6pt}

\begin{definition}
\label{df:h}
Let $h = \bigcap_{a \in [n]} H_a$.  It is a non-empty subspace of $\mathrm{span}\bigl(\bigl\{\mat{e}_i : i \in \{1\} \cup \core(\hat{\mat{x}}_r)\bigr)$ because it contains the projection of $\hat{\mat{x}}_r$ to $\mathrm{span}\bigl(\bigl\{\mat{e}_i : i \in \{1\} \cup \core(\hat{\mat{x}}_r) \bigr\}\bigr)$.  
\end{definition}

\vspace{6pt}


\begin{definition}
\label{df:k}
For all $a \in [n]$, define a vector $\hat{\mat{k}}_a$ in $\real^{n(n+3)/2}$ as follows:
\[
(\hat{\mat{k}}_a)_i =  \left\{\begin{array}{ccl}
	(\hat{\mat{B}})_{a,i}, & \quad\quad & \text{{\em if $i \in \{1\} \cup I_a$;}} \\[.25em]
	0, & & \text{{\em otherwise}}.
\end{array}\right.
\]
Observe that $\hat{\mat{k}}_a \perp H_a$.   Also, if $a \in \{1,2\}$, then $(\hat{\mat{k}}_a)_1 = 1$; otherwise, $(\hat{\mat{k}}_a)_1 = 0$.  It is because the column index $1$ is the index of edge $\mat{p}_1\mat{p}_2$ by assumption.
\end{definition}

\vspace{6pt}

Note that every $\hat{\mat{k}}_a$ is orthogonal to $h$.   The vectors $\{\hat{\mat{k}}_a : a \in [n]\}$ and the vectors that are parallel to $h$ span the linear subspace $\mathrm{span}\bigl(\bigl\{\mat{e}_i : i \in \{1\} \cup \core(\hat{\mat{x}}_r) \bigr\}\bigr)$.  

We want to analyze the effect of using $\mat{e}_1 \!\downarrow\! h$ as a descent direction.  The direction $\mat{e} \!\downarrow\! h$ is equal to $\mat{e}_1 - \mat{e}_1 \!\downarrow\! \mathrm{span}(\{\hat{\mat{k}}_a : a \in [n]\})$.  However, the vectors $\{\hat{\mat{k}}_a : a \in [n]\}$ are not orthonormal, so we first orthonormalize $\{\hat{\mat{k}}_a : a \in [n]\}$ using the Gram-Schmidt method~\cite{GV1996} in order to express $\mat{e}_1 \!\downarrow\! \mathrm{span}(\{\hat{\mat{k}}_a : a \in [n]\})$.  The method first generates $n$ mutually orthogonal vectors so that we can normalize each individually later.  

First, define $\check{\mat{k}}_n = \hat{\mat{k}}_n$.  For $a = n-1, n-2, \ldots, 3, 2, 1$ in this order, define
\[
\check{\mat{k}}_a = \hat{\mat{k}}_a - \sum_{b = a+1}^{n} \frac{\langle \hat{\mat{k}}_a,\check{\mat{k}}_b \rangle}{\langle \check{\mat{k}}_b, \check{\mat{k}}_b \rangle} \check{\mat{k}}_b.
\]
This process generates $\check{\mat{k}}_a$ by subtracting the component of $\hat{\mat{k}}_a$ that lies in $\mathrm{span}(\{\hat{\mat{k}}_b : b \in [a+1,n]\})$.  
\cancel{
The Gram-Schmidt process works irrespective of the processing order.  We can do the same for 2 and 1 as well, but for a technical convenience, we define
\[
\check{\mat{k}}_2 = (\hat{\mat{k}}_2 - \hat{\mat{k}}_1) - \sum_{b = 3}^{n} \frac{\langle \hat{\mat{k}}_2 - \hat{\mat{k}}_1,\check{\mat{k}}_b \rangle}{\langle \check{\mat{k}}_b, \check{\mat{k}}_b \rangle} \check{\mat{k}}_b.
\]
The above definition is equivalent to replacing $\hat{\mat{k}}_2$ by $\hat{\mat{k}}_2 - \hat{\mat{k}}_1$, which does not change the span of $\{\hat{\mat{k}}_a : a \in [n]\}$.  Finally, we define
\[
\check{\mat{k}}_1 = \hat{\mat{k}}_1 - \sum_{b = 2}^{n} \frac{\langle \hat{\mat{k}}_1,\check{\mat{k}}_b \rangle}{\langle \check{\mat{k}}_b, \check{\mat{k}}_b \rangle} \check{\mat{k}}_b.
\]
}
The set of vectors $\{\check{\mat{k}}_a : a \in [n]\}$ are mutually orthogonal.  Then, we define the following set of orthonormal vectors:
\[
\left\{\mat{k}_a = \frac{\check{\mat{k}}_a}{\norm{\check{\mat{k}}_a}} : a \in [n]\right\}.
\]

\vspace{6pt}

\begin{lemma}
\label{lem:k}
The following properties are satisfied.
\begin{enumerate}[{\em (i)}]
	\item For all $a \in [n]$, $(\check{\mat{k}}_a)_{n(n+1)/2 + a} = M$.
	\item $(\check{\mat{k}}_2)_1 = 1$, and for $a \in [3,n]$, $(\check{\mat{k}}_a)_1 = 0$.
	\item For all $a \in [n]$ and $i \in [n(n+1)/2]$, $-1-\frac{1}{M} \leq (\check{\mat{k}}_a)_i \leq 1+ \frac{1}{M}$ for $M \geq n^2(n-1)$.
\end{enumerate}
\end{lemma}
\begin{proof}
Observe that for $a \in [n]$, $(\hat{\mat{k}}_b)_{n(n+1)/2+a} = 0$ for all $b > a$.  This implies that $(\check{\mat{k}}_b)_{n(n+1)/2+a} = 0$ for all $b > a$ and hence $(\check{\mat{k}}_a)_{n(n+1)/2+a} = M$.  This proves (i).
	
Similarly, for $a \in [3,n]$, we have $(\hat{\mat{k}}_a)_1 = 0$, which imples that $(\check{\mat{k}}_a)_1 = 0$.  Then, coupled with the fact that $(\hat{\mat{k}}_2)_1 = 1$, we get $(\check{\mat{k}}_2)_1 = 1$.   This proves (ii).

We show (iii) inductively.  The base case is trivial as $\check{\mat{k}}_n = \hat{\mat{k}}_n$.  Consider the computation of $\check{\mat{k}}_a$.  We can argue as in proving (i) that $(\check{\mat{k}}_b)_{n(n+1)/2+a} = 0$ for all $b > a$.  Also, $(\hat{\mat{k}}_a)_{n(n+1)/2+b} = 0$ for all $b \not= a$.  Therefore, the inner product $\langle \hat{\mat{k}}_a,\check{\mat{k}}_b\rangle$ for any $b > a$ does not receive any contributions from the entries with indices in the range $[n(n+1)/2+1,n(n+3)/2]$.  By a similar reasoning, there are no contributions from the entries with indices in the range $[n(n-1)/2+1,n(n+1)/2]$.  It follows that, for any $b > a$, $\bigl|\langle \hat{\mat{k}}_a,\check{\mat{k}}_b\rangle/\langle \check{\mat{k}}_b,\check{\mat{k}}_b\rangle \bigr| \leq \sum_{i=1}^{n(n-1)/2} |(\check{\mat{k}}_b)_i|/\langle \check{\mat{k}}_b,\check{\mat{k}}_b\rangle \leq n(n-1)/\langle \check{\mat{k}}_b,\check{\mat{k}}_b\rangle$ as the largest magnitude of the first $n(n-1)/2$ entries of $\hat{\mat{k}}_a$ is 1, and the largest magnitude of the first $n(n-1)/2$ entries of $\check{\mat{k}}_b$ is at most $1 + \frac{1}{M} \leq 2$ by induction assumption.  By~(i), $\langle \check{\mat{k}}_b,\check{\mat{k}}_b\rangle \geq M^2$.  As a result, the largest magnitude of the first $n(n+1)/2$ entries of $\sum_{b=a+1}^n \check{\mat{k}}_b \cdot \langle \hat{\mat{k}}_a,\check{\mat{k}}_b\rangle/\langle \check{\mat{k}}_b,\check{\mat{k}}_b\rangle$ is at most $n \cdot (1+1/M) \cdot n(n-1)/M^2 \leq 1/M$ by induction assumption and the assumption that $M \geq n^2(n-1)$. Therefore, among the first $n(n+1)/2$ entries of $\check{\mat{k}}_a$, the smallest entry is at least $-1-1/M$ and the largest entry is at most $1 + 1/M$.   This proves (iii).
\cancel{
Since $(\hat{\mat{k}}_1)_{n(n+1)/2+2}$ and $(\hat{\mat{k}}_2)_{n(n+1)/2+1}$ are zeros and $(\hat{\mat{k}}_1)_{n(n+1)/2+1}$ and $(\hat{\mat{k}}_2)_{n(n+1)/2+2}$ are equal to $m_r$, the vector $\hat{\mat{k}}_2-\hat{\mat{k}}_1$ has an entry $-m_r$ at position $n(n+1)/2+1$ and an entry $m_r$ at position $n(n+1)/2+2$.  Observe that for all $a \in [3,n]$, $\check{\mat{k}}_a$ has zeros at positions $n(n+1)/2+1$ and $n(n+1)/2+2$.  It follows that $\check{\mat{k}}_2$ has an entry $-m_r$ at position $n(n+1)/2+1$ and an entry $n^2$ at position $n(n+1)/2+2$.  This proves (iv).

The entries of $\check{\mat{k}}_1$ at positions $n(n+1)/2+1$ and $n(n+2)/2+2$ are  equal to $m_r + m_r\langle \hat{\mat{k}}_1,\check{\mat{k}}_2 \rangle/\langle \check{\mat{k}}_2,\check{\mat{k}}_2 \rangle$ and $-m_r\langle \hat{\mat{k}}_1,\check{\mat{k}}_2 \rangle/\langle \check{\mat{k}}_2,\check{\mat{k}}_2 \rangle$, respectively.  It is straightforward to verify that the sum of the magnitudes of these entries is at least $m_r$.
}
\end{proof}


\vspace{6pt}

Let $\hat{\mat{n}}_r = \mat{e}_1 \!\downarrow\! h/\norm{\mat{e}_1 \!\downarrow\! h}$.  We analyze the value of $\langle \nabla \hat{f}(\hat{\mat{x}}_r),\hat{\mat{n}}_r \rangle$ and show that $\hat{\mat{n}}_r$ is a descent direction.

\vspace{6pt}

\begin{lemma}
\label{lem:descent}
For a large enough $M$, $\langle \nabla \hat{f}(\hat{\mat{x}}_r),\hat{\mat{n}}_r \rangle \leq \frac{2}{3}(\mat{v}_r)_1 < 0$.
\end{lemma}
\begin{proof}
We first obtain a simple expression for $\mat{e}_1 \!\downarrow\! h$.  Since the vectors $\{\mat{k}_a : a \in [n]\}$ are orthonormal, we have $\mat{e}_1 \!\downarrow\! h = \mat{e}_1 - \sum_{a=1}^n \langle \mat{e}_1,\mat{k}_a \rangle \mat{k}_a$.  The only non-zero entry of $\mat{e}_1$ is a 1 at position 1.  By Lemma~\ref{lem:k}, $(\mat{k}_a)_1 = 0$ for $a \in [3,n]$.   Note that $\langle \mat{e}_1,\mat{k}_1\rangle = (\check{\mat{k}}_1)_1/\norm{\check{\mat{k}}_1}$ and by Lemma~\ref{lem:k}(ii), $\langle \mat{e}_1,\mat{k}_2\rangle = 1/\norm{\check{\mat{k}}_2}$.  Therefore,
\begin{equation}
\mat{e}_1 \!\downarrow\! h = \mat{e}_1 - \frac{(\check{\mat{k}}_1)_1}{\norm{\check{\mat{k}}_1}} \mat{k}_1 - \frac{1}{\norm{\check{\mat{k}}_2}} \mat{k}_2.
\label{eq:ana-01}
\end{equation}

Let $K_r  = \mathrm{span}\bigl(\bigl\{\mat{e}_i : i \in E_r  \cup \core(\hat{\mat{x}}_r) \bigr\}\bigr)$.  Therefore, for all $\mat{y} \in K_r$, if $i \not\in E_r \cup \core(\hat{\mat{x}}_r)$, then $(\mat{y})_i = 0$.  We claim that for a sufficiently large $M$,
\begin{equation}
	\langle \hat{\mat{v}}_r\!\downarrow\!K_r, \mat{e}_1 \!\downarrow\! h \rangle \leq \frac{2}{3}(\mat{v}_r)_1.
	\label{eq:ana-2}
\end{equation}
The proof goes as follows.  By \eqref{eq:ana-01}, 
\[
\langle \hat{\mat{v}}_r\!\downarrow\!K_r, \mat{e}_1 \!\downarrow\! h \rangle = \langle \hat{\mat{v}}_r\!\downarrow\!K_r, \mat{e}_1 \rangle - \frac{(\check{\mat{k}}_1)_1}{\norm{\check{\mat{k}}_1}}\langle \hat{\mat{v}}_r\!\downarrow\!K_r, \mat{k}_1 \rangle
- \frac{1}{\norm{\check{\mat{k}}_2}}\langle \hat{\mat{v}}_r\!\downarrow\!K_r, \mat{k}_2 \rangle.  
\]
Clearly, $\langle \hat{\mat{v}}_r\!\downarrow\!K_r, \mat{e}_1 \rangle = (\hat{\mat{v}}_r)_1$ which is equal to $(\mat{v}_r)_1$ by Lemma~\ref{lem:extend}.  Recall that $(\mat{v}_r)_1$ is the most negative entry of $\mat{v}_r$ by assumption.  For $i > n(n+1)/2$, if $i \not\in E_r \cup \core(\hat{\mat{x}}_r)$, then $(\hat{\mat{v}}_r \!\downarrow\! K_r)_i = 0$ by the defintion of $K_r$; otherwise,  $(\hat{\mat{v}}_r \!\downarrow\! K_r)_i = (\hat{\mat{v}}_r)_i$ which is zero by Lemma~\ref{lem:extend}.   Take any $i \leq n(n+1)/2$.  By Lemma~\ref{lem:k}(i) and (iii), $|(\mat{k}_1)_i|$ and $|(\mat{k}_2)_i|$ are at most $1/M + 1/M^2$.  Therefore, $(\hat{\mat{v}}_r\!\downarrow\!K_r)_i \cdot (\mat{k}_1)_i$ and $(\hat{\mat{v}}_r\!\downarrow\!K_r)_i \cdot (\mat{k}_2)_i$ have negligible magnitude for a large enough $M$.  It follows that $\langle \hat{\mat{v}}_r\!\downarrow\!K_r, \mat{e}_1 \!\downarrow\! h \rangle$ is no more than $({\mat{v}}_r)_1$ plus a non-negative number of negligible magnitude in the worst case.  The result is thus less than $\frac{2}{3}({\mat{v}}_r)_1$.  This completes the proof of \eqref{eq:ana-2}.

For all $a \in [n]$, the $a$-th constraint in the extended system is a hyperplane in $\real^{n(n+3)/2}$ for which the $a$-th row $\hat{\mat{B}}_{a,*}$ of $\hat{\mat{B}}$ is a normal vector.  As a result, for any direction $\mat{w}$ from the feasible solution $\hat{\mat{x}}_r$ that is parallel to the hyperplanes encoded by $\hat{\mat{B}}\hat{\mat{x}} = (1+M)1_n$, the vector $\hat{\mat{B}}^t\hat{\mat{u}}_r$ is orthogonal to $\mat{w}$, which gives 
\begin{equation}
	\langle \nabla \hat{f}(\hat{\mat{x}}_r),\mat{w} \rangle = \langle \nabla \hat{f}(\hat{\mat{x}}_r) - \hat{\mat{B}}^t\hat{\mat{u}}_r,\mat{w} \rangle  = \langle \hat{\mat{v}}_r,\mat{w} \rangle.  \label{eq:ana-0}
\end{equation}
The last step uses the relation of $\nabla \hat{f}(\hat{\mat{x}}_r) - \hat{\mat{B}}^t\hat{\mat{u}}_r = \hat{\mat{v}}_r$ from Lemma~\ref{lem:extend}(iv).

As we move from $\hat{\mat{x}}_r$ in direction $\mat{e}_1\!\downarrow\!h$, our projection in $\mat{e}_1$ moves in the positive direction, and so $(\hat{\mat{x}})_1$ increases.  During the movement, for any $i \not\in \{1\}\cup\core(\hat{\mat{x}}_r)$, the coordinate $(\hat{\mat{x}})_i$ remains unchanged by the definition of $h$.  Specifically, if $i \leq n(n-1)/2$, then $(\hat{\mat{x}}_r)_i$ remains at zero; if $i \in [n(n-1)/2+1,n(n+1)/2]$, $(\hat{\mat{x}}_r)_i$ is fixed at $(\hat{\mat{u}}_r)_a/M$, where $a = i - n(n-1)/2$.  Some coordinates among $\bigl\{(\hat{\mat{x}})_i : i \in \core(\hat{\mat{x}}_r) \bigr\}$, which are positive, may decrease.  The feasibility constraints are thus preserved for some extent of the movement.  By~\eqref{eq:ana-0}, $\langle \nabla \hat{f}(\hat{\mat{x}}_r),\mat{e}_1\!\downarrow\!h\rangle = \langle \hat{\mat{v}}_r,\mat{e}_1\!\downarrow\!h\rangle$, which is equal to $\langle \hat{\mat{v}}_r \!\downarrow\!K_r,\mat{e}_1\!\downarrow\!h\rangle$ as $\mat{e}_1\!\downarrow\!h \in K_r$.  By \eqref{eq:ana-2}, $\langle \nabla \hat{f}(\hat{\mat{x}}_r),\mat{e}_1\!\downarrow\!h\rangle \leq \frac{2}{3}(\mat{v}_r)_1 < 0$.  Clearly, $\norm{\mat{e}_1 \!\downarrow\! h} \leq 1$.  So $\langle \nabla \hat{f}(\hat{\mat{x}}_r),\hat{\mat{n}}_r \rangle \leq \langle \nabla \hat{f}(\hat{\mat{x}}_r),\mat{e}_1\!\downarrow\!h\rangle \leq \frac{2}{3}(\mat{v}_r)_1 < 0$.
\end{proof}

\subsection{Convergence}

%
Our analysis takes a detour via the extended system.  Since the extended system is also a non-negative quadratic programming problem, Lemma~\ref{lem:bound2} is applicable.  The natural goal is thus to use the descent direction $\hat{\mat{n}}_r$ from $\hat{\mat{x}}_r$ and prove lower bounds for $\norm{\hat{\mat{y}} - \hat{\mat{x}}_r}/{\norm{\hat{\mat{x}}_r - \hat{\mat{x}}_*}}$ and $\langle \nabla \hat{f}(\hat{\mat{x}}_r), \hat{\mat{n}}_r \rangle/{{\langle \nabla \hat{f}(\hat{\mat{x}}_r), \hat{\mat{n}}_* \rangle}}$, where $\hat{\mat{y}}$ is the feasible point that minimizes $\hat{f}$ in direction $\hat{\mat{n}}_r$ from $\hat{\mat{x}}_r$, $\hat{\mat{x}}_*$ is the optimal solution of the extended sytsem, and $\hat{\mat{n}}_* = (\hat{\mat{x}}_* - \hat{\mat{x}}_r)/\norm{\hat{\mat{x}}_* - \hat{\mat{x}}_r}$.  

We prove a lower bound for $\langle \nabla \hat{f}(\hat{\mat{x}}_r), \hat{\mat{n}}_r \rangle/{{\langle \nabla \hat{f}(\hat{\mat{x}}_r), \hat{\mat{n}}_* \rangle}}$. As in the case of NNLS and ZHLG, we do not know a lower bound for $\norm{\hat{\mat{y}} - \hat{\mat{x}}_r}/{\norm{\hat{\mat{x}}_r - \hat{\mat{x}}_*}}$.  Therefore, we derive a convergence result based the assumption that $\norm{\hat{\mat{y}} - \hat{\mat{x}}_r}/{\norm{\hat{\mat{x}}_r - \hat{\mat{x}}_*}}$ is bounded from below.
 
Let $S$ be an active set that is a subset of $[n(n-1)/2]$ disjoint from $\{1\} \cup \core(\hat{\mat{x}}_r)$.  So $\hat{\mat{n}}_r$ is a descent direction from $\hat{\mat{x}}_r$ with respect to $S$.  Let $\hat{\mat{x}}$ be any feasible solution of the extended system with respect to $S$ such that $\hat{f}(\hat{\mat{x}}) < \hat{f}(\hat{\mat{x}}_r)$.  Let $\hat{\mat{n}}$ be the unit vector from $\hat{\mat{x}}_r$ towards $\hat{\mat{x}}$.  We bound the ratio $\langle \nabla \hat{f}(\hat{\mat{x}}_r),\hat{\mat{n}}_r \rangle/ \langle \nabla \hat{f}(\hat{\mat{x}}_r), \hat{\mat{n}} \rangle$ from below.  Note that $\hat{\mat{n}}$ is a descent direction from $\hat{\mat{x}}_r$ with respect to $S$.  Therefore, both $\langle \nabla \hat{f}(\hat{\mat{x}}_r),\hat{\mat{n}}_r \rangle$ and $\langle \nabla \hat{f}(\hat{\mat{x}}_r), \hat{\mat{n}} \rangle$ are negative, and the ratio is thus positive.

\vspace{6pt}

\begin{lemma}
\label{lem:ana-ratio}
 Let $\hat{\mat{x}}$ be any feasible solution of the extended system with respect to an active set $S$ that is a subset of $[n(n-1)/2]$ disjoint from $\{1\} \cup \core(\hat{\mat{x}}_r)$.   Assume that $\hat{f}(\hat{\mat{x}}) < \hat{f}(\hat{\mat{x}}_r)$.  Let $\hat{\mat{n}} = (\hat{\mat{x}} - \hat{\mat{x}}_r)/\norm{\hat{\mat{x}} - \hat{\mat{x}}_r}$.   For a sufficiently large $M$,
 \[
 \frac{\langle \nabla \hat{f}(\hat{\mat{x}}_r),\hat{\mat{n}}_r \rangle}{\langle \nabla \hat{f}(\hat{\mat{x}}_r), \hat{\mat{n}} \rangle} \geq \frac{1}{\sqrt{2n(n-1)}}.
 \]
\end{lemma}
\begin{proof}
As in the proof of Lemma~\ref{lem:descent}, let $K_r  = \mathrm{span}\bigl(\bigl\{\mat{e}_i : i \in E_r  \cup \core(\hat{\mat{x}}_r) \bigr\}\bigr)$.  Therefore, for all $\mat{y} \in K_r$, if $i \not\in E_r \cup \core(\hat{\mat{x}}_r)$, then $(\mat{y})_i = 0$.   We claim that there exists a non-negative $\delta \geq 0$ that approaches zero as $M$ increases such that
\begin{equation}
\langle -\hat{\mat{v}}_r,\hat{\mat{x}} -\hat{\mat{x}}_r \rangle \leq \langle -\hat{\mat{v}}_r\!\downarrow\!K_r, \hat{\mat{x}} - \hat{\mat{x}}_r \rangle + \delta.
\label{eq:ana-ratio}
\end{equation}
Since $\langle -\hat{\mat{v}}_r,\hat{\mat{x}} -\hat{\mat{x}}_r \rangle =  \langle -\hat{\mat{v}}_r\!\downarrow\!K_r, \hat{\mat{x}} - \hat{\mat{x}}_r \rangle + \langle -\hat{\mat{v}}_r + \hat{\mat{v}}_r\!\downarrow\!K_r, \hat{\mat{x}} - \hat{\mat{x}}_r  \rangle$, it suffices to prove that $\langle -\hat{\mat{v}}_r + \hat{\mat{v}}_r\!\downarrow\!K_r, \hat{\mat{x}} - \hat{\mat{x}}_r \rangle \leq \delta$.  

If $i \in \core(\hat{\mat{x}}_r) \cup E_r$, then $(\hat{\mat{v}}_r\!\downarrow\!K_r)_i = (\hat{\mat{v}}_r)_i$;  otherwise, $(\hat{\mat{v}}_r\!\downarrow\!K_r)_i = 0$.  Therefore, if $i \in \core(\hat{\mat{x}}_r) \cup E_r$, then $(-\hat{\mat{v}}_r + \hat{\mat{v}}_r\!\downarrow\!K_r)_i = 0$; otherwise, by Lemma~\ref{lem:extend}, $(-\hat{\mat{v}}_r + \hat{\mat{v}}_r\!\downarrow\!K_r)_i = -(\hat{\mat{v}}_r)_i \leq 0$.   
	%

Take any $i \not\in  \core(\hat{\mat{x}}_r) \cup E_r$.  If $i \in [n(n-1)/2]$, then $i \not\in \supp(\mat{x}_r)$ and $(\hat{\mat{x}}_r)_i = (\mat{x}_r)_i = 0$ by Lemma~\ref{lem:extend}, which implies that $(\hat{\mat{x}} - \hat{\mat{x}}_r)_i \geq 0$ because $\hat{\mat{x}} \geq 0_{n(n+3)/2}$ by the primal feasibility constraint.  If $i \not\in [n(n-1)/2]$, then $i \in [n(n-1)/2+1,n(n+1)/2]$, so $(\hat{\mat{x}}_r)_i = (\mat{u}_r)_a/M$, where $a = i - n(n-1)/2$, which implies that $(\hat{\mat{x}} - \hat{\mat{x}}_r)_i \geq -(\mat{u}_r)_a/M$.  

We conclude that either $(-\hat{\mat{v}}_r + \hat{\mat{v}}_r\!\downarrow\!K_r)_i  \cdot (\hat{\mat{x}} - \hat{\mat{x}}_r)_i$ is negative, or it is at most some negligible postive value for a large enough $M$.  As a result, $\langle -\hat{\mat{v}}_r + \hat{\mat{v}}_r\!\downarrow\!K_r, \hat{\mat{x}} - \hat{\mat{x}}_r \rangle \leq \delta$.  This completes the proof of \eqref{eq:ana-ratio}.

By \eqref{eq:ana-0}, Lemma~\ref{lem:descent}, and~\eqref{eq:ana-ratio}, we have 
\[
\frac{\langle \nabla  \hat{f}(\hat{\mat{x}}_r), \, \hat{\mat{n}}_r\rangle}{\langle \nabla \hat{f}(\hat{\mat{x}}_r), \, \hat{\mat{n}} \rangle} =
\frac{-\langle \nabla \hat{f}(\hat{\mat{x}}_r), \, \hat{\mat{n}}_r\rangle}{-\langle \hat{\mat{v}}_r , \hat{\mat{n}} \rangle} \geq 
\frac{-2(\mat{v}_r)_1}{-3\langle \hat{\mat{v}}_r , \hat{\mat{n}} \rangle} \geq
\frac{-2(\mat{v}_r)_1}{-3\langle \hat{\mat{v}}_r \!\downarrow\! K_r , \hat{\mat{n}} \rangle + 3\delta/\norm{\hat{\mat{x}}-\hat{\mat{x}}_r}}.
\]
Observe that $-\langle \hat{\mat{v}}_r \!\downarrow\! K_r , \hat{\mat{n}} \rangle$ is positive because, by~\eqref{eq:ana-0} and~\eqref{eq:ana-ratio}, $-\langle \hat{\mat{v}}_r \!\downarrow\! K_r , \hat{\mat{n}} \rangle \geq -\langle \hat{\mat{v}}_r , \hat{\mat{n}} \rangle - \delta/\norm{\hat{\mat{x}}-\hat{\mat{x}}_r} = -\langle \nabla \hat{f}(\hat{\mat{x}}_r), \, \hat{\mat{n}} \rangle - \delta/\norm{\hat{\mat{x}}-\hat{\mat{x}}_r}$ which is positive for a large enough $M$.  As $-\langle \hat{\mat{v}}_r \!\downarrow\! K_r , \hat{\mat{n}} \rangle = \langle \hat{\mat{v}}_r \!\downarrow\! K_r , -\hat{\mat{n}} \rangle$ and $-\hat{\mat{n}}$ is a unit vector, we have $\langle \hat{\mat{v}}_r \!\downarrow\! K_r , -\hat{\mat{n}} \rangle \leq \norm{\hat{\mat{v}}_r \!\downarrow\! K_r}$.  We conclude that for a large enough $M$, we have
\begin{equation}
\frac{\langle \nabla \hat{f}(\hat{\mat{x}}_r), \, \hat{\mat{n}}_r\rangle}{\langle \nabla \hat{f}(\hat{\mat{x}}_r), \, \hat{\mat{n}} \rangle} \geq 
\frac{-(\mat{v}_r)_1}{2\norm{\hat{\mat{v}}_r \!\downarrow\! K_r}}.
\label{eq:ana-ratio-2}
\end{equation}
We analyze $\norm{\hat{\mat{v}}_r \!\downarrow\! K_r}$.  Take any $i \in \{1\}\cup \core(\hat{\mat{x}}_r)$.  There are three cases.
\begin{itemize}
	\item $i \in [n(n-1)/2]$.  Then $i \in E_r \cup \supp(\mat{x}_r)$ and $(\hat{\mat{v}}_r)_i = (\mat{v}_r)_i$ by Lemma~\ref{lem:extend}, which means that $(\hat{\mat{v}}_r)_i = (\mat{v}_r)_i$ contributes to $\norm{\hat{\mat{v}}_r \!\downarrow\! K_r}$ if and only if $i \in E_r$.
	\item $i \in [n(n-1)/2+1,n(n+1)/2]$.  Then, $i$ must belong to $\core(\hat{\mat{x}}_r)$, which implies that $(\mat{B}\mat{x}_r)_a > 1$, where $a = i - n(n-1)/2$.  By the complementary slackness of the original system, $(\mat{u}_r)_a = 0$ which implies that $(\hat{\mat{v}}_r)_i = (\mat{u}_r)_a = 0$ by Lemma~\ref{lem:extend}.
	\item $i > n(n+1)/2$.  Then, $(\hat{\mat{v}}_r)_i = 0$ by Lemma~\ref{lem:extend}.
\end{itemize}
We conclude that $\norm{\hat{\mat{v}}_r \!\downarrow\! K_r}^2  = \sum_{i \in E_r} (\hat{\mat{v}}_r)_i^2 \leq n(n-1)/2 \cdot (\hat{\mat{v}}_r)_1^2$.  Substituting into \eqref{eq:ana-ratio-2} gives the lemma.
\end{proof}

Let $S_{r+1}$ be the active set used by SolveNNQ for the next iteration on the original system.  Note that $S_{r+1}$ is a subset of $[n(n-1)/2]$ disjoint from $\{1\} \cup \supp(\mat{x}_r)$.  It follows that $S_{r+1}$ is disjoint from $\{1\} \cup \core(\hat{\mat{x}}_r)$ as well.  Let $\hat{\mat{y}}_r$ be the point that minimizes $\hat{f}$ in the direction $\hat{\mat{n}}_r$ from $\hat{\mat{x}}_r$.  Let $\hat{\mat{x}}_*$ be the optimal solution of the extended system (with an empty active set).  Recall $\mat{x}_{r+1}$ is the solution of the original system with respect to $S_{r+1}$ computed by SolveNNQ.  Let $\mat{x}_*$ be the optimal DKSG solution.  We prove that if $\norm{\hat{\mat{y}}_{r} - \hat{\mat{x}}_r} \geq \frac{1}{\lambda}\norm{\hat{\mat{x}}_r - \hat{\mat{x}}_*}$, then $f(\mat{x}_{r+1})$ is closer to $f(\mat{x}_*)$ than $f(\mat{x}_r)$ by a factor of $1 - \Theta(\frac{1}{\lambda\sqrt{n(n-1)}})$.

\vspace{6pt}


\begin{lemma}
\label{lem:converge}
If $\norm{\hat{\mat{y}}_r  - \hat{\mat{x}}_r} \geq \frac{1}{\lambda}\norm{\hat{\mat{x}}_r - \hat{\mat{x}}_*}$ where $\lambda$ is a fixed value independent of $M$, then for a large enough $M$,
\[
\dfrac{f(\mat{x}_{r+1}) - f(\mat{x}_*)}{f(\mat{x}_r) - f(\mat{x}_*)} \leq 1 - \dfrac{1}{3\lambda\sqrt{2n(n-1)}}.
\]
\end{lemma}
\begin{proof}
	Let $\hat{\mat{z}}_{r+1}$ be the optimal solution of the extended system with respect to the active set $S_{r+1}$.  By Lemmas~\ref{lem:bound2} and~\ref{lem:ana-ratio} and the assumption that $\norm{\hat{\mat{y}}_r  - \hat{\mat{x}}_r} \geq \frac{1}{\lambda}\norm{\hat{\mat{x}}_r - \hat{\mat{x}}_*}$, we have 
	\[
	\frac{\hat{f}(\hat{\mat{x}}_r) - \hat{f}(\hat{\mat{z}}_{r+1})}{\hat{f}(\hat{\mat{x}}_r) - \hat{f}(\hat{\mat{x}}_*)} \geq \frac{1}{2\lambda\sqrt{2n(n-1)}}.
	\]
	
	We obtain a feasible solution $\mat{z}_{r+1} \in \real^{n(n-1)/2}$ of the original system from $\hat{\mat{z}}_{r+1}$ as follows.  Let $\tilde{\mat{z}}_{r+1}$ be the projection of $\hat{\mat{z}}_{r+1}$ to the first $n(n-1)/2$ coordinates.  First, we set $\mat{z}_{r+1} = \tilde{\mat{z}}_{r+1}$.  Second, for each $j > n(n+1)/2$, if $(\hat{\mat{z}}_{r+1})_j > 1$, then let $a = j - n(n+1)/2$, we pick an arbitrary $i \in \supp(\mat{z}_{r+1})$ such that $(\mat{B})_{a,i} = 1$, and we increase $(\mat{z}_{r+1})_i$ by $M(\hat{\mat{z}}_{r+1})_j - M$.  This completes the determination of $\mat{z}_{r+1}$.  Since we move the excess of the dummy variables over 1 to the positive coordinates of $\mat{z}_{r+1}$, one can verify that $\mat{B}\mat{z}_{r+1} \geq 1_n$.  So $\mat{z}_{r+1}$ is a feasible solution of the original system with respect to $S_{r+1}$.   Note that for any $j > n(n+1)/2$, $M(\hat{\mat{z}}_{r+1})_j - M$ must approach zero as $M$ increases.  Otherwise, as $\hat{f}(\hat{\mat{z}}_{r+1})$ consists of the term $\frac{1}{2}M^3 \cdot ((\hat{\mat{z}}_{r+1})_j - 1)^2 = \frac{1}{2}M \cdot (M(\hat{\mat{z}}_{r+1})_j - M)^2$, $\hat{f}(\hat{\mat{z}}_{r+1})$ would tend to $\infty$ as $M$ increases, contradicting the fact that $\hat{f}(\hat{\mat{z}}_{r+1}) \leq \hat{f}(\hat{\mat{x}}_{r})$ which is bounded as $M$ increases.  Therefore, as $M$ increases, $f(\mat{z}_{r+1}) = \mat{z}_{r+1}^t\mat{A}^t\mat{A}\mat{z}_{r+1}^{}$ tends to $\tilde{\mat{z}}_{r+1}^t\mat{A}^t\mat{A}\tilde{\mat{z}}_{r+1}^{}$ which is the first term in $\hat{f}(\hat{\mat{z}}_{r+1})$.  The second term of $\hat{f}(\hat{\mat{z}}_{r+1})$ is non-negative, which means that $f(\mat{z}_{r+1}) \geq f(\mat{x}_{r+1}) \geq \hat{f}(\hat{\mat{z}}_{r+1}) \geq \tilde{\mat{z}}_{r+1}^t\mat{A}^t\mat{A}\tilde{\mat{z}}_{r+1}^{}$.  We conclude that $\hat{f}(\hat{\mat{z}}_{r+1})$ tends to $f(\mat{x}_{r+1})$ as $M$ increases.
	
	By a similar reasoning, we can also show that $\hat{f}(\hat{\mat{x}}_*)$ tends to $f(\mat{x}_*)$ as $M$ increases.   By Lemma~\ref{lem:extend}(v), $\hat{f}(\hat{\mat{x}}_r)$ tends to $f(\mat{x}_r)$ as $M$ increases.  Hence, for a large enough $M$,
	\[
	\frac{f(\mat{x}_{r}) - f(\mat{x}_{r+1})}{f(\mat{x}_r) - f(\mat{x}_*)}  \geq
	\frac{2}{3} \cdot \frac{\hat{f}(\hat{\mat{x}}_{r}) - \hat{f}(\hat{\mat{z}}_{r+1})}{\hat{f}(\hat{\mat{x}}_r) - \hat{f}(\hat{\mat{x}}_*)} \geq \frac{1}{3\lambda\sqrt{2n(n-1)}}.
	\]
	Hence, $\dfrac{f(\mat{x}_{r+1}) - f(\mat{x}_*)}{f(\mat{x}_r) - f(\mat{x}_*)}  \leq 1 - \dfrac{1}{3\lambda\sqrt{2n(n-1)}}$.
	\end{proof}

	\vspace{6pt}
	
	As set in the definition of $\hat{\mat{x}}_r$ and as argued in the proof of Lemma~\ref{lem:converge}, the contributions of $M$ to the coordinates of $\hat{\mat{x}}_r$, $\hat{\mat{y}}_r$, and $\hat{\mat{x}}_*$ become negligible as $M$ increases.  Therefore, whether there exists a fixed $\lambda$ in Lemma~\ref{lem:converge} such that $\norm{\hat{\mat{y}}_r  - \hat{\mat{x}}_r} \geq \frac{1}{\lambda}\norm{\hat{\mat{x}}_r - \hat{\mat{x}}_*}$ is unaffected by $M$ when $M$ is large enough.  Still, we have to leave the condition $\norm{\hat{\mat{y}}_r  - \hat{\mat{x}}_r} \geq \frac{1}{\lambda}\norm{\hat{\mat{x}}_r - \hat{\mat{x}}_*}$ as an assumption to obtain the convergence result.
	
	\vspace{6pt}
	
	\begin{theorem}
		Consider the application of \emph{SolveNNQ} on the \emph{DKSG} problem with $n$ points in $\real^d$.  The initialization of \emph{SolveNNQ} can be done in $O(dn^4) + T(n+\beta_0)$ time.    Suppose that the following assumptions hold.
		\begin{itemize}
			\item The threshold $\beta_1$ on the total number of iterations is not exceeded.
			\item For all $r \geq 1$, $\norm{\hat{\mat{x}}_{r} - \hat{\mat{y}}_{r}} \geq \frac{1}{\lambda}\norm{\hat{\mat{x}}_{r}-\hat{\mat{x}}_*}$, where $\hat{\mat{x}}_r$ is defined as in Lemma~\ref{lem:extend}, $\hat{\mat{n}}_r$ is a unit descent direction from $\hat{\mat{x}}_r$ that satisfies Lemma~\ref{lem:descent}, $\hat{\mat{y}}_r$ is the feasible point of the extended system that minimizes $\hat{f}$ in the direction $\hat{\mat{n}}_r$ from $\hat{\mat{x}}_{r}$, and $\hat{\mat{x}}_*$ is the optimal solution of the extended system.
		\end{itemize}
		Then, for all $r \geq 1$, $f(\mat{x}_{r+i}) - f(\mat{x}_*) \leq e^{-1} (f(\mat{x}_r) - f(\mat{x}_*))$ for some $i = O\bigl(\lambda n\bigr)$, and each iteration in \emph{SolveNNQ} takes $T(k + \beta_0)$ time, where $k = \max_{r \geq 1}|\supp(\mat{x}_r)|$.
	\end{theorem}
\begin{proof}
	By Lemma~\ref{lem:converge}, the gap $f(\mat{x}_r) - f(\mat{x}_*)$ decreases by a factor $e$ in $O(\lambda n)$ iterations.  Thus, 
	$f(\mat{x}_{r+i}) - f(\mat{x}_*) \leq e^{-1} (f(\mat{x}_r) - f(\mat{x}_*))$ for some $i = O\bigl(\lambda n\bigr)$.  It remains to analyze the running times.
	
We first compute $\mat{A}^t\mat{A}$.  Recall that $\mat{A}$ is $dn \times n(n-1)/2$, and $\mat{A}$ resembles the incidence matrix for the complete graph in the sense that every non-zero entry of the incidence matrix gives rise to $d$ non-zero entries in the same column in $\mat{A}$.   There are at most two non-zero entries in each column of the incidence matrix.  Therefore, computing $\mat{A}^t\mat{A}$ takes $O(dn^4)$ time.  We form an initial graph as follows.  Randomly include $\beta_0$ edges by sampling indices from $\bigl[n(n-1)/2\bigr]$.   Also, pick a node and connect it to all other $n-1$ nodes in order to guarantee that the initial graph admits a feasible solution for the DKSG problem.  There are at most $n+\beta_0-1$ edges.  The initial active set contains all indices in the range $\bigl[n(n-1)/2\bigr]$ except for the indices of the edges of the initial graph.  We extract in $O(n^2 + \beta_0^2)$ time the rows and columns of $\mat{A}^t\mat{A}$ that correspond to the edges of the initial graph.  Then, we call the solver in $T(n+\beta_0)$ time to obtain the initial solution $(\mat{u}_1,\mat{x}_1)$.  Therefore, the initialization time is $O(dn^4)+T(n+\beta_0)$.

At the beginning of every iteration, we compute $\nabla f(\mat{x}_r)$ and update the active set.  This can be done in $O(kn^2)$ time because $\mat{x}_r$ has at most $k$ non-zero entries.   Since the threshold $\beta_1$ is not exceeded, at most $k + \beta_0$ indices are absent from the updated active set.  We extract the corresponding rows and columns of $\mat{A}^t\mat{A}$ in $O(k^2 + \beta_0^2)$ time.  The subsequent call of the solver takes $T(k+\beta_0)$ time, which dominates the other processing steps that take $O(kn^2 + k^2 + \beta_0^2)$ time.  So each iteration runs in $T(k+\beta_0)$ time.
\end{proof}

	\cancel{
	We pad $\mat{x}_r$ with $n$ coordinates equal to 1 to interpret them as points in $\real^{n(n+3)/2}$.  Note that it remains as a feasible solution for the extended system and $f(\mat{x}_{r}) = \hat{f}(\mat{x}_{r})$.  We do the same for $\mat{x}_*$.  Let $\mat{q}$ be the point that minimizes $\hat{f}(\mat{q})$ in the direction $\hat{\mat{n}}_r$ from $\mat{x}_r$.  
	
	Let $\hat{\mat{z}}$ be the optimal solution of the extended system with respect to the active set $S_{r+1}$.  Let $\hat{\mat{x}}(t) = (1-t)\mat{q} + t \hat{\mat{z}}$.  Therefore, as $t$ increases from 0 to~1, the value of $\hat{f}(\hat{\mat{x}}(t))$ decreases strictly montonically.  Consider the direction from $\mat{x}_r$ towards $\hat{\mat{x}}_t$.  Let $\hat{\mat{y}}(t)$ be the point with the minimum $\hat{f}(\hat{\mat{y}}(t))$ in this direction from $\mat{x}_r$.  Then, $\hat{f}(\mat{q}) > \hat{f}(\hat{\mat{x}}(t)) \geq \hat{f}(\hat{\mat{y}}(t))$.  By continuity, there exists a small enough $t$ such that $\norm{\mat{q} - \hat{\mat{y}}(t)}$ is less than $\frac{1}{100} \norm{\mat{q} - \mat{x}_r}$ and $\hat{f}(\mat{q}) - \hat{f}(\hat{\mat{y}}(t)) \leq \frac{1}{100}(\hat{f}(\mat{x}_r) - \hat{f}(\mat{q}))$.  We denote this particular $\hat{\mat{y}}(t)$ by $\hat{\mat{y}}$.  As a result, $\frac{99}{100}\norm{\mat{q}-\mat{x}_r} \leq \norm{\hat{\mat{y}} - \mat{x}_r} \leq \frac{101}{100}\norm{\mat{q}-\mat{x}_r}$ and $\hat{f}(\mat{x}_r) - \hat{f}(\mat{q}) \leq \hat{f}(\mat{x}_r) - \hat{f}(\hat{\mat{y}}) \leq \frac{101}{100}(\hat{f}(\mat{x}_r) - \hat{f}(\mat{q}))$.
	
	By taking the optimal solution of the extended system and repeating the above analysis to it and $\mat{x}_*$, we obtain a point $\hat{\mat{y}}_*$ such that it minimizes $\hat{f}(\hat{\mat{y}}_*)$ in the direction from $\mat{x}_r$ towards $\hat{\mat{y}}_*$, $\frac{99}{100}\norm{\mat{x}_* -\mat{x}_r} \leq \norm{\hat{\mat{y}}_* - \mat{x}_r} \leq \frac{101}{100}\norm{\mat{x}_* -\mat{x}_r}$ and $\hat{f}(\mat{x}_r) - \hat{f}(\mat{x}_*) \leq \hat{f}(\mat{x}_r) - \hat{f}(\hat{\mat{y}}_*) \leq \frac{101}{100}(\hat{f}(\mat{x}_r) - \hat{f}(\mat{x}_*))$.
	
	Let $\hat{\mat{n}} = (\hat{\mat{y}}_* - \mat{x}_r)/\norm{\hat{\mat{y}}_* - \mat{x}_r}$.  Let $\hat{\mat{n}}_r = (\hat{\mat{y}} - \mat{x}_r)/\norm{\hat{\mat{y}} - \mat{x}_r}$.  Note that the directions $\hat{\mat{n}}$ and $\hat{\mat{n}}_r$ are not obstructed by $S_{r+1}$.  Then, we can reuse the proof of Lemma~\ref{lem:bound2} to show that $\hat{f}(\mat{x}_r) - \hat{f}(\hat{\mat{y}}_*) \leq -\norm{\mat{x}_r-\hat{\mat{y}}_*} \cdot \langle \nabla f(\mat{x}_r),\hat{\mat{n}} \rangle$, and $\hat{f}(\mat{x}_r) - f(\hat{\mat{y}}) \geq -\frac{1}{2}\norm{\mat{x}_r-\hat{\mat{y}}} \cdot \langle \nabla f(\mat{x}_r),\hat{\mat{n}} \rangle$.
}


\cancel{

\subsection{Summary}

Recall that $T(n)$ denotes the time complexity of solving a convex quadratic program with $n$ variables and $O(n)$ constraints. We have $T(n) = O(n^3 L)$ and $L$ is bounded by the number of bits in the input~\cite{monteiro89}. The following result summarizes our analysis for DKSG.

\begin{theorem}
	\label{thm:main2}
	Consider running \emph{SolveNNQ} on the \emph{DKSG} problem.   Let $k = \max_{r \geq 1} |\supp(\mat{x}_r)|$.  Let $\gamma$ be $\Theta(n)$ for \emph{DKSG}.  The initialization times for \emph{DKSG} is $O(dn^4)+T(n+\beta_0)$.  Suppose that the following assumptions hold.
	\begin{itemize}
		\item Assume that the threshold $\beta_1$ on the total number of iterations is not exceeded.
		\item For all $r \geq 1$, let $\mat{n}_r$ be a unit descent direction from $\mat{x}_r$ that satisfy Corollary~\ref{co:dksg-ana-1}, assume that $\norm{\mat{x}_{r} - \mat{y}_{r}} \geq \frac{1}{\lambda}\norm{\mat{x}_{r}-\mat{x}_*}$, where $\lambda$ is some fixed value and $\mat{y}_r$ is the feasible point that minimizes $f$ on the ray from $\mat{x}_{r}$ in the direction $\mat{n}_r$.
	\end{itemize}
	Then, for all $r \geq 1$, 
	\[
	f(\mat{x}_{r+O(\lambda\gamma)}) - f(\mat{x}_*) \leq e^{-1} (f(\mat{x}_r) - f(\mat{x}_*)).
	\]
	The running time of each iteration in \emph{SolveNNQ} is $T(k+\beta_0)$ for \emph{DKSG}.
\end{theorem}
\begin{proof}
	By \eqref{eq:dksg-ana-5}, the gap $f(\mat{x}_r) - f(\mat{x}_*)$ decreases by a factor $e$ in $O(\lambda\gamma)$ iterations.  It remains to analyze the running times.

	We first compute $\mat{A}^t\mat{A}$.  Recall that $\mat{A}$ is $dn \times n(n-1)/2$, and $\mat{A}$ resembles the incidence matrix for the complete graph in the sense that every non-zero entry of the incidence matrix gives rise to $d$ non-zero entries in the same column in $\mat{A}$.   There are at most two non-zero entries in each column of the incidence matrix.  Therefore, computing $\mat{A}^t\mat{A}$ takes $O(dn^4)$ time.  We form an initial graph as follows.  Randomly include $\beta_0$ edges by sampling indices from $\bigl[n(n-1)/2\bigr]$.   Also, pick a node and connect it to all other $n-1$ nodes in order to guarantee that the initial graph admits a feasible solution for the DKSG problem.  There are at most $n+\beta_0-1$ edges.  The initial active set contains all indices in the range $\bigl[n(n-1)/2\bigr]$ except for the indices of the edges of the initial graph.  We extract in $O(n^2 + \beta_0^2)$ time the rows and columns of $\mat{A}^t\mat{A}$ that correspond to the edges of the initial graph.  Then, we call the solver in $T(n+\beta_0)$ time to obtain the initial solution $(\mat{u}_1,\mat{x}_1)$.  Therefore, the initialization time is $O(dn^4)+T(n+\beta_0)$.
	
	At the beginning of every iteration, we compute $\nabla f(\mat{x}_r)$ and update the active set.  This can be done in $O(kn^2)$ time because $\mat{x}_r$ has at most $k$ non-zero entries.   Since the threshold $\beta_1$ is not exceeded, at most $k + \beta_0$ indices are absent from the updated active set.  We extract the corresponding rows and columns of $\mat{A}^t\mat{A}$ in $O(k^2 + \beta_0^2)$ time.  The subsequent call of the solver takes $T(k+\beta_0)$ time.    So each iteration runs in $T(k+\beta_0)$ time, as $k\ge n-1$ for DKSG.  
\end{proof}

}

\section{Conclusion}

We presented a method that solves non-negative convex quadratic programming problems by calling a numerical solver iteratively.  This method is very efficient when the intermediate and final solutions have a lot of zero coordinates.  We experimented with two proximity graph problems and the non-negative least square problem in image deblurring.  The method is often several times faster than a single call of {\tt quadprog} of MATLAB.  From our computational experience, we obtain a speedup when the percentage of positive coordinates in the final solution drops below 50\%, and the speedup increases significantly as this percentage decreases.  We proved that the iterative method converges efficiently under the assumption.  We checked the assumption in our experiments on proximity graph and confirmed that the assumption hold.  A possible research problem is proving a better convergence result under some weaker assumptions.  The runtime of our method is sensitive to solution sparsity. As far as we know, we are the first to introduce a parameter $\tau$ explicitly to control the size of the intermediate problems. Another possible line of research is to explore the choice of $\tau$ given some prior knowledge of the solution sparsity.

\bibliography{graphref}

\end{document}